\title{%
Numerical Invariants through Convex Relaxation and Max-Strategy Iteration%
\thanks{This work was partially funded by the ANR project ASOPT.}
\ifdraft{\subtitle{Folder: gawlitza/tex/sas2010/journal \\ Date: \today}}{}
}
\author{
  Thomas Martin Gawlitza\inst{1} 
  \and 
  Helmut Seidl\inst{2}}
\institute
{
  VERIMAG, Grenoble, France
  \email{Thomas.Gawlitza@imag.fr}
  \thanks{VERIMAG is a joint laboratory of CNRS, Université Joseph Fourier and Grenoble INP.}
  and
  The University of Sydney, Australia
  \email{gawlitza@it.usyd.edu.au}
  \and
    Technische Universität M\"unchen,
    Institut f\"ur Informatik,
    M\"unchen, Germany
  \email{seidl@in.tum.de}
}
\date{\today}
\newcommand{\citets}[1]{\citet*{#1}}
\newcommand{\perfect}[1]{{\color{grey}#1}}
\newcommand{\zustand}[2]{\ifthenelse{\equal{p}{#1}}{\perfect{#2}}{#2}}
\newcommand{\np}{\ifdraft{\cleardoublepage}{}}
\newcommand\mymarginpar[1]{%
  \marginnote{\small\bf\sf #1}
}
\newcommand{\per}{\ifdraft{\mymarginpar{P}}{}}
\newcommand{\ok}{\ifdraft{\mymarginpar{O}}{}}
\newcommand{\D}{\mathbb{D}}
\newcommand{\morcave}{morcave\xspace}
\newcommand{\maxmorcave}{$\vee$-morcave\xspace}
\newcommand{\Morcave}{Morcave\xspace}
\newcommand{\cmorcave}{cmorcave\xspace}
\newcommand{\maxcmorcave}{$\vee$-cmorcave\xspace}
\newcommand{\Cmorcave}{Cmorcave\xspace}
\newcommand{\mcave}{mcave\xspace}
\newcommand{\Mcave}{Mcave\xspace}
\newcommand{\cmcave}{cmcave\xspace}
\newcommand{\maxcmcave}{$\vee$-cmcave\xspace}
\newcommand{\Cmcave}{Cmcave\xspace}
\newcommand{\pto}{\rightsquigarrow}
\newcommand{\EvalForMaxAtt}{{\sf EvalForMaxAtt}\xspace}
\newcommand{\EvalForGen}{{\sf EvalForGen}\xspace}
\newcommand{\EvalForCmorcave}{{\sf EvalForCmorcave}\xspace}
\DeclareMathOperator{\dom}{\mathsf{dom}}
\DeclareMathOperator{\fdom}{\mathsf{fdom}}
\DeclareMathOperator{\codom}{\mathsf{codom}}
\DeclareMathOperator{\dcup}{{\dot\cup}}
\newcommand{\suppresol}{\mathsf{suppresol}}
\newcommand{\SDP}{\mathbf{SDP}}
\newcommand{\SDPAaBC}{\SDP_{\A, a, \B, C}}
\newcommand{\LP}{\mathbf{LP}}
\newcommand{\LPAc}{\LP_{A, c}}
\renewcommand{\SR}[1]{S\R^{#1\times #1}}
\newcommand{\SRp}[1]{S\R^{#1\times #1}_+}
\newcommand{\A}{\mathcal{A}}
\newcommand{\Relaxed}{\mathcal{R}}
\newcommand{\Tr}{\mathrm{Tr}}
\begin{document}
\allowdisplaybreaks

\ifdraft{\input{todo}}{}

\maketitle

\allowdisplaybreaks

\begin{abstract}
  \per
  In this article we develop a max-strategy improvement algorithm 
  for computing least fixpoints of 
  operators on $\CR^n$ (with $\CR := \R \cup \{ \pm \infty \}$) that
  are point-wise maxima of finitely many monotone and order-concave operators.
  Computing the uniquely determined least fixpoint of 
  such operators is a problem that occurs 
  frequently in the context of 
  numerical program/systems verification/analysis.
  As an example for an application 
  we discuss how our algorithm can
  be applied to compute numerical invariants of programs
  by abstract interpretation based on quadratic templates.
\end{abstract}

\np\section{Introduction}

\subsection{Motivation}

Finding tight invariants for a given program or system 
is crucial for many applications related to program respectively system verification.
Examples include linear recursive filters and numerical integration schemes.
Abstract Interpretation as introduced by 
\citet{DBLP:conf/popl/CousotC77} 
reduces the problem of finding 
tight invariants 
to the problem of finding the uniquely determined least fixpoint of
a monotone operator. 
%
In this article,
we consider the problem of inferring numerical invariants using abstract domains that are based on \emph{templates}.
That is, 
in addition to the program or system we want to analyze, 
a set of templates is given.
These templates are arithmetic expressions in the program/system variables.
The goal then is to compute small safe upper bounds on these templates.
We may, for instance, be interested in computing a safe upper bound on 
the difference $\vx_1 - \vx_2$ of two program/system variables $\vx_1$, $\vx_2$ (at some specified control point of the program).
Examples for template-based numerical invariants include 
\emph{intervals} (upper and lower bounds on the values of the numerical program variables) 
\cite{CouCou76},
\emph{zones} (intervals and additionally upper and lower bounds on the differences of program variables)
\cite{DBLP:conf/rtss/LarsenLPY97,DBLP:conf/eef/Yovine96,DBLP:conf/pado/Mine01}, 
\emph{octagons} 
(zones and additionally upper und lower bounds on the sum of program variables)
\cite{DBLP:conf/wcre/Mine01}, and, more generally,
\emph{linear templates} (also called \emph{template polyhedra}, upper bounds on arbitrary linear functions in the program variables,
where the functions a given a priori)
\cite{DBLP:conf/vmcai/SankaranarayananSM05}.
In this article,
we focus on \emph{quadratic} templates as considered by
\citets{DBLP:conf/esop/AdjeGG10}. 
That is,
a priori, a set of linear and quadratic functions in the program variables (the templates) is given
and we are interested in computing small upper bounds on the values of these functions.
An example for a quadratic template is represented by the quadratic polynomial 
$2 x_1^2 + 3 x_2^2 + 2 x_1 x_2$,
where $x_1$ and $x_2$ are program variables.

When using such a template-based numerical abstract domain,
the problem of finding the minimal inductive invariant, 
that can be expressed in the abstract domain specified by the templates, 
can be recast
as a purely mathematical optimization problem,
where the goal is to minimize a vector 
$(\vx_1,\ldots,\vx_n)$ subject to 
a set of inequalities of the form 
\begin{align}
  \label{eq:intro:motivation}
  \vx_i \geq f(\vx_1,\ldots,\vx_n)
  .
\end{align}

\noindent
Here, $f$ is a \emph{monotone} operator.
The variables $\vx_1,\ldots,\vx_n$ take values in $\R \cup \{\pm\infty\}$.
The variables are representing upper bounds on the values of the templates.
Accordingly, the vector $(\vx_1,\ldots,\vx_n)$ is to be minimized w.r.t.\ the usual component-wise ordering.
Because of the monotonicity of the operators $f$ occurring in the right-hand sides of the inequalities
and the completeness of the linearly ordered set $\R \cup \{\pm\infty\}$,
the fixpoint theorem of Knaster/Tarski ensures the
existence of a uniquely determined least solution.

Computing the least solution of such a constraint system is a difficult task.
Even if we restrict our consideration to the special case of intervals as an abstract domain,
which is, if the program variables are denoted by $x_1,\ldots,x_n$, 
specified by the templates $-x_1,x_1,\ldots,-x_n,x_n$,
the static analysis problem is at least as hard as solving mean payoff games.
The latter problem is a long outstanding problem which is in $\mathsf{NP}$ and in $\mathsf{coNP}$,
but not known to be in $\mathsf{P}$.

A generic way of solving systems of constraints of the form \eqref{eq:intro:motivation}
with right-hand sides that are monotone and variables that range over a complete lattice
is given through the abstract interpretation framework of \citet{DBLP:conf/popl/CousotC77}.
Solving constraint systems in this framework is based on Kleene fixpoint iteration.
However, in our case the lattice has infinite ascending chains.
In this case, termination of the fixpoint iteration
is ensured through an appropriate widening (see \citet{DBLP:conf/popl/CousotC77}).
Widening, however, buys termination for precision. 
Although the lost precision can be partially recovered through a subsequently performed narrowing iteration,
there is no guarantee that the computed result is minimal.

\subsection{Main Contribution}

In this article,
we study the case where the operators
$f$ in the right-hand side of the systems of equations of the form \eqref{eq:intro:motivation}
are not only monotone,
but additionally \emph{order-concave} or even \emph{concave} 
(concavity implies order-concavity, but not vice-versa).
In the static program analysis application we consider in this article,
the end up in this comfortable situation by considering a semi-definite relaxation 
of the abstract semantics.
The concavity of the mappings $f$, however, does not imply 
that the problem can be formulated as a convex optimization problem.
The feasible space of the resulting mathematical optimization problem is normally 
neither order-convex nor order-concave and thus neither convex nor concave.
In consequence, 
convex optimization methods cannot be directly applied.
For the linear case (obtained when using used linear templates),
we solved a long outstanding problem 
---
namely the problem of solving mean payoff games in polynomial time 
--- 
if we would be able to formalize the problem 
through a linear programming problem that can be constructed in polynomial time.

In this article, 
we exploit the fact that the operators $f$ that occur in the right-hand sides of 
the system of inequalities of the form 
\eqref{eq:intro:motivation} we have to solve
are not
only order-concave, but also \emph{monotone}.
In other words: we do not require convexity of the feasible space, 
but we do require monotonicity in addition to the order-concavity.
The main contribution of this article is 
an algorithm for computing least solutions of such systems of inequalities.
The algorithm is based on strategy iteration.
That is,
we consider the process of solving the system of inequality as a game between a maximizer and a minimizer.
The maximizer aims at minimizing the solution, whereas the minimizer aims at minimizing it.
The algorithm iteratively constructs a winning strategy for the maximizer --- a so-called max-strategy.
It uses convex optimization techniques as sub-routines to evaluate parts of the constructed max-strategy.
The concrete convex optimization technique that is used for the evaluation depends on the right-hand sides.
In some cases linear programming is sufficient 
(see \citet{DBLP:conf/csl/GawlitzaS07,DBLP:journals/toplas/GawlitzaS11}), 
In other cases more sophisticated convex optimization techniques are required.
The application we study in this article will require semi-definite programming.
%
%

\per
An important example for monotone and order-concave operators 
are the operators 
that are \emph{monotone} and \emph{affine}.
The class of monotone and order-concave operators is closed under the point-wise infimum operator.
The point-wise infimum of a set of monotone and affine functions, for instance,
is monotone and order-concave.
Another example is the $\sqrt{\phantom{x}}$-operator, 
which is defined by $\sqrt{x} = \sup \; \{ y \in \R \mid y^2 \leq x \}$ for all $x \in \CR$.

\per
An example for a system of inequalities of the class we are considering in this article is the following system of inequalities:
\begin{align}
  \label{eq:intro:ex}
  \vx_1 &\geq \frac12 &
  \vx_1 &\geq \sqrt \vx_2 &
  \vx_2 &\geq \vx_1  &
  \vx_2 &\geq 1 +  \sqrt { \vx_2 - 1 }
\end{align}

\noindent
The uniquely determined least solution of the system \eqref{eq:intro:ex} of inequalities is 
$\vx_1 = \vx_2 = 1$.
We remind the reader again that the important property here is that the right-hand sides of \eqref{eq:intro:ex} are monotone and order-concave.

\per
The least solution of the system \eqref{eq:intro:ex} is also the uniquely determined 
optimal solution of the following convex optimization problem:
\begin{align}
  \max \;\;& \vx_1 + \vx_2 & \text{subject to} &&
  \vx_1 &\leq \sqrt \vx_2 &
  \vx_2 &\leq \vx_1 
\end{align}

\per
\noindent
Observe that the above convex optimization problem is in some sense 
a ``subsystem'' of the 
system \eqref{eq:intro:ex}.
Such a ``subsystem'',
which we will call a max-strategy later on,
is obtained from the 
system \eqref{eq:intro:ex}
by selecting exactly one inequality of the form $\vx_i \geq e_i$ from 
\eqref{eq:intro:ex}
for each variable $\vx_i$ and replacing the relation $\leq$ by the relation $\geq$.
Note that there are exponentially many max-strategies.
The algorithm we present in this article
starts with a max-strategy and assigns a value to it.
It then iteratively improves the current max-strategy 
and assigns a new value to it until the least solution is found.
We utilize the monotonicity and the order-concavity of the right-hand sides to prove that 
our algorithm always terminates with the least solution 
after at most exponentially many improvement steps.
Each improvement step can be executed by solving linearly many convex optimization problems,
each of which can be constructed in linear time.

\per
As a second contribution of this article,
we show how any algorithm for solving such systems of inequalities,
e.g., our max-strategy improvement algorithm, 
can be applied to infer numerical invariants based on quadratic templates.
The method is based on the relaxed abstract semantics introduced by 
\citets{DBLP:conf/esop/AdjeGG10}.

\subsection{Related Work}


\per
\noindent
The most closely related work is  the work of 
\citets{DBLP:conf/esop/AdjeGG10}.
They apply the \emph{min-strategy improvement approach} of \citets{Costan05} 
to the problem of inferring quadratic invariants of programs.
In order to do so,
they introduced the relaxed abstract semantics we are going to use in this article.\footnote{
  \citets{DBLP:conf/esop/AdjeGG10} in fact use the dual version of the relaxed abstract semantics we use in this article. 
  However, this minor difference does not have any practical consequences. 
}
Their method, however,
has several drawbacks compared to the method we present in this article.
The first drawback is that it does not necessarily terminate after finitely many steps.
In addition,
even if it terminates,
the computed solution is not guaranteed to be minimal.
On the other hand, 
their approach also has substantial advantages that are especially important in practice.
Firstly, it can be stopped at any time with a safe over-approximation to the least solution.
Secondly, the computational steps that have to be performed are quite cheap compared the the ones 
we have to perform for the method we propose in this article.
This is caused by the fact that the semi-definite programming problems 
(or in more general cases: convex programming problems)
that have to be solved in each iteration are reasonable small.
We refer to \citets{wingjsc2010}
for a detailed comparison between the max- and the min-strategy approach.


\subsection{Previous Publications}

\per
\noindent
Parts of this work were previously published in the proceedings 
of the Seventeenth International Static Analysis Symposium (SAS 2010).
In contrast to the latter version,
this article contains the full proofs and the precise treatment of infinities.
In order to simplify some argumentations and to deal with infinities,
we modified some definitions quite substentially.
In addition to these improvements, 
we provide a much more detailed study of different classes of order-concave functions 
and the consequences for our max-strategy improvement algorithm.
We do not report on experimental results in this article.
Such reports can be found in the article in the proceedings of 
the Seventeenth International Static Analysis Symposium (SAS 2010).

%
%
%

\subsection{Structure}

\per
\noindent
This article is structured as follows:
Section \ref{s:basics} is dedicated to preliminaries.
We study the class of \emph{monotone} and \emph{order-concave} operators
in Section \ref{s:morcave}.
The results we obtain in Section \ref{s:morcave}
are important to prove the correctness of our max-strategy improvement algorithm.
The method and its correctness proof is presented in Section \ref{s:strat:imp}.
In Section \ref{s:param:opt:rhs}, 
we discuss the important special cases where the right-hand sides 
of the system of inequalities are parametrized convex optimization problems.
This can be used to evaluate strategies more efficiently.
These special cases are important, since they are present especially in the program analysis applications we mainly have in mind.
In Section \ref{s:relaxed:sem},
we finally explain how our methods can be applied to 
a numerical static program analysis based on quadratic templates.
We conclude with Section \ref{s:conc}.

\np\section{Preliminaries}
\label{s:basics}

\per
\paragraph{Vectors and Matrices}
We denote the $i$-th row (resp.\ $j$-th column) 
of a matrix $A$ by $A_{i\cdot}$ (resp.\ $A_{\cdot j}$).
Accordingly, 
$A_{i \cdot j}$ denotes the component in the
$i$-th row and the $j$-th column.
We also use these notations for vectors and vector valued functions $f : X\to Y^k$,
i.e.,
$f_{i\cdot} (x) = (f(x))_{i\cdot}$ for all $i \in \{1,\ldots,k\}$ and all $x \in X$.

\per
\paragraph{Sets, Functions, and Partial Functions}

We write $A \dcup B$ for the disjoint union of the two sets $A$ and $B$,
i.e.,
$A \dcup B$ stands for $A \cup B$, where we assume that $A \cap B = \emptyset$.
For sets $X$ and $Y$,
$X \to Y$ denotes the set of all functions from $X$ to $Y$, and
$X \pto Y$ denotes the set of all partial functions from $X$ to $Y$.
Note that $X \to Y \subseteq X \pto Y \subseteq X \times Y$.
Accordingly, we apply the set operators $\cup$, $\cap$, and $\setminus$ also to partial functions.
For $X' \subseteq X$,
the restriction $f|_{X'} : X' \to Y$ of a function $f : X \to Y$ to $X'$
is defined by 
$f|_{X'} := f \cap X' \times Y$.
The domain and the codomain of a partial function $f$ are denoted by
$\dom(f)$ and $\codom(f)$, respectively.
For $f : X \to Y$ and $g : X \pto Y$, 
we define $f \oplus g : X \to Y$ by
$
  f \oplus g
  :=
  f|_{X\setminus\dom(g)} \cup g
$.

\per
\paragraph{Partially Ordered Sets}

Let $\D$ be a partially ordered set (partially ordered by the binary relation $\leq$).
Two elements $x, y \in \D$ are called \emph{comparable}
if and only if $x \leq y$ or $y \leq x$.
For all $x \in \D$, 
we set $\D_{\geq x} := \{ y \in \D \mid y \geq x \}$ and $\D_{\leq x} := \{ y \in \D \mid y \leq x \}$.
We denote the \emph{least upper bound} and 
the \emph{greatest lower bound} of 
a set $X \subseteq \D$ by $\bigvee X$ and $\bigwedge X$, 
respectively, provided that it exists.
The least element $\bigvee \emptyset = \bigwedge \D$
(resp.\ the greatest element $\bigwedge \emptyset = \bigvee \D$)
is denoted by $\bot$ (resp.\ $\top$), provided that it exists.
%
%
A subset $C \subseteq \D$ is called a \emph{chain}
if and only if $C$ is linearly ordered by $\leq$,
i.e., it holds $x \leq y$ or $y \leq x$ for all $x,y \in C$.
For every subset $X \subseteq \D$ of a set $\D$ that is partially ordered by $\leq$,
we set $X{\up^\D} := \{ y \in \D \mid \exists x \in X \,.\, x \leq y \}$.
The set $X \subseteq \D$ is called \emph{upward closed w.r.t.\ $\D$}
if and only if $X{\up^\D} = X$.
We omit the reference to $\D$, if $\D$ is clear from the context.


\per
\paragraph{Monotonicity}
Let $\D_1, \D_2$ be partially ordered sets (partially ordered by $\leq$).
A mapping $f : \D_1 \to \D_2$ is called 
\emph{monotone}
if and only if
$f(x) \leq f(y)$ for all $x,y \in \D_1$ with $x \leq y$.
A monotone function $f$ is called 
    \emph{upward-chain-continuous}
    (resp.\ \emph{downward-chain-continuous})
    if and only if 
    $f(\bigvee C) = \bigvee f(C)$
    (resp.\ $f(\bigwedge C) = \bigwedge f(C)$)
    for every non-empty chain $C$ with 
    $\bigvee C \in \dom(f)$
    (resp.\ $\bigwedge C \in \dom(f)$).
    It is  called \emph{chain-continuous}
    if and only if it is \emph{upward-chain-continuous} and 
    \emph{downward-chain-continuous}.

\per
\paragraph{Complete Lattices}
A partially ordered set $\D$
is called a \emph{complete lattice}
if and only if 
$\bigvee X$ and $\bigwedge X$ exist for all $X \subseteq \D$.
If $\D$ is a complete lattice and $x \in \D$,
then the sublattices 
$\D_{\geq x}$ and $\D_{\leq x}$ are also complete lattices.
%
%
On a complete lattice $\D$, 
we define the binary operators $\vee$ and $\wedge$ by 
\begin{align}
  x \vee y := \bigvee \{ x, y \}
  \text{ and }
  x \wedge y := \bigwedge \{ x, y \}
  &&
  \text{for all } x, y \in \D,
\end{align}

\per
\noindent
respectively.
If the complete lattice $\D$ is a 
\emph{complete linearly ordered set} 
(for instance $\CR = \R \cup \{ \pm\infty \}$), 
then $\vee$ is the binary \emph{maximum} operator 
and $\wedge$ the binary \emph{minimum} operator.
For all binary operators $\Box \in \{ \vee, \wedge \}$,
we also consider $x_1 \;\Box\; \cdots \;\Box\; x_k$
as the application of a $k$-ary operator.
This will cause no problems,
since the binary operators $\vee$ and $\wedge$ 
are associative and commutative.


\per
\paragraph{Fixpoints}

Assume that the set $\D$ is partially ordered by $\leq$ and $f: \D\to\D$ is a unary operator on $\D$.
An element $x \in \D$ is called 
\emph{fixpoint} 
(resp.\ \emph{pre-fixpoint}, resp.\ \emph{post-fixpoint})
of $f$
if and only if
$x = f(x)$ (resp.\ $x \leq f(x)$, resp.\ $x \geq f(x)$).
The set of all fixpoints (resp.\ pre-fixpoints, resp.\ post-fixpoints) of $f$ 
is denoted by $\Fix(f)$ (resp.\ $\PreFix(f)$, resp.\ $\PostFix(f)$).
We denote the least (resp.\ greatest) fixpoint of $f$ --- provided that it exists --- by $\mu f$ (resp.\ $\nu f$).
If the partially ordered set $\D$ is a complete lattice and $f$ is monotone,
then the fixpoint theorem of Knaster/Tarski \cite{Tarski55} ensures the existence of $\mu f$ and $\nu f$.
Moreover, 
we have
$\mu f = \bigwedge \PostFix(f)$
and dually
$\nu f = \bigvee \PreFix(f)$.

\per
We write $\mu_{\geq x} f$ (resp.\ $\nu_{\leq x} f$) 
for the least element in the set 
$\Fix(f) \cap \D_{\geq x}$
(resp.\ $\Fix(f) \cap \D_{\leq x}$).
The existence of $\mu_{\geq x} f$ (resp.\ $\nu_{\leq x} f$) 
is ensured if
$\D_{\geq x}$ is a complete lattice and
$f|_{\D_{\geq x}}$ (resp.\ $f|_{\D_{\leq x}}$) is a monotone operator on 
$\D_{\geq x}$ (resp.\ $\D_{\leq x}$),
i.e., if $\D_{\geq x}$ (resp.\ $\D_{\leq x}$) is closed under the operator $f$.
The latter condition is, for instance, fulfilled if
$\D$ is a complete lattice, $f$ is a monotone operator on $\D$, 
and $x$ is a pre-fixpoint (resp.\ post-fixpoint) of $f$.

\per
\paragraph{The Complete Lattice $\CR^n$}
The set of real numbers 
is denoted by $\R$, 
and the complete linearly ordered set $\R \cup \{ \pm\infty \}$
is denoted by $\CR$.
%
Therefore, the set $\CR^n$ is a complete lattice 
that is partially ordered by $\leq$,
where we write $x \leq y$
if and only if $x_{i\cdot} \leq y_{i\cdot}$ for all $i \in \{1,\ldots,n\}$.
As usual,
we write $x < y$ if and only if $x \leq y$ and $x \neq y$.
We write $x \ll y$  if and only if $x_{i\cdot} < y_{i\cdot}$ for all $i \in \{1,\ldots,n\}$.
For $f : \CR^n \pto \CR^m$, 
we set
$
  \fdom(f) := \{ x \in \dom(f) \cap \R^n \mid f(x) \in \R^m \}
  .
$

\per
\paragraph{The Vector Space $\R^n$}
The standard base vectors of the Euclidian vector space $\R^n$ are denoted by $e_1,\ldots,e_n$.
We denote the maximum norm on $\R^n$ by $\norm{\cdot}$,
i.e., 
$\norm x = \max \; \{ \abs{x_{i\cdot}} \mid i \in \{ 1,\ldots,n \} \}$
for all 
$x \in \R^n$.
A vector $x \in \R^n$ with $\norm x = 1$ is called a \emph{unit vector}.

\np\section{\Morcave Operators}
\label{s:morcave}

\per
\noindent
In this section, 
we introduce a notion of order-concavity for functions from the set $\CR^n \to \CR^m$.
We then study the properties of functions that are monotone and order-concave.
The results obtained in this section are used in Section \ref{s:strat:imp} to prove the correctness of our 
max-strategy improvement algorithm.

\subsection{Monotone Operators on $\R^n$}

\per
\noindent
In this subsection, we collect  important properties about monotone operators on $\R^n$.
We start with the following auxiliary lemma:

\begin{lemma}
  \label{l:lambda:d}
  \per
  Let $d, d' \in \R^n$ with $d \gg 0$ and $d' \geq 0$.
  There exist 
  $j \in \{1,\ldots,n\}$ 
  and
  $\lambda, \lambda_1,\ldots,\lambda_n \geq 0$
  such that 
  $\lambda_j = 0$
  and
  $\lambda d = d' + \sum_{i = 1}^n \lambda_i e_i$.
\end{lemma}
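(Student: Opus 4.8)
The plan is to exhibit the scalar $\lambda$ explicitly as the largest among the ratios $d'_i/d_i$. First I would observe that, since $d \gg 0$, every component $d_i$ is strictly positive, so the quotients $d'_i/d_i$ are well-defined; and since $d' \geq 0$, each such quotient is nonnegative. Hence setting $\lambda := \max\{ d'_i/d_i \mid i \in \{1,\ldots,n\} \}$ yields a value $\lambda \geq 0$, and there is some index $j \in \{1,\ldots,n\}$ with $\lambda = d'_j/d_j$, i.e.\ $\lambda d_j = d'_j$.

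Next I would define $\lambda_i := \lambda d_i - d'_i$ for each $i \in \{1,\ldots,n\}$. By the choice of $\lambda$ we have $\lambda \geq d'_i/d_i$, and multiplying by the positive number $d_i$ gives $\lambda d_i \geq d'_i$; hence $\lambda_i \geq 0$ for all $i$. Moreover $\lambda_j = \lambda d_j - d'_j = 0$ by the choice of $j$. Finally, the identity $\lambda d = d' + \sum_{i=1}^n \lambda_i e_i$ holds componentwise by construction, since the $i$-th component of the right-hand side is $d'_i + \lambda_i = d'_i + (\lambda d_i - d'_i) = \lambda d_i$, which is exactly the $i$-th component of $\lambda d$. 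This establishes all the required conclusions.

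There is no real obstacle in this argument: the only points requiring attention are that the denominators do not vanish (guaranteed by $d \gg 0$), that $\lambda$ comes out nonnegative (guaranteed by $d' \geq 0$), and that the single coefficient $\lambda_j$ vanishes — all of which are secured simultaneously by taking $\lambda$ to be the maximal ratio and $j$ to be an index attaining that maximum.
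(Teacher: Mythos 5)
Your proof is correct and follows essentially the same route as the paper's: the paper asserts the existence of $\lambda$ and $j$ with $\lambda d - d' \geq 0$ and $(\lambda d - d')_{j\cdot} = 0$, which is exactly what your explicit choice $\lambda = \max_i d'_{i\cdot}/d_{i\cdot}$ realizes. Your version simply spells out the witness that the paper leaves implicit.
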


\begin{proof}
  \per
  Since $d \gg 0$,
  there exist a $j \in \{1,\ldots,n\}$
  and a $\lambda \geq 0$ 
  such that
  $\lambda d - d' \geq 0$
  and $(\lambda d - d')_{j \cdot} = 0$.
  Thus, there exist 
  $\lambda_1,\ldots,\lambda_n$ with $\lambda_j = 0$
  such that
  $\lambda d - d' = \sum_{i = 1}^n \lambda_i e_i$.
  \qed
\end{proof}

\per
\noindent
We now provide a sufficient criterium for 
a fixpoint $x$ of a monotone partial operator $f$
on $\R^n$ 
for being the greatest pre-fixpoint of $f$.%
\footnote{%
\per%
Note that, 
since $\R^n$ is not a complete lattice,
the greatest pre-fixpoint of $f$ is not necessarily the 
greatest fixpoint of $f$.
The greatest fixpoint of the monotone operators 
$f_1, f_2$ defined by 
$f_1(x) = \frac12x$ and $f_2(x) = 2 x$ for all $x \in \R$,
for instance, is $0$.
This is also the greatest pre-fixpoint of $f_1$, 
but not the greatest pre-fixpoint of $f_2$,
since $f_2$ has no greatest pre-fixpoint.}
Such sufficient criteria are crucial to prove the correctness of our max-strategy improvement algorithm.

\begin{lemma}
\per
  \label{l:pos:richtung:gen}
  Let $f : \R^n \pto \R^n$ be monotone with $\dom(f)$ upward closed,
  $f(x) = x$, and $k \in \N$.
  Assume that, 
  for every $\epsilon > 0$, 
  there exists a unit vector $d_\epsilon \gg 0$ such that
  $f^k(x + \lambda d_\epsilon) \ll x + \lambda d_\epsilon$ 
  for all $\lambda \geq \epsilon$.
  Then, $y \leq x$ for all $y$ with $y \leq f(y)$,
  i.e., $x$ is the greatest pre-fixpoint of $f$.
\end{lemma}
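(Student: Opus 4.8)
The statement asks us to show that a fixpoint $x$ of a monotone partial operator $f$ on $\R^n$ is in fact the greatest pre-fixpoint, under a hypothesis that says: in every direction we can ``push out'' from $x$ (along a strictly positive unit vector $d_\epsilon$), the $k$-fold iterate $f^k$ strictly decreases. The natural strategy is a proof by contradiction. Assume there is a pre-fixpoint $y$ with $y \leq f(y)$ but $y \not\leq x$. I would first reduce to the case where $y$ and $x$ are related in a useful geometric way: since $y \not\leq x$, the vector $y - x$ has a positive component somewhere, but need not be $\geq 0$. The trick is to consider $z := x \vee y = x + d'$ where $d' := (y-x) \vee 0 \geq 0$, $d' \neq 0$. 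Because $f$ is monotone and $\dom(f)$ is upward closed, $z \in \dom(f)$ and $f(z) \geq f(y) \geq y$ as well as $f(z) \geq f(x) = x$, hence $f(z) \geq z$; so without loss of generality I may replace $y$ by $z$ and assume $y = x + d'$ with $d' \geq 0$, $d' \neq 0$, and $y \leq f(y)$. Then also $y \leq f(y) \leq f^2(y) \leq \cdots \leq f^k(y)$, so $f^k(y) \geq x + d'$.

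Next I would invoke the hypothesis together with Lemma~\ref{l:lambda:d}. Fix any $\epsilon > 0$ and obtain the unit vector $d_\epsilon \gg 0$ with $f^k(x + \lambda d_\epsilon) \ll x + \lambda d_\epsilon$ for all $\lambda \geq \epsilon$. Apply Lemma~\ref{l:lambda:d} to the pair $(d_\epsilon, d')$ (note $d_\epsilon \gg 0$, $d' \geq 0$): there are $j \in \{1,\ldots,n\}$ and $\lambda, \lambda_1,\ldots,\lambda_n \geq 0$ with $\lambda_j = 0$ and $\lambda d_\epsilon = d' + \sum_{i=1}^n \lambda_i e_i$. In particular $\lambda d_\epsilon \geq d'$, so $x + \lambda d_\epsilon \geq x + d' = y$. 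By monotonicity of $f^k$ (composition of monotone maps on the upward-closed domain), $f^k(x + \lambda d_\epsilon) \geq f^k(y) \geq y = x + d'$. Looking at the $j$-th coordinate and using $\lambda_j = 0$, we get $(\lambda d_\epsilon)_{j\cdot} = d'_{j\cdot}$, hence
\[
  \bigl(f^k(x + \lambda d_\epsilon)\bigr)_{j\cdot} \;\geq\; (x + d')_{j\cdot} \;=\; (x + \lambda d_\epsilon)_{j\cdot}.
\]
This contradicts $f^k(x + \lambda d_\epsilon) \ll x + \lambda d_\epsilon$ — provided that $\lambda \geq \epsilon$, which is the one thing that is not yet guaranteed. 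So the remaining task is to verify $\lambda > 0$ (then one can rerun the argument with $\epsilon := \lambda$, or simply choose $\epsilon$ small enough a priori). Since $d' \neq 0$, some coordinate $d'_{i\cdot} > 0$, and from $\lambda d_\epsilon \geq d'$ with $d_\epsilon$ a unit vector ($\|d_\epsilon\| = 1$) we get $\lambda = \lambda \|d_\epsilon\| \geq \lambda (d_\epsilon)_{i\cdot} \geq d'_{i\cdot} > 0$. Thus choosing $\epsilon \le \lambda$ — or more cleanly, first fixing any $\epsilon_0 > 0$, getting $d_{\epsilon_0}$ and the corresponding $\lambda$ from Lemma~\ref{l:lambda:d}, and then noting $\lambda > 0$ but possibly $\lambda < \epsilon_0$ — one sees that we should instead invoke the hypothesis with $\epsilon$ chosen only after $d'$ is known. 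Since $\lambda$ depends on $d_\epsilon$ which depends on $\epsilon$, the clean route is: the hypothesis gives a family $\{d_\epsilon\}$; pick $\epsilon = 1$, get $d_1$; Lemma~\ref{l:lambda:d} yields $\lambda_{(1)} > 0$; if $\lambda_{(1)} \geq 1$ we are done, otherwise restart with $\epsilon = \lambda_{(1)}$ — but this could loop. The robust fix is to observe that $\lambda$ from Lemma~\ref{l:lambda:d} satisfies $\lambda = \max_i d'_{i\cdot}/(d_\epsilon)_{i\cdot}$, which is bounded below by $\max_i d'_{i\cdot} > 0$ uniformly in $\epsilon$ (as each $(d_\epsilon)_{i\cdot} \le 1$); hence taking $\epsilon := \max_i d'_{i\cdot}$ at the outset makes $\lambda \geq \epsilon$ automatic.

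The main obstacle, then, is precisely this quantifier bookkeeping: the hypothesis only controls $f^k$ on the ray beyond distance $\epsilon$, while the point $x + \lambda d_\epsilon$ forced on us by Lemma~\ref{l:lambda:d} lives at a distance $\lambda$ that depends on the chosen direction. Resolving it requires the uniform lower bound $\lambda \geq \max_i d'_{i\cdot}$ noted above, which holds because the $d_\epsilon$ are unit vectors. Everything else — monotonicity of iterates, upward-closedness of the domain ensuring all the points involved lie in $\dom(f)$, and the coordinate-$j$ comparison — is routine. I would present the argument in the order: (1) reduce to $y = x + d'$ with $d' \geq 0$, $d' \neq 0$, $y \leq f^k(y)$; (2) set $\epsilon := \max_i d'_{i\cdot} > 0$ and pull out $d_\epsilon$; (3) apply Lemma~\ref{l:lambda:d}, check $\lambda \geq \epsilon$; (4) derive the coordinate-$j$ inequality contradicting strict decrease.
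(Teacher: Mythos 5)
Your argument is correct and is essentially the paper's own proof: the same reduction to $y = x \vee y' \geq x$ using monotonicity and upward-closedness, the same choice $\epsilon := \max_i d'_{i\cdot} = \norm{y-x}$ guaranteeing $\lambda \geq \epsilon$ (the paper states this as ``we necessarily have $\lambda \geq \epsilon$''), and the same use of Lemma~\ref{l:lambda:d} to produce the coordinate $j$ with $\lambda_j = 0$ at which $f^k$ fails to decrease, contradicting $f^k(x+\lambda d_\epsilon) \ll x + \lambda d_\epsilon$. The only cosmetic difference is that you perform the $x \vee y$ reduction up front while the paper does it at the end.
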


\begin{proof}
  \per
  We show $y \not\leq x \implies y \not\leq f(y)$.
  For that, 
  we first show the following statement:
  \begin{align}
  \label{eq:pos:richtung:1}
    y > x \implies y \not\leq f(y)
  \end{align}
  
  \per
  \noindent
  For that,
  let $y > x$.
  Let $\epsilon := \norm{y - x}$.
  By Lemma \ref{l:lambda:d},
  there exist $\lambda,\lambda_1,\ldots,\lambda_n \geq 0$ 
  with $\lambda_j = 0$ for some $j \in \{1,\ldots,n\}$ 
  such that
  $\overline y := x + \lambda d_\epsilon = y + \sum_{i = 1}^n \lambda_i e_i$ holds.
  We necessarily have $\lambda \geq \epsilon$.
  Using the monotonicity of $f$ 
  and the fact that $f^k(\overline y) \ll \overline y$ holds by assumption,
  we get
  $f^k_{j\cdot}(y) \leq f^k_{j\cdot}(\overline y) < \overline y_{j\cdot} = y_{j\cdot}$.
  Therefore,
  $y \not\leq f(y)$.
  Thus, we have shown \eqref{eq:pos:richtung:1}.
  Now, let $y \not\leq x$.
  Thus, $y' := x \vee y > x$.
  Using \eqref{eq:pos:richtung:1} we get 
  $y ' \not\leq f(y')$.
  For the sake of contradiction assume that
  $y \leq f(y)$ holds.
  Then we get
  $f(y') = f(x \vee y) \geq f(x) \vee f(y) \geq x \vee y = y'$
  ---
  contradiction.
  \qed
\end{proof}

\per
\noindent
In the remainder of this article,
we only use the following corollary of 
Lemma \ref{l:pos:richtung:gen}:

\begin{lemma}
  \label{l:pos:richtung}
  \per
  Let $f : \R^n \pto \R^n$ be monotone with $\dom(f)$ upward closed,
  $f(x) = x$, and $k \in \N$.
  Assume that
  there exists a unit vector $d \gg 0$ such that
  $f^k(x + \lambda d) \ll x + \lambda d$ 
  for all $\lambda > 0$.
  Then, $y \leq x$ for all $y$ with $y \leq f(y)$,
  i.e., $x$ is the greatest pre-fixpoint of $f$.
  \qed
\end{lemma}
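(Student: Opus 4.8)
The plan is to derive Lemma~\ref{l:pos:richtung} as an immediate specialization of Lemma~\ref{l:pos:richtung:gen}. The hypotheses of the two lemmas are almost identical: both require $f : \R^n \pto \R^n$ monotone with $\dom(f)$ upward closed, $f(x) = x$, and some $k \in \N$. The only difference is that Lemma~\ref{l:pos:richtung:gen} asks, for every $\epsilon > 0$, for a (possibly $\epsilon$-dependent) unit vector $d_\epsilon \gg 0$ satisfying $f^k(x + \lambda d_\epsilon) \ll x + \lambda d_\epsilon$ for all $\lambda \geq \epsilon$, whereas Lemma~\ref{l:pos:richtung} hands us a single unit vector $d \gg 0$ that works for all $\lambda > 0$.

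So the first and essentially only step is: given the unit vector $d \gg 0$ from the hypothesis of Lemma~\ref{l:pos:richtung}, set $d_\epsilon := d$ for every $\epsilon > 0$. Then $d_\epsilon \gg 0$ is a unit vector, and for every $\lambda \geq \epsilon$ we have in particular $\lambda > 0$ (since $\epsilon > 0$), so the assumed inequality $f^k(x + \lambda d) \ll x + \lambda d$ gives exactly $f^k(x + \lambda d_\epsilon) \ll x + \lambda d_\epsilon$. Hence all the hypotheses of Lemma~\ref{l:pos:richtung:gen} are met.

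Applying Lemma~\ref{l:pos:richtung:gen} then yields $y \leq x$ for all $y$ with $y \leq f(y)$, i.e.\ $x$ is the greatest pre-fixpoint of $f$, which is the conclusion of Lemma~\ref{l:pos:richtung}.

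There is no real obstacle here; the statement is packaged precisely as a ready-to-use corollary, and the proof is the one-line observation that a uniform witness $d$ trivially supplies the family $(d_\epsilon)_{\epsilon > 0}$ demanded by the more general lemma. The only thing to be slightly careful about is the quantifier match — checking that ``for all $\lambda > 0$'' indeed covers ``for all $\lambda \geq \epsilon$'' for each fixed $\epsilon > 0$ — which holds because $\epsilon$ ranges only over positive reals.
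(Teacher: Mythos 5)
Your proposal is correct and matches the paper's intent exactly: the paper presents Lemma~\ref{l:pos:richtung} as an immediate corollary of Lemma~\ref{l:pos:richtung:gen} (it gives no separate proof), and your one-line instantiation $d_\epsilon := d$ with the quantifier check $\lambda \geq \epsilon > 0 \Rightarrow \lambda > 0$ is precisely the intended argument.
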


\subsection{Monotone and Order-Concave Operators on $\R^n$}


\per%
\noindent
A set $X \subseteq \R^n$ is called \emph{order-convex}
if and only if
$ 
  \lambda x + (1-\lambda) y \in X
$ 
for all comparable $x,y \in X$ and all $\lambda \in [0,1]$.
It is called \emph{convex} if and only if this condition holds 
for all $x,y \in X$.
Every convex set is order-convex, but not vice-versa.
If $n = 1$, then
every order-convex set is convex.
Every upward closed set is order-convex, but not necessarily convex.

\per%
A partial function $f : \R^n \pto \R^m$ 
is called \emph{order-convex} (resp.\ \emph{order-concave}) 
if and only if $\dom(f)$ is order-convex and
\begin{align}
  f(\lambda x + (1-\lambda)y) 
  \leq \text{(resp.\ $\geq$)}\;
  \lambda f(x) + (1-\lambda) f(y)
\end{align}

\per
\noindent
for all comparable $x,y \in \dom(f)$ and all $\lambda \in [0,1]$ 
(cf.\ \citet{OrtegaRheinboldt:book}).
A partial function $f : \R^n \pto \R^m$ 
is called \emph{convex} (resp.\ \emph{concave}) 
if and only if
$\dom(f)$ is convex and 
\begin{align}
f(\lambda x + (1-\lambda)y) 
\leq \text{(resp.\ $\geq$)}\;
\lambda f(x) + (1-\lambda) f(y)
\end{align}

\per
\noindent
for all $x,y \in \dom(f)$ and all $\lambda \in [0,1]$ 
(cf.\ \citet{OrtegaRheinboldt:book}).
Every convex (resp.\ concave) partial function is 
order-convex (resp.\ order-concave),
but not vice-versa.
Note that $f$ is (order-)concave if and only if $-f$ is (order-)convex.
Note also that $f$ is (order-)convex (resp.\ (order-)concave)
if and only if $f_{i\cdot}$ is (order-)convex (resp.\ (order-)concave)
for all $i = 1,\ldots,m$.
If $n = 1$, then every order-convex (resp.\ order-concave) partial function
is convex (resp.\ concave).
  Every order-convex/order-concave partial function 
  is chain-continuous.
  Every convex/concave partial function is continuous.

\per
The set of (order-)convex (resp.\ (order-)concave) partial functions
is not closed under composition.
The functions $f, g$ defined by 
$f(x) = (x-2)^2$ and $g(x) = \frac1x$ for all $x \in \Rpp$,
for instance,
are both convex and thus also order-convex.
However, 
$f \circ g$ with $(f \circ g)(x) = (\frac1x - 2)^2$ for all $x \in \Rpp$
is neither convex nor order-convex.

\per
In contrast to the set of all order-concave partial functions,
the set of all partial functions that are monotone \emph{and} order-concave 
is closed under composition:

\begin{lemma}
  \per
  \label{l:mon:conc:circ}
  Let 
  $f : \R^m \pto \R^n$ and 
  $g : \R^l \pto \R^m$ 
  be monotone and order-convex (resp.\ order-concave).
  Assume that
  $\codom(g) \subseteq \dom(f)$.
  Then $f \circ g$ is monotone and 
  order-convex (resp.\ order-concave).
\end{lemma}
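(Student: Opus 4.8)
The claim splits naturally into two independent parts: monotonicity of $f \circ g$, and order-convexity (resp.\ order-concavity) of $f \circ g$. The monotonicity part is routine and I would dispatch it first. Given $x \leq y$ in $\dom(g)$, monotonicity of $g$ yields $g(x) \leq g(y)$; since $\codom(g) \subseteq \dom(f)$, both $g(x)$ and $g(y)$ lie in $\dom(f)$, so monotonicity of $f$ gives $f(g(x)) \leq f(g(y))$. Hence $f \circ g$ is monotone. I should also note in passing that $\dom(f \circ g) = \dom(g)$ here, which is order-convex (resp.\ convex) by hypothesis on $g$, so the domain condition in the definition of order-convexity (resp.\ order-concavity) is satisfied.

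For the second part I will treat the order-concave case; the order-convex case is symmetric (or follows by replacing $f, g$ with $-f, -g$ and using that $-(-f)\circ(-g)$ arguments go through, although one must be a little careful — better to just mirror the inequalities). Fix comparable $x, y \in \dom(g)$ and $\lambda \in [0,1]$, and write $z := \lambda x + (1-\lambda) y$, which lies in $\dom(g)$ by order-convexity of $\dom(g)$. Order-concavity of $g$ gives
\begin{align}
  g(z) \geq \lambda g(x) + (1-\lambda) g(y) =: w .
\end{align}
Now I want to apply $f$ to this inequality. First, monotonicity of $f$ gives $f(g(z)) \geq f(w)$ — but this requires $w \in \dom(f)$, which is where order-convexity of $\dom(f)$ enters: $g(x), g(y) \in \codom(g) \subseteq \dom(f)$, and $g(x), g(y)$ are comparable because $g$ is monotone and $x, y$ are comparable, so their convex combination $w$ lies in $\dom(f)$. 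Then order-concavity of $f$ applied to the comparable pair $g(x), g(y) \in \dom(f)$ gives
\begin{align}
  f(w) = f(\lambda g(x) + (1-\lambda) g(y)) \geq \lambda f(g(x)) + (1-\lambda) f(g(y)) .
\end{align}
Chaining the two displayed inequalities yields $f(g(z)) \geq \lambda f(g(x)) + (1-\lambda) f(g(y))$, which is exactly order-concavity of $f \circ g$ at $x, y, \lambda$.

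The only subtlety — and the step I would flag as needing care rather than being genuinely hard — is the dual bookkeeping: the proof uses monotonicity of $f$ \emph{and} order-concavity of $f$ in the \emph{same} direction, which works out precisely because concavity bends the graph upward while monotonicity preserves the direction of the intermediate estimate $g(z) \geq w$; in the convex case one instead uses that $g(z) \leq w$ (from order-convexity of $g$) together with monotonicity and order-convexity of $f$, and all inequalities flip consistently. I would make sure to verify that the comparability hypotheses are genuinely available at each use of the order-concavity definition — comparability of $x, y$ is given, and comparability of $g(x), g(y)$ is inherited through monotonicity of $g$ — since dropping "order-" and using plain concavity would make this automatic, but in the order-concave setting these comparability checks are the crux of why the argument is valid.
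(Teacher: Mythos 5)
Your proof is correct and follows essentially the same route as the paper's: monotonicity by chaining $g(x)\leq g(y)$ through $f$, and order-concavity (the paper writes out the dual, order-convex, case) by applying order-concavity of $g$, then monotonicity of $f$, then order-concavity of $f$ at the comparable pair $g(x), g(y)$. Your explicit checks that $\lambda x + (1-\lambda)y \in \dom(g)$ and that $\lambda g(x) + (1-\lambda)g(y) \in \dom(f)$ (via comparability of $g(x), g(y)$ and order-convexity of $\dom(f)$) are left implicit in the paper, so this is a welcome bit of extra care rather than a deviation.
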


\begin{proof}
  \per
  We assume that $f$ and $g$ are order-convex.
  The other case can be proven dually.
  Let $x,x' \in \dom(g)$ with $x \leq x'$, 
  $y = g(x)$, $y' = g(x')$.
  Since $g$ is monotone, we get $y \leq y'$.
  Since $f$ is monotone, we get 
  $
    (f \circ g)(x) 
    = 
    f(g(x)) 
    = 
    f(y) 
    \leq 
    f(y') 
    =
    f(g(x'))
    =
    (f \circ g)(x') 
  $.
  Hence, $f\circ g$ is monotone.
  
  Let $\lambda \in [0,1]$.
  Then 
  $
    (f\circ g)(\lambda x + (1-\lambda) x')
    =
    f(g(\lambda x + (1-\lambda) x')
    \leq
    f(\lambda g(x) + (1-\lambda) g(x'))
    =
    f(\lambda y + (1-\lambda) y')
    \leq
    \lambda f(y) + (1-\lambda) f ( y' )
    =
    \lambda f(g(x)) + (1-\lambda) f ( g(x') )
    =
    \lambda (f\circ g)(x) + (1-\lambda) (f\circ g)(x') 
  $,
  because $f$ is monotone, and
  $f$ and $g$ are order-convex.
  Hence,
  $f\circ g$ 
  is order-convex.
  \qed
\end{proof}

\subsection{Fixpoints of Monotone and Order-concave Operators on $\R^n$}

\per
\noindent
We now study the fixpoints of 
monotone and order-concave partial operators on $\R^n$.
We are in particular interested in developing 
a simple sufficient criterium for a fixpoint of a monotone and order-concave
partial operator on $\R^n$ for being the greatest pre-fixpoint of this partial operator.
To prepare this,
we first show the following lemma:

\begin{lemma}
  \label{l:eigenschaften:conv}
  \per
  Let $f : \R^n \pto \R^n$ be 
  order-convex
  (resp.\ order-concave).
  Let $x, x^* \in \dom(f)$ 
  with
  $x^* = f(x^*)$, $x \gg \text{(resp.\ $\ll$)} \, f(x)$, $d := x^* - x \gg 0$.
  Then, $x^* + \lambda d \ll \text{(resp.\ $\gg$)} \, f(x^* + \lambda d)$
  for all $\lambda > 0$ with $x^* + \lambda d \in \dom(f)$.
\end{lemma}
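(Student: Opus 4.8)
The plan is to prove the order-convex case; the order-concave case is obtained by reversing every inequality in the argument below, so I only describe the former. Assume $f$ is order-convex, $x^* = f(x^*)$, $x \gg f(x)$, $d := x^* - x \gg 0$, and fix $\lambda > 0$ with $x^* + \lambda d \in \dom(f)$.

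First I would note that $x$, $x^*$, and $x^* + \lambda d$ all lie on the ray $\{\, x + t d \mid t \geq 0 \,\}$: indeed $x^* = x + d$ and $x^* + \lambda d = x + (1+\lambda) d$. In particular $x \leq x^* + \lambda d$, so $x$ and $x^* + \lambda d$ are comparable and the defining inequality of order-convexity applies to them. The key algebraic step is to express the fixpoint $x^*$ as a \emph{proper} convex combination of these two points: with $\mu := \frac{\lambda}{1+\lambda} \in (0,1)$ one has $(1-\mu)(1+\lambda) = 1$ and hence
\[
  x^* = \mu\, x + (1-\mu)(x^* + \lambda d) .
\]

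Next I would apply order-convexity at this combination and use $x^* = f(x^*)$:
\[
  x^* \;=\; f\bigl(\mu\, x + (1-\mu)(x^* + \lambda d)\bigr) \;\leq\; \mu\, f(x) + (1-\mu)\, f(x^* + \lambda d) .
\]
Subtracting the identity $x^* = \mu\, x + (1-\mu)(x^* + \lambda d)$ and rearranging gives
\[
  (1-\mu)\bigl[(x^* + \lambda d) - f(x^* + \lambda d)\bigr] \;\leq\; \mu\,\bigl(f(x) - x\bigr) .
\]
Since $\mu > 0$ and $f(x) - x \ll 0$ by the hypothesis $x \gg f(x)$, the right-hand side is $\ll 0$; since $1 - \mu > 0$, it follows that $(x^* + \lambda d) - f(x^* + \lambda d) \ll 0$, i.e.\ $x^* + \lambda d \ll f(x^* + \lambda d)$, as claimed.

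I do not anticipate a genuine obstacle. The two points that need a word of care are: (i) one must check that $x$ and $x^* + \lambda d$ are comparable before invoking order-convexity (order-convexity of $\dom(f)$ alone says nothing about non-comparable pairs); and (ii) order-convexity yields only a \emph{non-strict} inequality, so the strict conclusion $\ll$ must be harvested from the strict hypothesis $x \gg f(x)$ together with the strict positivity of the coefficients $\mu$ and $1-\mu$ (equivalently, of $\lambda$).
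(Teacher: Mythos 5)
Your proof is correct and rests on exactly the same idea as the paper's: writing $x^*$ as a proper convex combination of the comparable points $x$ and $x^* + \lambda d$ and applying the defining inequality of order-convexity together with the strict hypothesis $x \gg f(x)$. The paper merely phrases this componentwise as a proof by contradiction (assuming $(x^*+\lambda d)_{i\cdot} \geq f_{i\cdot}(x^*+\lambda d)$ for some $i$ and deriving $x^*_{i\cdot} > f_{i\cdot}(x^*)$), whereas you argue directly for all components at once; the substance is identical.
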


\begin{proof}
  \per
  We only consider the case that $f$ is order-convex.
  The proof for the case that $f$ is order-concave can be carried out dually.
  Let $\lambda > 0$.
  Assume for the sake of contradiction
  that there exists some $i \in \{1,\ldots,n\}$ 
  such that 
  $(x^* + \lambda d)_{i\cdot} \geq f_{i\cdot}(x^* + \lambda d)$.
  Since $f_{i\cdot}$ is order-convex 
  and $x_{i\cdot} > f_{i \cdot}(x)$ holds,
  it follows $x^*_{i\cdot} > f_{i\cdot}(x^*)$
  --- contradiction.
  \qed
\end{proof}

%
%
%
%
%

\unnoetig{

\begin{lemma}
  Let $F \subseteq \CR^m\to\CR^n$ 
  be a set of (order-)concave functions.
  Then $g$ defined by 
  $g(x) := \bigwedge \{ f(x) \mid f \in F \}$
  for all $x \in \CR^m$
  is (order-)concave.
\end{lemma}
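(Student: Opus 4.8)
The plan is to reduce the claim to the pointwise characterisation of the greatest lower bound in the complete lattice $\CR^n$, in the same way one shows that an infimum of concave real functions is concave. I would fix comparable $x,y \in \CR^m$ (arbitrary $x,y \in \CR^m$ in the convex variant) and $\lambda \in [0,1]$, set $z := \lambda x + (1-\lambda) y$, and observe that, since every $f \in F$ is total, $g$ is total on $\CR^m$; as $\CR^m$ is (order-)convex, $z$ lies in the domain of $g$, so only the defining inequality $g(z) \geq \lambda g(x) + (1-\lambda) g(y)$ remains to be established.

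The key step is the following short chain, carried out for an arbitrary fixed $f \in F$: by (order-)concavity of $f$ we have $f(z) \geq \lambda f(x) + (1-\lambda) f(y)$; and since $g(x) \leq f(x)$ and $g(y) \leq f(y)$ (because $g(\cdot)$ is a lower bound of $\{ h(\cdot) \mid h \in F \}$) and $\lambda, 1-\lambda \geq 0$, we get $\lambda f(x) + (1-\lambda) f(y) \geq \lambda g(x) + (1-\lambda) g(y)$. Combining the two, $f(z) \geq \lambda g(x) + (1-\lambda) g(y)$ holds for every $f \in F$. Hence $\lambda g(x) + (1-\lambda) g(y)$ is a lower bound of the set $\{ f(z) \mid f \in F \}$ and is therefore dominated by its infimum $g(z) = \bigwedge \{ f(z) \mid f \in F \}$, which is exactly the inequality we wanted. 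Since this argument uses comparability of $x$ and $y$ only in the order-concave case and makes no use of it in the concave case, it proves both variants of the lemma at once.

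I expect the only real care to be needed in checking that the lattice-ordered arithmetic of $\CR^n$ supports the ``multiply by a nonnegative scalar and add'' step, i.e.\ that $a \leq b \Rightarrow \lambda a \leq \lambda b$ for $\lambda \in [0,1]$ and that $+$ is monotone on $\CR^n$; this depends on the conventions fixed earlier in the paper for products such as $0 \cdot (\pm\infty)$ and sums such as $\infty + (-\infty)$, which is where the delicate treatment of infinities enters, but it poses no genuine obstacle. A closing remark would dispatch the degenerate case $F = \emptyset$, where $g$ is the constant function $\top$, which is trivially (order-)concave.
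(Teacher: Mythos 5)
There is a genuine gap: your argument proves the wrong statement, because it silently replaces the paper's definition of (order-)concavity on $\CR^m$ by the naive pointwise inequality in extended arithmetic. In this paper a total map $g : \CR^m \to \CR$ is (order-)concave iff \emph{for every} $I : \{1,\ldots,m\} \to \{\neginfty,\mathsf{id},\infty\}$ three conditions hold: $\fdom(g^{(I)})$ is (order-)convex, the restriction $g^{(I)}|_{\fdom(g^{(I)})}$ is an (order-)concave partial function on $\R^m$, and $\fdom(g^{(I)}) \neq \emptyset$ implies $g^{(I)}(x) < \infty$ for all $x \in \R^m$. Your proof never mentions the sections $g^{(I)}$, the finite domains $\fdom$, or the third condition, so none of the actual proof obligations are discharged. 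Worse, your key step ``by (order-)concavity of $f$ we have $f(z) \geq \lambda f(x) + (1-\lambda) f(y)$'' is not available from the paper's definition at arbitrary points of $\CR^m$: that inequality is only guaranteed for comparable points of $\fdom(f^{(I)})$, i.e.\ for real arguments with real values. For instance, a function $f : \CR \to \CR$ with $f(\tfrac12) = \neginfty$ and $f(x) = \infty$ for all other real $x$ has $\fdom(f) = \emptyset$ and satisfies the paper's definition vacuously, yet violates your inequality at $x=0$, $y=1$, $\lambda=\tfrac12$; so the step is genuinely unjustified, not merely unproved. Conversely, even if the pointwise inequality were established, it would not by itself yield order-convexity of $\fdom(g^{(I)})$ or the finiteness condition for $g$.

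A proof matching the paper's definitions must instead observe that $g^{(I)} = \bigwedge\,\{ f^{(I)} \mid f \in F \}$ for every $I$, and then verify the three conditions for each section and each component: for the finiteness condition, if $g^{(I)}(x_0) \in \R$ for some $x_0 \in \R^m$ then no $f \in F$ has $f^{(I)}(x_0) = \neginfty$ and some $f_0$ has $f_0^{(I)}(x_0) < \infty$, hence $x_0 \in \fdom(f_0^{(I)})$ and, by the third condition for $f_0$, $g^{(I)} \leq f_0^{(I)} < \infty$ on all of $\R^m$; for the other two conditions one takes comparable $x, y \in \fdom(g^{(I)})$ and derives the lower bound $g^{(I)}(z) \geq \lambda g^{(I)}(x) + (1-\lambda) g^{(I)}(y)$ at $z = \lambda x + (1-\lambda)y$ from the concavity of the members \emph{on their finite domains} --- this is where the kernel of your infimum computation belongs, but only after the reduction to sections. (The paper itself dismisses the proof as straightforward and only records the case $F = \emptyset$, which your closing remark handles in the same way; but as stated your route does not prove the lemma in the paper's sense, and your worry about conventions such as $0 \cdot (\pm\infty)$ is a symptom of working with a notion of concavity the paper does not use.)
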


\begin{proof}
   ??? TODO ???
\end{proof}

\begin{lemma}
  Let $f, g : \CR^m\to\CR^n$ be order-concave,
  and $a \in \R_{\geq 0}$.
  Then $f + g$ and $af$ are order-concave.
\end{lemma}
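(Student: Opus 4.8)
The plan is to reduce everything to the scalar case $n=1$ and then simply add, respectively scale, the two defining inequalities of order-concavity; apart from the handling of $\pm\infty$ this is completely routine.

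First I would invoke the componentwise characterisation recalled just before the lemma: a partial function into $\CR^n$ is order-concave if and only if each of its $n$ component functions is. Since $(f+g)_{i\cdot}=f_{i\cdot}+g_{i\cdot}$ and $(af)_{i\cdot}=a\,f_{i\cdot}$ for every $i\in\{1,\ldots,n\}$, it suffices to prove the claim for $n=1$. The domain part of the definition is immediate, since $f+g$ and $af$ are total on $\CR^m$ and $\CR^m$ is order-convex.

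Next I would establish the defining inequality. For $f+g$: fix comparable $x,y\in\CR^m$ and $\lambda\in[0,1]$ and set $z:=\lambda x+(1-\lambda)y$. Applying order-concavity of $f$ and of $g$ to the comparable pair $x,y$ gives $f(z)\geq\lambda f(x)+(1-\lambda)f(y)$ and $g(z)\geq\lambda g(x)+(1-\lambda)g(y)$; adding these and regrouping the right-hand side yields $(f+g)(z)\geq\lambda(f+g)(x)+(1-\lambda)(f+g)(y)$. For $af$ with $a\in\R_{\geq 0}$: multiplying the inequality for $f$ at $x,y$ by the non-negative scalar $a$ preserves the direction of $\geq$, so $a f(z)\geq\lambda(af)(x)+(1-\lambda)(af)(y)$.

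The only genuine subtlety, and what I expect to be the main obstacle, is the arithmetic of $\pm\infty$: on $\CR$ addition is merely a partial operation, since $(+\infty)+(-\infty)$ is a priori undefined, and this already bears on whether $z=\lambda x+(1-\lambda)y$ is a well-defined element of $\CR^m$ --- comparability of $x$ and $y$ does not exclude a coordinate $i$ with $x_{i\cdot}=-\infty$ and $y_{i\cdot}=+\infty$. So the argument has to be run relative to the conventions fixed in the paper for extended addition and scalar multiplication (with $0\cdot(\pm\infty)=0$ and one of $\pm\infty$ absorbing in sums), and one then has to check that under those conventions the pointwise inequalities above can still legitimately be added, respectively scaled by $a\geq 0$ --- which holds because extended addition is monotone in each argument and multiplication by a non-negative scalar is order-preserving. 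Once the conventions are pinned down this is a routine case distinction over which coordinates are infinite, with no further structural content; the mathematical core is just the trivial algebraic identity above.
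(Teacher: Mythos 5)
Two preliminary remarks on status: the paper itself offers no proof of this lemma --- it occurs only inside a block the authors disabled (the \emph{unnoetig} wrapper), its proof field literally reads ``TODO'', and it is even placed \emph{before} the paper defines (order-)concavity for total functions $\CR^m\to\CR^n$; that definition only appears later, in the subsection on \morcave operators on $\CR^n$. So there is nothing to compare your argument against, and it has to be judged against that later definition. Your real-valued core is fine and is indeed the trivial part: the componentwise reduction to $n=1$ is exactly the paper's convention, and adding the two concavity inequalities, resp.\ scaling by $a\geq 0$, is all that happens on finite values.

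The gap is in how you propose to handle infinities, which you correctly identify as the only real content. The paper does \emph{not} define order-concavity on $\CR^m$ as the plain inequality plus conventions for extended arithmetic. Instead, $f:\CR^m\to\CR$ is order-concave iff for \emph{every} substitution $I:\{1,\ldots,m\}\to\{\neginfty,\mathsf{id},\infty\}$ of infinite values into argument positions: (1) $\fdom(f^{(I)})$ is order-convex, (2) $f^{(I)}$ restricted to $\fdom(f^{(I)})$ is order-concave as a partial function on $\R^m$, and (3) if $\fdom(f^{(I)})\neq\emptyset$ then $f^{(I)}(x)<\infty$ for all $x\in\R^m$. Consequently your step ``the domain part is immediate since $f+g$ is total and $\CR^m$ is order-convex'' appeals to the wrong notion of domain, convex combinations of points with infinite coordinates are never formed (so the $\lambda x+(1-\lambda)y$ convention issue you worry about does not even arise), and condition (3) --- which your sketch never mentions, and which is precisely what fails for the paper's example $g(x_1,x_2)$ that equals $x_1^2$ once $x_2=\infty$ --- must be verified. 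The repaired argument is short but must be organized around the $I$-substitutions: $(f+g)^{(I)}=f^{(I)}+g^{(I)}$; with the paper's convention $\neginfty+\infty=\neginfty$ the sum is finite iff both summands are, so $\fdom((f+g)^{(I)})=\fdom(f^{(I)})\cap\fdom(g^{(I)})$, an intersection of order-convex sets, on which the two real inequalities add; and (3) is inherited since each summand stays $<\infty$. For $af$, treat $a=0$ separately (then $af$ is constant, hence concave, under $0\cdot(\pm\infty):=0$), and for $a>0$ note $\fdom((af)^{(I)})=\fdom(f^{(I)})$ and all three conditions scale.
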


\begin{proof}
  ??? TODO ???
  \qed
\end{proof}

??? Das folgende könnte man auf den endlichen Bereich erstmal
beschränken, oder ???

}

\per
\noindent
We now use the results obtained so far
to prove the following sufficient criterium 
for a fixpoint of a monotone and order-concave partial operator 
for being the greatest pre-fixpoint.

\begin{lemma}
  \per
  \label{l:element:von:partition:2}
  Let $f : \R^n \pto \R^n$ be monotone and order-concave
  with $\dom(f)$ upward closed.
  Let $x^*$ be a 
  fixpoint of $f$,
  $x$ be a pre-fixpoint of $f$ with $x \ll x^*$,
  and $\mu_{\geq x} f = x^*$.
  Then, $x^*$ is the greatest pre-fixpoint of $f$.
\end{lemma}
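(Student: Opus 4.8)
The plan is to deduce the claim from Lemma~\ref{l:pos:richtung}: if I can exhibit a number $K \in \N$ and a unit vector $d \gg 0$ with $f^K(x^* + \lambda d) \ll x^* + \lambda d$ for all $\lambda > 0$, then, since $f$ is monotone, $\dom(f)$ is upward closed and $x^*$ is a fixpoint, Lemma~\ref{l:pos:richtung} immediately tells me that $x^*$ is the greatest pre-fixpoint of $f$. The natural candidate for the direction is $d := (x^* - x)/\norm{x^* - x}$, which is $\gg 0$ because $x \ll x^*$, and the natural tool for producing the strict inequality along the ray from $x^*$ is Lemma~\ref{l:eigenschaften:conv}, applied to the operator $f^K$. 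The catch --- and the step I expect to be the main obstacle --- is that Lemma~\ref{l:eigenschaften:conv} requires the \emph{strict} relation $x \ll f^K(x)$, whereas the hypotheses only give that $x$ is a pre-fixpoint, i.e.\ $x \leq f(x)$. The assumption $\mu_{\geq x} f = x^*$ is precisely what lets me repair this, by choosing $K$ large enough that one application of $f^K$ strictly raises \emph{every} coordinate of $x$.

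First I would pass to the sublattice $\R^n_{\geq x}$. Since $x$ is a pre-fixpoint it lies in $\dom(f)$, and as $\dom(f)$ is upward closed we get $\R^n_{\geq x} \subseteq \dom(f)$; monotonicity and $f(x) \geq x$ give $f(\R^n_{\geq x}) \subseteq \R^n_{\geq x}$, so $\R^n_{\geq x}$ is $f$-invariant. Hence each power $f^k$ is everywhere defined on $\R^n_{\geq x}$, and by iterating Lemma~\ref{l:mon:conc:circ} (its codomain condition holds by the $f$-invariance just noted) each $f^k$ is monotone and order-concave there. Next I would establish $\bigvee_{k} f^k(x) = x^*$: the sequence $(f^k(x))_k$ is increasing (by $x \leq f(x)$ and monotonicity) and bounded above by $x^*$ (by induction, using $f(x^*) = x^*$), so its supremum $z$ exists, lies between $x$ and $x^*$, and hence is in $\dom(f)$; since order-concave maps are chain-continuous, $f(z) = \bigvee_k f^{k+1}(x) = z$, so $z$ is a fixpoint of $f$ above $x$, which forces $z \geq x^*$ by $\mu_{\geq x} f = x^*$, i.e.\ $z = x^*$. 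Because $x \ll x^*$, for each coordinate $i$ the increasing real sequence $(f^k_{i\cdot}(x))_k$ converges to $x^*_{i\cdot} > x_{i\cdot}$, so some $f^{K_i}_{i\cdot}(x)$ already exceeds $x_{i\cdot}$; taking $K := \max_i K_i$ yields $f^K(x) \gg x$.

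It then remains to assemble the pieces. With $g := f^K$ I have a monotone order-concave operator on $\R^n_{\geq x}$ with $g(x^*) = x^*$, $x \ll g(x)$, and $x^* - x \gg 0$, so Lemma~\ref{l:eigenschaften:conv} gives $g(x^* + \lambda(x^*-x)) \ll x^* + \lambda(x^*-x)$ for every $\lambda > 0$ for which the argument lies in $\dom(g)$ --- and $x^* + \lambda(x^*-x) \geq x^* \geq x$ always does. Rescaling, this is exactly $f^K(x^* + \lambda d) \ll x^* + \lambda d$ for all $\lambda > 0$ with $d$ the unit vector above, and Lemma~\ref{l:pos:richtung} (with this $d$ and $k := K$) concludes that $x^*$ is the greatest pre-fixpoint of $f$. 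Apart from the strictness issue flagged above, which the coordinate-wise convergence argument supplying $K$ resolves, everything is routine bookkeeping with the sublattice $\R^n_{\geq x}$.
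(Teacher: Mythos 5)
Your proposal is correct and follows essentially the same route as the paper's proof: obtain some $K$ with $x \ll f^K(x)$ (which the paper gets in one line from chain-continuity of order-concave maps together with $\mu_{\geq x} f = x^*$, exactly the Kleene-iteration argument you spell out), then apply Lemma~\ref{l:mon:conc:circ} to see that $f^K|_{\R^n_{\geq x}}$ is monotone and order-concave, Lemma~\ref{l:eigenschaften:conv} to get $f^K(x^*+\lambda d) \ll x^*+\lambda d$ along the ray $d = x^* - x \gg 0$, and Lemma~\ref{l:pos:richtung} to conclude. The only differences are presentational (your explicit construction of $K$ and the normalization of $d$), so no further comment is needed.
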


\begin{proof}
  \per
  Since $f$ is chain-continuous and $x \ll x^*$ is a pre-fixpoint of $f$,
  there exists some $k \in \N$ such that $x \ll f^k(x)$.
   Let $x'$ be a pre-fixpoint of $f$. 
      Let $d := x^* - x$.
      Note that $d \gg 0$.
      Since 
      $f^k|_{\R^n_{\geq x}} = (f|_{\R^n_{\geq x}})^k$ 
      is monotone and order-concave 
      by Lemma \ref{l:mon:conc:circ}, and
      $x^*$ is a fixpoint of $f^k$ and thus of $f^k|_{\R^n_{\geq x}}$,
      we get
      $
        f^k(x^* + \lambda d) 
        =
        f^k|_{\R^n_{\geq x}}(x^* + \lambda d) 
        \ll 
        x^* + \lambda d
      $
      for all
      $\lambda \in \Rpp$
      by Lemma \ref{l:eigenschaften:conv}.
      Thus, Lemma \ref{l:pos:richtung} 
      gives us 
      $x' \leq x^*$. 
      \qed
\end{proof}




  \begin{example}
    \label{ex:feas:function:1}
    \per
    Let us consider the monotone and concave partial operator $\sqrt\cdot : \R\pto\R$.
    The points $0$ and $1$ are fixpoints of $\sqrt\cdot$,
    since $0 = \sqrt 0$, and $1 = \sqrt 1$.
    Since $\frac12$ is a pre-fixpoint of $\sqrt\cdot$, $\frac12 < 1$, and $\mu_{\geq\frac12} \sqrt\cdot = 1$,
    Lemma \ref{l:element:von:partition:2} implies that 
    $1$ is the greatest pre-fixpoint of $\sqrt\cdot$.
    Observe that for the fixpoint $0$,
    there is no pre-fixpoint $x \in \R$ of $\sqrt\cdot$ with $x < 0$.
    Therefore, Lemma \ref{l:element:von:partition:2} cannot be applied.
    \qed
  \end{example}
  
  \per
  \noindent
  The following example shows that the criterium of 
  Lemma \ref{l:element:von:partition:2} 
  is sufficient,
  but not necessary:
  
  \begin{example}
    \per
    Let $f : \R \to \R$ be defined by
    $f(x) = 0 \wedge x$ for all $x \in \R$.
    Recall that $\wedge$ denotes the minimum operator.
    Then, $0$ is the greatest pre-fixpoint of $f$.
    However,
    there does not exist a $x \in \R$ with $x < 0$ such that
    $\mu_{\geq x} f = 0$,
    since $\mu_{\geq x} f = x$ for all $x \leq 0$.
    Therefore, 
    Lemma \ref{l:element:von:partition:2} 
    cannot be applied to show that $0$ is the greatest pre-fixpoint of $f$.
    \qed
  \end{example}

  \per
  \noindent
  The set $\R^n$ can be identified with the set 
  $\{1,\ldots,n\} \to \R$ which can be identified with the 
  set $\vX \to \R$, whenever $\abs \vX = n$.
  In the remainder of this article,
  we therefore identify the set 
  $(\vX\to\R) \pto (\vX\to\R)$ with 
  the set 
  $\R^n \pto \R^n$
  ---
  provided that $\abs \vX = n$.
  Usually, we use $\vX = \{ \vx_1,\ldots,\vx_n \}$.
  We use one or the other representation depending on 
  which representation is more convenient in the given context.
  
  \per
  Our next goal is to weaken the preconditions of
  Lemma \ref{l:element:von:partition:2},
  i.e., we aim at providing a weaker sufficient criterium
  for a fixpoint of a monotone and order-concave partial operator 
  for being the greatest pre-fixpoint 
  than the one provided by Lemma \ref{l:element:von:partition:2}.
  The weaker sufficient criterium we are going to develop 
  can, for instance, be applied to the following example:

\begin{example}
  \label{ex:two:dim:feasible:function:informal}
  \per
  Let us consider the monotone and order-concave partial operator
  $f : \R^2 \pto \R^2$ defined by
  $
    f(x_1,x_2) 
    := 
    (x_2 + 1 \wedge 0, \; \sqrt x_1 )
  $
  for all $x_1,x_2 \in \R$.
  Then,
  $x^* = (x^*_1,x^*_2) = (0, 0)$ is the greatest pre-fixpoint of $f$.
  In order to prove this,
  assume that $y = (y_1,y_2) > x^*$ is a pre-fixpoint of $f$,
  i.e.,
  $y_1 \leq y_2 + 1$, $y_1 \leq 0$, and $y_2 \leq \sqrt{y_1}$.
  It follows immediately that $y_1 \leq 0$ and thus 
  $y_2 \leq \sqrt{y_1} \leq \sqrt 0 = 0$.
  
  \per
  Lemma \ref{l:element:von:partition:2} is not applicable
  to prove that $x^*$ is the greatest pre-fixpoint of $f$,
  because there is no pre-fixpoint $x$ of $f$ with $x \ll x^*$.
  The situation is even worse:
  there is no $x \in \dom(f)$ with $x \ll x^*$.
  
  \per
  We observe that, locally at $x^* = (0,0)$, the first component $f_{1\cdot}$ of $f$ does 
  not depend on the second argument in the following sense:
  For every $y = (y_1,y_2) \in \R^2$ with $y_1 = x^*_1 = 0$ and $y_2 > x^*_2 = 0$,
  we have $f_{1\cdot}(y) = 0 = f_{1\cdot}(x^*)$.
  The weaker sufficient criterium 
  we develop in the following
  takes this into account.
  That is, we will assume that the set of variables 
  can be partitioned according
  to their dependencies.
  The sufficient criterium of Lemma \ref{l:element:von:partition:2}
  should then hold for each partition.
  In this example this means:
  there exists some 
  $x_1 < x^*_1$ with $x_1 \leq f_{1\cdot}(x_1,x^*_2) = f_{1\cdot}(x_1,0)$
  and $\mu_{\geq x_1} f_{1\cdot}(\cdot,0) = x^*_1 = 0$, 
  and there exists some
  $x_2 < x^*_2$ with $x_2 \leq f_{2\cdot}(x^*_1,x_2) = f_{2\cdot}(0,x_2)$
  and $\mu_{\geq x_2} f_{2\cdot}(0,\cdot) = x^*_2 = 0$.
  We could choose 
  $x_1 = x_2 = -1$,
  for instance.
  \qed
\end{example}

  \per
  \noindent
  In order to derive a sufficient criterium that is weaker than the sufficient 
  criterium of Lemma \ref{l:element:von:partition:2},
  we should,
  as suggested in Example \ref{ex:two:dim:feasible:function:informal},
  partition the variables according to their dependencies.
  In order to define a suitable notion of dependencies,
  let $\vX$ be a set of variables, 
  $f : (\vX\to\R)\pto(\vX\to\R)$ be a monotone partial operator, and
  $\rho : \vX\to\R$.
  For $\vX_1 \dcup \vX_2 = \vX$, 
  we write 
  $\vX_1 \stackrel{f,\rho}\rightarrow \vX_2$
  if and only if
  \begin{enumerate}
    \item
      $\vX_1 = \emptyset$,  
    \item
      $\vX_2 = \emptyset$, or
    \item 
      there exists an $\rho' : \vX_2 \to \R$ 
      with $\rho \oplus \rho' \in \dom(f)$ and $\rho' \ll \rho|_{\vX_2}$ such that 
      $f (\rho \oplus \rho') |_{\vX_1} = f(\rho)|_{\vX_1}$.
  \end{enumerate}

  \per
  \noindent
  Informally spoken,
  $\vX_1 \stackrel{f,\rho}\rightarrow \vX_2$
  states that
  ---
  locally at $\rho$
  ---
  the values of the variables from the set $\vX_1$ 
  \emph{do not depend} on the values of the variables from the set $\vX_2$.
  Dependencies are only admitted in the opposite direction --- from $\vX_1$ to $\vX_2$.
  
  \begin{example}
    \label{ex:two:dim:feasible:function:dep}
    \per
	  Let us again consider the monotone and order-concave partial operator
	  $f : \R^2 \pto \R^2$ 
	  from Example \ref{ex:two:dim:feasible:function:informal} 
	  defined by
	  $
	    f(x_1,x_2) 
	    := 
	    (x_2 + 1 \wedge 0, \; \sqrt x_1 )
	  $
	  for all $x_1,x_2 \in \R$.
	  Note that $f$ is not a total operator,
	  since $\sqrt {x_1}$ and thus $f(x_1,x_2)$ is undefined for all $x_1 < 0$.
	  Moreover, 
	  let 
	  $x := (0, 0 )$.
	  Recall that we  identify the set $\R^2$ with the set $\{\vx_1, \vx_2\} \to \R$.
	  Especially, we identify $x$ with the function 
	  $\{ \vx_1 \mapsto 0,\; \vx_2 \mapsto 0 \}$. 
	  Then, we have $\{ \vx_1 \} \stackrel{f, x}\rightarrow \{\vx_2\}$.
	  That is,
	  locally at $x$,
	  the first component $f_{1\cdot}$ of $f$ does not depend on the second argument.
	  In other words: 
	  locally at $x$,
	  one can strictly decrease the value of the second argument
	  without changing the value of the first component $f_{1\cdot}$ of $f$. 
	  However, 
	  the second component $f_{2\cdot}$ of $f$ may, locally at $x$, 
	  depend on the first argument.
	  In this example, this is actually the case:
	  Locally at $x$, 
	  we cannot decrease the value of the first argument without changing 
	  the value of the second component $f_{2\cdot}$ of $f$.
    \qed
  \end{example}
  
  \per
  \noindent
  If the partial operator $f$ is
  monotone and order-concave,
  then the statement $\vX_1 \stackrel{f,\rho}\rightarrow \vX_2$ also implies 
  that,
  locally at $\rho$,
  the values of the $\vX_1$-components of $f$ do not \emph{increase}
  if the values of the variables from $\vX_2$ increase:
  
  \begin{lemma}
    \label{l:partition:nach:oben:unabhae}
    \per
    Assume that $f : (\vX\to\R) \pto (\vX\to\R)$ 
    is monotone and order-concave.
    If $\vX_1 \stackrel{f,\rho}{\rightarrow} \vX_2$,
    then 
    $(f(\rho \oplus \rho'))|_{\vX_1} = (f(\rho))|_{\vX_1}$ 
    for all 
    $\rho' : \vX_2 \to \R$
    with
    $\rho' \geq \rho|_{\vX_2}$
    and
    $\rho \oplus \rho' \in \dom(f)$.
    \qed
  \end{lemma}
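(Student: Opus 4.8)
The plan is to reduce to the non-trivial case and then play order-concavity off against monotonicity along one carefully chosen segment. If $\vX_1=\emptyset$ or $\vX_2=\emptyset$ the statement is immediate --- in the second case $\rho'=\rho|_{\vX_2}$ is forced, so $\rho\oplus\rho'=\rho$ --- so I assume that there is a witness $\rho''\colon\vX_2\to\R$ with $\rho\oplus\rho''\in\dom(f)$, $\rho''\ll\rho|_{\vX_2}$ and $f(\rho\oplus\rho'')|_{\vX_1}=f(\rho)|_{\vX_1}$, and I fix an arbitrary $\rho'\colon\vX_2\to\R$ with $\rho'\geq\rho|_{\vX_2}$ and $\rho\oplus\rho'\in\dom(f)$. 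Abbreviate $p_0:=\rho\oplus\rho''$, $p_1:=\rho$ and $p_2:=\rho\oplus\rho'$. These three points agree on $\vX_1$, and $p_0\leq p_1\leq p_2$ because $\rho''\ll\rho|_{\vX_2}\leq\rho'$.

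One inequality is cheap: monotonicity of $f$ gives $f(p_0)\leq f(p_1)\leq f(p_2)$, and restricting to $\vX_1$ and using $f(p_0)|_{\vX_1}=f(p_1)|_{\vX_1}$ yields $f(\rho)|_{\vX_1}\leq f(\rho\oplus\rho')|_{\vX_1}$. For the reverse inequality I would pick the \emph{largest} convex combination of $p_0$ and $p_2$ that still lies below $p_1$: for $\vx\in\vX_2$ set $\lambda_\vx:=(\rho'_\vx-\rho_\vx)/(\rho'_\vx-\rho''_\vx)$ --- the denominator is positive, since $\rho''_\vx<\rho_\vx\leq\rho'_\vx$ --- put $\lambda^*:=\max_{\vx\in\vX_2}\lambda_\vx$ and $q:=\lambda^*p_0+(1-\lambda^*)p_2$. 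A one-line computation shows $\lambda_\vx\in[0,1)$ for every $\vx$, hence $\lambda^*\in[0,1)$; moreover $q$ agrees with $p_1$ on $\vX_1$, and the choice of $\lambda^*$ is exactly what makes $q\leq p_1$ hold on $\vX_2$ as well, so $q\leq p_1$. Since $p_0\leq p_2$ are comparable and $\dom(f)$ is order-convex, $q\in\dom(f)$.

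Now fix $\vx\in\vX_1$. Order-concavity of $f_\vx$ along the segment from $p_0$ to $p_2$ gives $f_\vx(q)\geq\lambda^*f_\vx(p_0)+(1-\lambda^*)f_\vx(p_2)$; monotonicity together with $q\leq p_1$ gives $f_\vx(q)\leq f_\vx(p_1)$; and $f_\vx(p_0)=f_\vx(p_1)$ since $\vx\in\vX_1$. Chaining these, $(1-\lambda^*)\,(f_\vx(p_1)-f_\vx(p_2))\geq 0$, and because $1-\lambda^*>0$ we obtain $f_\vx(p_1)\geq f_\vx(p_2)$. Combined with the cheap inequality this gives $f_\vx(p_1)=f_\vx(p_2)$ for every $\vx\in\vX_1$, i.e.\ $f(\rho)|_{\vX_1}=f(\rho\oplus\rho')|_{\vX_1}$.

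I expect the main obstacle to be the observation that, when $\vX_2$ has more than one element, $\rho|_{\vX_2}$ is in general \emph{not} itself a convex combination of $\rho''|_{\vX_2}$ and $\rho'|_{\vX_2}$, so order-concavity cannot be applied to the triple $p_0,p_1,p_2$ directly; the auxiliary point $q$, chosen as the extreme combination still dominated by $p_1$, together with the monotonicity step $f(q)\leq f(p_1)$, is precisely what repairs this. A minor point worth flagging is that $\dom(f)$ is not assumed upward closed here, which is why the argument must stay inside the segment from $p_0$ to $p_2$ instead of extrapolating past $p_1$.
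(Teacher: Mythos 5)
Your proof is correct. There is nothing in the paper to compare it against: the lemma is stated with the proof omitted, so your argument simply fills that gap, and it does so soundly. All the key steps check out: the trivial cases $\vX_1=\emptyset$ and $\vX_2=\emptyset$; the existence of a witness $\rho''\ll\rho|_{\vX_2}$ with $f(\rho\oplus\rho'')|_{\vX_1}=f(\rho)|_{\vX_1}$ in the remaining case; the cheap inequality $f(\rho)|_{\vX_1}\leq f(\rho\oplus\rho')|_{\vX_1}$ from monotonicity; the choice of $\lambda^*=\max_{\vx\in\vX_2}\,(\rho'(\vx)-\rho(\vx))/(\rho'(\vx)-\rho''(\vx))\in[0,1)$ (denominators are positive because $\rho''\ll\rho|_{\vX_2}\leq\rho'$), which makes $q=\lambda^*(\rho\oplus\rho'')+(1-\lambda^*)(\rho\oplus\rho')$ lie below $\rho$ while agreeing with it on $\vX_1$; the fact that $q\in\dom(f)$ because $\rho\oplus\rho''$ and $\rho\oplus\rho'$ are comparable and $\dom(f)$ is order-convex by the definition of order-concavity; and the final chain combining order-concavity along that comparable pair with monotonicity at $q\leq\rho$, which yields $(1-\lambda^*)\bigl(f_{i\cdot}(\rho)-f_{i\cdot}(\rho\oplus\rho')\bigr)\geq 0$ and hence the reverse inequality since $\lambda^*<1$. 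Your closing remark also pinpoints the only real subtlety: when $\vX_2$ has more than one element, $\rho$ itself is in general not a convex combination of $\rho\oplus\rho''$ and $\rho\oplus\rho'$, so the extreme combination $q$ plus one monotonicity step is exactly what is needed, and staying inside the segment avoids any appeal to upward closedness of $\dom(f)$, which is indeed not assumed here.
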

  
  \per
  \noindent
  For $\vX_1 \dcup \cdots \dcup \vX_k = \vX$,
  we write
  $\vX_1 \stackrel{f,\rho}\rightarrow \cdots \stackrel{f,\rho}\rightarrow \vX_k$
  if and only if 
  $k = 1$ 
  or
  $
    \vX_1 \dcup \cdots \dcup \vX_j 
    \stackrel{f,\rho}\rightarrow
    \vX_{j+1} \dcup \cdots \dcup \vX_k
  $
  for all $j \in \{1,\ldots,k-1\}$.

  \per
  Let $\vX$ and $\D$ be sets,
  $f : (\vX \to \D) \pto (\vX \to \D)$,
  and 
  $\vX_1 \dcup \vX_2 = \vX$.
  For $\rho_2 : \vX_2 \to \D$, 
  we define
  $f \leftarrow \rho_2 : (\vX_1\to\D) \pto (\vX_1\to\D)$
  by 
  \begin{align}
    (f \leftarrow \rho_2) (\rho_1)
    &:=
    (f ( \rho_1 \cup \rho_2 )) |_{\vX_1}
    &&
   \text{for all }
   \rho_1 : \vX_1 \to \D.
  \end{align}

\per
\noindent
Informally spoken,
$f \leftarrow \rho_2$ 
is the function that is obtained from $f$ by fixing the values
of the variables from the set $\vX_2$ according to variable assignment $\rho_2$
and afterwards removing all variables from the set $\vX_2$.

\begin{example}
  \label{ex:two:dim:feasible:function:dep:2}
  \per
  Let us again consider the monotone and order-concave partial operator
  $f : \R^2 \pto \R^2$ 
  from 
  Examples \ref{ex:two:dim:feasible:function:informal} 
  and \ref{ex:two:dim:feasible:function:dep} 
  that is defined by
  $
    f(x_1,x_2) 
    := 
    (x_2 + 1 \wedge 0, \; \sqrt x_1 )
  $
  for all $x_1,x_2 \in \R$.
  Let again
  $x := (0, 0)$
  be identified with 
  $x = \{ \vx_1 \mapsto 0, \; \vx_2 \mapsto 0 \}$.
  Then
  $
    (f \leftarrow x|_{\{\vx_2\}})(\rho_1)
    =
    \{ \vx_1 \mapsto 0 \}
  $
  for all $\rho_1 : \{\vx_1\} \to \R$,
  and
  $
    (f \leftarrow x|_{\{\vx_1\}})(\rho_2)
    =
    \{ \vx_2 \to 0 \}
  $
  for all $\rho_2 : \{\vx_2\} \to \R$.
  \qed
\end{example}

\per
\noindent
The weaker sufficient criterium
for a fixpoint of a monotone and order-concave partial operator 
for being the greatest pre-fixpoint of this partial operator 
can now be formalized as follows:

\begin{definition}[Feasibility]
  \label{def:feas}
  \per
  Let $f : (\vX\to\R) \pto (\vX\to\R)$ 
  be monotone and order-concave.
  A fixpoint $\rho^*$ of $f$ is called \emph{feasible} 
  if and only if
  there exist
  $\vX_1 \dcup\cdots\dcup \vX_k = \vX$ 
  with
  $\vX_1 \stackrel{f,\rho^*}\rightarrow \cdots \stackrel{f,\rho^*}\rightarrow \vX_k$
  such that, for each $j \in \{1,\ldots,k\}$,
      there exists some pre-fixpoint $\rho : \vX_j\to\R$
      of $f \leftarrow \rho^*|_{\vX \setminus \vX_j}$
      with $\rho \ll \rho^*|_{\vX_j}$
      such that 
      $\mu_{\geq\rho} (f \leftarrow \rho^*|_{\vX \setminus \vX_j}) = \rho^*|_{\vX_j}$.
  \qed
\end{definition}

\begin{example}
  \label{ex:two:dim:feasible:function}
  \per
  Let us again consider the monotone and order-concave partial operator
  $f : \R^2 \pto \R^2$ from the
  Examples \ref{ex:two:dim:feasible:function:informal},
  \ref{ex:two:dim:feasible:function:dep}, and 
  \ref{ex:two:dim:feasible:function:dep:2} that is
  defined by
  $
    f(x_1,x_2) 
    := 
    (x_2 + 1 \wedge 0, \; \sqrt x_1 )
  $
  for all $x_1,x_2 \in \R$.
  We show that 
  $x := (0, 0 )$ 
  is a feasible fixpoint of $f$.
  From Example \ref{ex:two:dim:feasible:function:informal}, we know that Lemma \ref{l:element:von:partition:2} is not applicable
  to prove that $x$ is the greatest pre-fixpoint.
  Recall that we can identify the set $\R^2$ with the set $\{\vx_1, \vx_2\} \to \R$, and hence 
  $x$ with $\{\vx_1\mapsto 0 ,\; \vx_2\mapsto 0 \}$.
  We have 
  $\{ \vx_1 \} \stackrel{f, x}\rightarrow \{ \vx_2 \}$.
  Moreover, 
  $
    \{ \vx_1 \mapsto -1 \}
    \ll
    x|_{\{\vx_1\}}
  $ 
  is a pre-fixpoint of 
  $f \leftarrow x|_{\{\vx_2\}}$ 
  with
  $ 
    \mu_{\geq \{ \vx_1 \mapsto -1 \}} (f \leftarrow x|_{\{\vx_2\}}) 
    =
    x|_{\{\vx_1\}}
    ,
  $ 
  and
  $
    \{ \vx_2 \mapsto -1 \} 
    \ll 
    x|_{\{\vx_2\}}
  $
  is a pre-fixpoint of $f \leftarrow x|_{\{\vx_1\}}$ with
  $ 
    \mu_{\geq \{ \vx_2 \mapsto -1 \}} (f \leftarrow x|_{\{\vx_1\}}) 
    =
    x|_{\{\vx_2\}}
  $. 
  Thus, $x$ is a feasible fixpoint of $f$.
  \qed  
\end{example}

\per
\noindent
We now show that feasibility is indeed 
sufficient for a fixpoint to be the greatest pre-fixpoint.
Since any fixpoint that fulfills the criterium given by Lemma 
\ref{l:element:von:partition:2}
is feasible, but, as the Examples \ref{ex:two:dim:feasible:function:informal} 
and \ref{ex:two:dim:feasible:function}
show, not vice-versa,
the following lemma is a strict generalization of Lemma
\ref{l:element:von:partition:2}.

\begin{lemma}
  \label{l:feasible:is:greatest}
  \per
  Let $f : (\vX\to\R) \pto (\vX\to\R)$ 
  be monotone and order-concave
  with $\dom(f)$ upward closed, and
  $\rho^*$ be a feasible 
  fixpoint of $f$.
  Then, $\rho^*$ is the greatest 
  pre-fixpoint of $f$.
\end{lemma}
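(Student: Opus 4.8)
The plan is to reduce Lemma~\ref{l:feasible:is:greatest} to Lemma~\ref{l:element:von:partition:2} by peeling off the partition blocks $\vX_1,\ldots,\vX_k$ one at a time, from the last block backwards. The key observation is that once we know the $\vX_k$-components of any pre-fixpoint are bounded by $\rho^*|_{\vX_k}$, we may substitute $\rho^*|_{\vX_k}$ for those variables and work with the smaller operator $f \leftarrow \rho^*|_{\vX\setminus(\vX_1\dcup\cdots\dcup\vX_{k-1})}$, which by Lemma~\ref{l:mon:conc:circ} (and the fact that fixing variables preserves monotonicity and order-concavity, with upward-closed domain) is again a monotone order-concave operator with upward-closed domain, and which is again feasible with partition $\vX_1,\ldots,\vX_{k-1}$. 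An induction on $k$ then finishes the argument.

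First I would handle the base case: if $\rho^*$ is feasible with a one-block partition, then $\vX_1 = \vX$ and there is a pre-fixpoint $\rho \ll \rho^*$ of $f = f\leftarrow\rho^*|_\emptyset$ with $\mu_{\geq\rho} f = \rho^*$, so Lemma~\ref{l:element:von:partition:2} applies directly. For the inductive step, let $\rho^*$ be feasible with partition $\vX_1,\ldots,\vX_k$, $k\geq 2$, and let $\rho'$ be an arbitrary pre-fixpoint of $f$. Set $\vY := \vX_1\dcup\cdots\dcup\vX_{k-1}$, so $\vY\stackrel{f,\rho^*}{\rightarrow}\vX_k$. The first claim to establish is $\rho'|_{\vX_k} \leq \rho^*|_{\vX_k}$: consider $g := f\leftarrow\rho^*|_{\vY} : (\vX_k\to\R)\pto(\vX_k\to\R)$, which is monotone and order-concave with upward-closed domain; by feasibility there is a pre-fixpoint $\rho \ll \rho^*|_{\vX_k}$ of $g$ with $\mu_{\geq\rho} g = \rho^*|_{\vX_k}$, so by Lemma~\ref{l:element:von:partition:2} the element $\rho^*|_{\vX_k}$ is the greatest pre-fixpoint of $g$. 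It then remains to check that $\rho'|_{\vX_k}$ is a pre-fixpoint of $g$; here I would use that $\rho'|_{\vY} \leq \rho^*|_{\vY}$ is not yet known, so instead I would feed in the pointwise join: from $\rho'$ a pre-fixpoint of $f$ one gets, using monotonicity, that $\rho^* \vee \rho'$ restricted appropriately behaves well, or more cleanly, apply Lemma~\ref{l:partition:nach:oben:unabhae} together with $\vY\stackrel{f,\rho^*}{\rightarrow}\vX_k$ to control the $\vY$-coordinates — this is the delicate point, so let me make it explicit below.

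The cleaner route for the $\vX_k$-bound: let $\sigma := \rho^*|_{\vY} \cup (\rho^*|_{\vX_k}\vee\rho'|_{\vX_k})$. Since $\dom(f)$ is upward closed, $\sigma\in\dom(f)$. By Lemma~\ref{l:partition:nach:oben:unabhae} (applied with $\rho^* $ and the increased $\vX_k$-part), $f(\sigma)|_{\vY} = f(\rho^*)|_{\vY} = \rho^*|_{\vY}$, and since $f$ is monotone, $f(\sigma)|_{\vX_k} \geq f(\rho')|_{\vX_k} \geq \rho'|_{\vX_k}$ because $\sigma \geq \rho^*\vee\rho' \geq \rho'$; hence $f(\sigma)|_{\vX_k} \geq \rho^*|_{\vX_k}\vee\rho'|_{\vX_k}$ would need $f(\sigma)|_{\vX_k}\geq\rho^*|_{\vX_k}$ as well, which holds since $f(\sigma)\geq f(\rho^*)=\rho^*$. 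So $\sigma|_{\vX_k}=\rho^*|_{\vX_k}\vee\rho'|_{\vX_k}$ is a pre-fixpoint of $g=f\leftarrow\rho^*|_{\vY}$, giving $\rho^*|_{\vX_k}\vee\rho'|_{\vX_k}\leq\rho^*|_{\vX_k}$, i.e. $\rho'|_{\vX_k}\leq\rho^*|_{\vX_k}$. Now substitute: $h := f\leftarrow\rho^*|_{\vX_k}$ is monotone and order-concave with upward-closed domain on $\vY\to\R$, it is feasible with partition $\vX_1,\ldots,\vX_{k-1}$ (the dependency relations and the block-wise $\mu_{\geq\rho}$ conditions are inherited because $\vX_k$ is fixed to its $\rho^*$-value throughout), and $\rho'|_{\vY}$ is a pre-fixpoint of $h$ since $f(\rho')|_{\vY}\geq f(\rho'|_{\vY}\cup\rho^*|_{\vX_k})|_{\vY}=h(\rho'|_{\vY})$ — wait, the inequality goes the wrong way, so I would instead argue $\rho'|_{\vY}\cup\rho^*|_{\vX_k}\geq\rho'$ and apply monotonicity: $h(\rho'|_{\vY})=f(\rho'|_{\vY}\cup\rho^*|_{\vX_k})|_{\vY}\geq f(\rho')|_{\vY}\geq\rho'|_{\vY}$. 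By the induction hypothesis applied to $h$, $\rho'|_{\vY}\leq\rho^*|_{\vY}$, and combined with $\rho'|_{\vX_k}\leq\rho^*|_{\vX_k}$ this yields $\rho'\leq\rho^*$, as desired.

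The main obstacle I anticipate is the bookkeeping around domains and the directions of the monotonicity inequalities when passing between $f$, $f\leftarrow\rho^*|_{\vY}$, and $f\leftarrow\rho^*|_{\vX_k}$: one must repeatedly invoke upward-closedness of $\dom(f)$ to ensure the substituted points lie in the domain, and carefully use Lemma~\ref{l:partition:nach:oben:unabhae} rather than bare monotonicity to keep the $\vY$-coordinates pinned at $\rho^*|_{\vY}$ while the $\vX_k$-coordinates are raised. Verifying that feasibility of $\rho^*$ for $f$ descends to feasibility of $f\leftarrow\rho^*|_{\vX_k}$ for the partition $\vX_1,\ldots,\vX_{k-1}$ is routine but needs the remark that $(f\leftarrow\rho^*|_{\vX_k})\leftarrow\rho^*|_{(\vX_{j+1}\dcup\cdots\dcup\vX_{k-1})}$ coincides with $f\leftarrow\rho^*|_{\vX\setminus\vX_j}$, so the block-wise conditions transfer verbatim.
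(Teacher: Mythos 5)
Your overall plan---reduce to Lemma~\ref{l:element:von:partition:2} block by block---is the right idea, but you peel the partition from the wrong end, and the proof of your very first claim is where this breaks. The relation $\vX_1 \dcup\cdots\dcup \vX_{k-1} \stackrel{f,\rho^*}\rightarrow \vX_k$ only says that, locally at $\rho^*$, the components of $f$ belonging to the first $k-1$ blocks are insensitive to the $\vX_k$-variables; Lemma~\ref{l:partition:nach:oben:unabhae} pins exactly those components and nothing else. The $\vX_k$-components of $f$ may genuinely (and monotonically) depend on the earlier blocks. Consequently you cannot bound $\rho'|_{\vX_k}$ before the earlier blocks are under control: to exhibit $\rho^*|_{\vX_k}\vee\rho'|_{\vX_k}$ as a pre-fixpoint of $g=f\leftarrow\rho^*|_{\vX_1\dcup\cdots\dcup\vX_{k-1}}$ you must lower the earlier coordinates of $\rho'$ down to $\rho^*$, and by monotonicity this can strictly decrease the $\vX_k$-components of $f$. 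Your explicit justification, ``$f(\sigma)|_{\vX_k}\geq f(\rho')|_{\vX_k}$ because $\sigma\geq\rho^*\vee\rho'\geq\rho'$'', is false: on $\vX_1\dcup\cdots\dcup\vX_{k-1}$ the point $\sigma$ agrees with $\rho^*$, not with $\rho^*\vee\rho'$, and $\rho^*\geq\rho'$ on those blocks is precisely what is not yet known (it is part of the conclusion), so the argument is circular. Lemma~\ref{l:partition:nach:oben:unabhae} cannot repair this, since it never controls the $\vX_k$-components.

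The correct order is the opposite one, which is how the paper argues: first replace an arbitrary pre-fixpoint $\rho''$ by $\rho':=\rho^*\vee\rho''$, a pre-fixpoint with $\rho'\geq\rho^*$; then show $\rho'|_{\vX_1}=\rho^*|_{\vX_1}$, which works because Lemma~\ref{l:partition:nach:oben:unabhae} (with $\vX_1\stackrel{f,\rho^*}\rightarrow\vX_2\dcup\cdots\dcup\vX_k$) shows that raising the variables outside $\vX_1$ from $\rho^*$ to $\rho'$ leaves the $\vX_1$-components of $f$ unchanged, so $\rho^*|_{\vX_1}$ is still a fixpoint of $f\leftarrow\rho'|_{\vX\setminus\vX_1}$ with $\mu_{\geq\rho_1}(f\leftarrow\rho'|_{\vX\setminus\vX_1})=\rho^*|_{\vX_1}$, and Lemma~\ref{l:element:von:partition:2} applies; then one proceeds to $\vX_2,\ldots,\vX_k$ using the equalities already established on the earlier blocks. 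Your second half (substituting the fixed values and inheriting feasibility via the identity $(f\leftarrow\rho^*|_{\vX_k})\leftarrow\rho^*|_{\vX_{j+1}\dcup\cdots\dcup\vX_{k-1}}=f\leftarrow\rho^*|_{\vX\setminus\vX_j}$) is unobjectionable, but it only becomes usable once the peeling is performed front-to-back; as written, the bound on the last block has no valid proof.
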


\begin{proof}
  \per
  Since $\rho^*$ is a feasible 
  fixpoint of $f$,
  there exists 
  $\vX_1\dcup\cdots\dcup \vX_k = \vX$
  with 
  $\vX_1 \stackrel{f,\rho^*}\rightarrow \cdots \stackrel{f,\rho^*}\rightarrow \vX_k$
  such that, for each $j \in \{1,\ldots,k\}$,
  there exists some pre-fixpoint 
  $\rho_j$ 
  of $f \leftarrow \rho^*|_{\vX \setminus \vX_j}$
  with $\rho_j \ll \rho^*|_{\vX_j}$
  and
  $\mu_{\geq\rho_j} (f \leftarrow \rho^*|_{\vX \setminus \vX_j}) = \rho^*|_{\vX_j}$.
  Let $\rho'$ be a pre-fixpoint of $f$ with $\rho' \geq \rho^*$
  (it is sufficient to consider this case, since the statement that $\rho''$ is a pre-fixpoint of $f$ 
  implies that $\rho' := \rho^* \vee \rho'' \geq \rho^*$ is also a pre-fixpoint of $f$).
  We show by induction on $j$
  that
  $
    \rho'|_{\vX_{1}\dcup\cdots\dcup \vX_{j}} 
    = 
    \rho^*|_{\vX_{1} \dcup\cdots\dcup \vX_{j}}
  $
  for all 
  $j \in \{1,\ldots,k\}$.
  
  \per
  Firstly, assume that $j = 1$.
      Since 
      $\vX_1 \stackrel{f,\rho^*}\rightarrow \vX_2 \dcup\cdots\dcup \vX_k$,
      Lemma \ref{l:partition:nach:oben:unabhae} gives us 
      $
        \rho^*|_{\vX_1}
        =
        (f(\rho^*))|_{\vX_1}
        =
        (f \leftarrow \rho^*|_{\vX\setminus \vX_1})(\rho^*|_{\vX_1})
        =
        (f \leftarrow \rho'|_{\vX\setminus \vX_1})(\rho^*|_{\vX_1})
      $.
      Using the monotonicity we thus get
      $\mu_{\geq\rho_1} (f \leftarrow \rho'|_{\vX \setminus \vX_1}) = \rho^*|_{\vX_1}$.
      Hence,
      Lemma \ref{l:element:von:partition:2}
      gives us
      that 
      $\rho^*|_{\vX_1}$ 
      is the greatest pre-fixpoint of 
      $f \leftarrow \rho'|_{\vX \setminus \vX_1}$.
      Thus, $\rho'|_{\vX_1} = \rho^*|_{\vX_1}$.

  \per      
  Now, assume that $j \in \{2,\ldots,k\}$
  and 
  $
    \rho'|_{\vX_{1} \dcup\cdots\dcup \vX_{j-1}}
    =
    \rho^*|_{\vX_{1} \dcup\cdots\dcup \vX_{j-1}}
  $.
  It remains to show that 
  $
    \rho'|_{\vX_{j}}
    =
    \rho^*|_{\vX_{j}}
  $.
      Since 
      $
        \vX_1 \dcup\cdots\dcup \vX_{j}
        \stackrel{f,\rho^*}\rightarrow 
        \vX_{j+1} \dcup\cdots\dcup \vX_k
      $
      and 
  $
    \rho'|_{\vX_{1} \dcup\cdots\dcup \vX_{j-1}}
    =
    \rho^*|_{\vX_{1} \dcup\cdots\dcup \vX_{j-1}}
  $,
      Lemma \ref{l:partition:nach:oben:unabhae} gives us that
      $
        \rho^*|_{\vX_j}
        =
        (f(\rho^*))|_{\vX_j}
        =
        (f \leftarrow \rho^*|_{\vX\setminus \vX_j})(\rho^*|_{\vX_j})
        =
        (f \leftarrow \rho'|_{\vX\setminus \vX_j})(\rho^*|_{\vX_j})
      $.
      By monotonicity, we thus get
      $\mu_{\geq\rho_j} (f \leftarrow \rho'|_{\vX \setminus \vX_j}) = \rho^*|_{\vX_j}$.
      Hence, 
      Lemma \ref{l:element:von:partition:2}
      gives us
      that 
      $\rho^*|_{\vX_j}$ 
      is the greatest pre-fixpoint of 
      $(f \leftarrow \rho'|_{\vX \setminus \vX_j})$.
      Hence $\rho'|_{\vX_j} = \rho^*|_{\vX_j}$.
      Thus,
      we get 
      $
        \rho'|_{\vX_{1} \dcup\cdots\dcup \vX_{j}}
        =
        \rho^*|_{\vX_{1} \dcup\cdots\dcup \vX_{j}}
      $.
  \qed
\end{proof}


\subsection{\Morcave Operators on $\CR^n$}

\per
\noindent
We now study total operators on $\CR$ that are monotone and order-concave.
For that, we firstly extend the notion of order-concavity that is defined for
partial operators on $\R$ to total operators on $\CR$.
Before doing so, we start with the following observation:

\begin{lemma}
  \per
  \label{l:f:mon:dann:dom:order-concave}
  Let $f : \CR^n \to \CR^m$ be monotone.
  Then, $\fdom(f)$ is order-convex.
\end{lemma}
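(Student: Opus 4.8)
The plan is to unwind the definitions and then exploit monotonicity to sandwich function values between two finite vectors. First I would note that, since $f$ is total, $\dom(f) = \CR^n$, so $\fdom(f)$ is simply the set of those $x \in \R^n$ with $f(x) \in \R^m$. To verify order-convexity, fix comparable points $x, y \in \fdom(f)$ and $\lambda \in [0,1]$; without loss of generality assume $x \leq y$, and set $z := \lambda x + (1-\lambda) y$.

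The key observation is that $z$ lies componentwise between $x$ and $y$: indeed $z - x = (1-\lambda)(y-x) \geq 0$ and $y - z = \lambda (y-x) \geq 0$, so $x \leq z \leq y$. Since $x, y \in \R^n$ and $\lambda \in [0,1]$, the convex combination $z$ is again a finite vector, hence $z \in \R^n$. Applying monotonicity of $f$ to $x \leq z \leq y$ yields $f(x) \leq f(z) \leq f(y)$ in $\CR^m$.

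Finally, since $x, y \in \fdom(f)$, both $f(x)$ and $f(y)$ lie in $\R^m$, i.e., are finite. An element of $\CR^m$ that is bounded below by the finite vector $f(x)$ and above by the finite vector $f(y)$ is itself finite (componentwise, any element of $\CR$ squeezed between two reals is a real, as $\CR$ is linearly ordered), so $f(z) \in \R^m$. Therefore $z \in \fdom(f)$, which is exactly the condition required for order-convexity of $\fdom(f)$. I do not expect a genuine obstacle here; the only point that deserves a moment's care is the passage from the sandwich $f(x) \leq f(z) \leq f(y)$ to finiteness of $f(z)$, and this is immediate from the linear order on $\CR$.
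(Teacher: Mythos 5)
Your proof is correct and follows essentially the same route as the paper: monotonicity applied to the sandwich $x \leq \lambda x + (1-\lambda)y \leq y$ gives $f(x) \leq f(\lambda x + (1-\lambda)y) \leq f(y)$, and finiteness of the two outer values forces finiteness of the middle one. You merely spell out the intermediate steps (the componentwise sandwich and $z \in \R^n$) that the paper leaves implicit.
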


\begin{proof}
  \per
  Let $x, y \in \fdom(f)$ with $x \leq y$ and $\lambda \in [0,1]$.
  Because of the monotonicity of $f$, 
  we get $\neginfty < f(x) \leq f(\lambda x + (1-\lambda) y) \leq f(y) < \infty$.
  Hence, $\lambda x + (1-\lambda) y \in \fdom(f)$.
  This proves the statement.
  \qed 
\end{proof}

\per
\noindent
We extend the notion of (order-)convexity/(order-)concavity from $\R^n \pto \R$ to
$\CR^n \to \CR$ as follows:
let $f : \CR^n \to \CR$, 
and $I : \{ 1,\ldots,n \} \to \{ \neginfty, \mathsf{id}, \infty \}$
be a mapping.
Here, 
$\neginfty$ denotes the function that assigns $\neginfty$ to every argument,
$\mathsf{id}$ denotes the identity function,
and $\infty$ denotes the function that assigns $\infty$ to every argument.
We define the mapping $f^{(I)} : \CR^{n} \to \CR$ by 
\begin{align}
  f^{(I)} (x) 
  &:= f(I(1)(x_{1\cdot}),\ldots,I(n)(x_{n\cdot})) 
  &&
  \text{for all } x \in \CR^n
  .
\end{align}

\per
\noindent
  A function $f : \CR^n \to \CR$ is called \emph{(order-)concave}
  if and only if
  the following conditions are fulfilled 
  for all mappings $I : \{ 1,\ldots,n \} \to \{ \neginfty, \mathsf{id}, \infty \}$:
  \begin{enumerate}
    \item
      $\fdom(f^{(I)})$ is (order-)convex.
    \item
      $f^{(I)}|_{\fdom(f^{(I)})}$ is (order-)concave.
    \item
      If 
      $\fdom(f^{(I)}) \neq \emptyset$,
      then
      $f^{(I)}(x) < \infty$ for all $x \in \R^n$.
  \end{enumerate}

\per
\noindent
Note that, by Lemma \ref{l:f:mon:dann:dom:order-concave},
condition 1 is fulfilled for every monotone function $f : \CR^n\to\CR$ and 
every mapping $I : \{ 1,\ldots,n \} \to \{ \neginfty, \mathsf{id}, \infty \}$.
A monotone operator is order-concave if and only if the following conditions are fulfilled 
for all mappings $I : \{1,\ldots,n\} \to \{\neginfty,\mathsf{id},\infty\}$:
\begin{enumerate}
  \item
      $\fdom(f^{(I)})$ is upward closed w.r.t.\ $\R^n$.
  \item
      $f^{(I)}|_{\fdom(f^{(I)})}$ is order-concave.    
\end{enumerate}

\per
\noindent
In order to get more familiar with the above definition,
we consider a few examples of order-concave operators on $\CR$:

\begin{example}
  \per
  We consider the operators 
  $f : \CR^2\to\CR$ 
  and 
  $g : \CR^2\to\CR$   
  that are defined by
  \begin{align}
    f (x_1, x_2)
    &:=
    \sqrt {x_1} 
    , 
    &
    g (x_1, x_2)
    &:=
    \begin{cases}
    \sqrt {x_1} & \text{if } x_2 < \infty \\
    x_1^2 & \text{if } x_2 = \infty
    \end{cases}
    &&
    \text{for all } x_1,x_2 \in \CR
    .
  \end{align}
  
  \per
  \noindent
  Then,
  $f|_{\R^2} = g|_{\R^2} = \{ (x_1,x_2) \mapsto \sqrt{x_1} \mid x_1, x_2 \in \R \}$ 
  is a 
  monotone and concave operator on the convex set 
  $\fdom(f) = \fdom(g) = \Rp \times \R$.
  Nevertheless,
  $f$ is monotone and order-concave 
  whereas $g$ is neither monotone nor order-concave.
  In order to show that $g$ is not order-concave,
  let $I : \{1,2\} \to \{\neginfty,\mathsf{id},\infty\}$ be defined by
  $I(1) = \mathsf{id}$ and $I(2) = \infty$.
  Then, $g^{(I)}(x_1,x_2) = x_1^2$ for all $x_1,x_2 \in \CR$.
  Hence, $\fdom(g^{(I)}) = \R^2$.
  Obviously,
  $g^{(I)}|_{\R^2}$ is not order-concave.
  Therefore, $g$ is not order-concave.
  
  \per
  Another example for a monotone and order-concave 
  operator is the function
  $h : \CR^2\to\CR$ 
  defined by
  \begin{align}
    h (x_1, x_2)
    &=
    \begin{cases}
    \sqrt {x_1} & \text{if } x_2 < \infty \\
    \sqrt {x_1} + 1 & \text{if } x_2 = \infty
    \end{cases}
    &&
    \text{for all } x_1,x_2 \in \CR
    .
  \end{align}  
  
  \per
  \noindent
  Although $h$ is an order-concave operator on $\CR$,
  it is not upward-chain-continuous, 
  since,
  for $C = \{ (0,i) \mid i \in \R \}$,
  we have
  $h(\bigvee C) = h(0,\infty) = 1 > 0 = \bigvee \{ 0 \} = \bigvee h(C)$.
  We study different classes of monotone and order-concave functions
  in the remainder of this article.
  \qed
\end{example}

\per
\noindent
A mapping $f : \CR^n \to \CR^m$ is called 
\emph{(order-)concave}
if and only if $f_{i\cdot}$ is (order-)concave for all $i \in \{1,\ldots,m\}$.
  A mapping $f : \CR^n \to \CR^m$ is called 
  \emph{(order-)convex} 
  if and only if
  $-f$ is (order-)concave.

\per
One property we expect from the set of all order-concave functions 
from $\CR^n$ in $\CR^m$ is that 
it is closed under the point-wise infimum operation.
This is indeed the case:

\begin{lemma}
  \per
  Let $\mathcal F$ be a set of (order-)concave functions from $\CR^n$ in $\CR^m$.
  The function $g : \CR^n\to\CR^m$ defined by
  $g(x) := \bigwedge \{ f(x) \mid f \in \mathcal F \}$ for all $x \in \CR^n$
  is (order-)concave.
\end{lemma}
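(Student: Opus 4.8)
\noindent
The plan is to reduce to the scalar case $m=1$ and then verify, for an arbitrary
$I$, the three defining conditions of (order-)concavity for $g^{(I)}$; the only
genuine work is the bookkeeping of the infinities $\pm\infty$.

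To reduce to $m=1$: since $\bigwedge$ on $\CR^m$ is componentwise,
$g_{i\cdot}(x) = \bigwedge\{f_{i\cdot}(x) \mid f \in \mathcal F\}$, and each $f_{i\cdot}$
is (order-)concave by definition of $f$ being (order-)concave, so it suffices to treat
functions into $\CR$. Fixing a map
$I : \{1,\ldots,n\} \to \{\neginfty,\mathsf{id},\infty\}$ and writing $h := g^{(I)}$ and
$h_f := f^{(I)}$, I would first note that substituting $I$ into the arguments commutes
with the infimum, i.e.\ $h = \bigwedge_f h_f$, so that it remains to verify
conditions~1--3 for $h$, knowing that every $h_f$ satisfies conditions~1--3 and ---
since the operators considered in this section are monotone and $f^{(I)}$ is a
coordinatewise composition of $f$ with monotone maps --- that every $h_f$ is monotone.
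The pointwise bound $h \le h_f$ is used throughout. Condition~3 for $h$ is then
immediate: if $\fdom(h) \ne \emptyset$, pick $x_0 \in \fdom(h)$; no $h_f(x_0)$ equals
$-\infty$ (else $h(x_0) = -\infty$), and since the infimum is finite some $h_{f_0}$ has
$h_{f_0}(x_0) \in \R$, so $\fdom(h_{f_0}) \ne \emptyset$, and condition~3 for $f_0$
gives $h(x) \le h_{f_0}(x) < \infty$ for all $x \in \R^n$.

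For conditions~1 and~2 I would record the key observation that, for every
$x \in \fdom(h)$ and every $f$, either $\fdom(h_f) \ne \emptyset$, in which case
$x \in \fdom(h_f)$ (since $h_f(x) \ge h(x) > -\infty$ while condition~3 for $f$ gives
$h_f(x) < \infty$), or $\fdom(h_f) = \emptyset$, in which case $h_f(x) = +\infty$ (its
value lies in $\{\pm\infty\}$ and is $\ge h(x) > -\infty$). Now take comparable
$x, y \in \fdom(h)$, $\lambda \in [0,1]$, and $z := \lambda x + (1-\lambda)y$; without
loss of generality $x \le y$, so $x \le z \le y$. Some $f$ has $\fdom(h_f) \ne \emptyset$
(otherwise $h(x) = +\infty$); for every such $f$ we get $x, y \in \fdom(h_f)$, hence
$z \in \fdom(h_f)$ by (order-)convexity of $\fdom(h_f)$, and then (order-)concavity of
$h_f$ on $\fdom(h_f)$ together with $h_f \ge h$ gives
$h_f(z) \ge \lambda h_f(x) + (1-\lambda) h_f(y) \ge \lambda h(x) + (1-\lambda) h(y) =: c \in \R$.
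For every $f$ with $\fdom(h_f) = \emptyset$ we have $h_f(x) = h_f(y) = +\infty$, so
monotonicity gives $h_f(z) = +\infty \ge c$. Hence $c \le \bigwedge_f h_f(z) = h(z)$,
and $h(z) < \infty$ by the previous paragraph, so $h(z) \in \R$, i.e.\ $z \in \fdom(h)$;
this is condition~1. Moreover
$h(z) = \bigwedge_f h_f(z) \ge c = \lambda h(x) + (1-\lambda)h(y)$, which is condition~2.
The fully convex case goes through verbatim, simply deleting the word ``comparable''.

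The hard part is exactly the management of $\pm\infty$ in the last paragraph: one must
exclude that some $h_f$ pulls the infimum down to $-\infty$ at the interpolation point
$z$. Condition~3 (no value $-\infty$ once the finite domain is nonempty) handles the
non-degenerate functions, and monotonicity (which propagates the value $+\infty$ from
$x$ and $y$ to $z$) handles the degenerate ones; a small non-monotone example shows
that without both of these the infimum of (order-)concave functions need not be
(order-)concave, so invoking monotonicity here is genuinely necessary.
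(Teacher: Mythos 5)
Your argument is correct and complete for a \emph{variant} of the statement in which every $f \in \mathcal F$ is additionally assumed to be monotone; but that hypothesis is not part of the lemma, and the step where you invoke it is exactly the step that cannot be recovered from (order-)concavity alone. For an $f$ with $\fdom(f^{(I)}) = \emptyset$ all three defining conditions hold trivially, so nothing constrains how the values $\neginfty$ and $\infty$ are distributed over $\R^n$: the function $f : \CR \to \CR$ with $f(x) = \infty$ for $x \leq 0$ or $x \geq 2$ and $f(x) = \neginfty$ for $0 < x < 2$ satisfies the definition of (order-)concavity, yet its pointwise minimum with the constant function $1$ has finite domain $(-\infty,0] \cup [2,\infty)$, which is not order-convex. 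So without monotonicity the degenerate members of $\mathcal F$ really can pull the infimum down to $\neginfty$ at the interpolation point, and no bookkeeping repairs that step. Your justification ``the operators considered in this section are monotone'' does not match the lemma, which quantifies over arbitrary (order-)concave functions and precedes the introduction of the class of monotone and order-concave (\morcave) functions. Since the paper's own proof records only the case $\mathcal F = \emptyset$ and otherwise calls the verification straightforward, there is no detailed route to compare against; but your analysis of the $\fdom(f^{(I)}) = \emptyset$ functions in fact shows that the unqualified statement needs monotonicity or some other hypothesis excluding the $\{\pm\infty\}$-valued functions, and this should be flagged explicitly rather than imported silently.

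Second, the claim that the fully convex case ``goes through verbatim, simply deleting the word comparable'' is wrong, and again the degenerate functions are the culprit. For $f$ with $\fdom(f^{(I)}) = \emptyset$ you propagate the value $\infty$ from $x$ to $z$ via $x \leq z$, which is available only because $x$ and $y$ are comparable; for incomparable $x,y$ monotonicity only yields that $\{w \in \R^n \mid f^{(I)}(w) = \infty\}$ is upward closed, hence order-convex but in general not convex. Indeed $f : \CR^2 \to \CR$ with $f(w) = \infty$ if $w_1 \geq 0$ or $w_2 \geq 0$ and $f(w) = \neginfty$ otherwise is monotone and satisfies the definition of concavity trivially, while its minimum with the constant $1$ has the non-convex finite domain $\{w \in \R^2 \mid w_1 \geq 0 \text{ or } w_2 \geq 0\}$; so the concave half of the conclusion fails even under your added monotonicity assumption, and that part of your proof cannot be fixed. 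What your argument does establish cleanly is closure of the monotone and order-concave functions under pointwise infimum --- the form in which the lemma is actually used later --- with the non-degenerate members handled by conditions 1--3 and the degenerate ones by upward-closedness along the chain $x \leq z \leq y$.
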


\begin{proof}
  \per
  The statement can be proven straightforwardly.
  Note that $g(x) = (\infty,\ldots, \allowbreak \infty)$ for all $x \in \CR^n$
  if $\mathcal F = \emptyset$.
  In this case, $g$ is concave.
  \qed
\end{proof}

\per
\noindent
Monotone and order-concave functions play a central role in the 
remainder of this article.
For the sake of simplicity,
we give names to important classes of monotone and order-concave
functions:

\begin{definition}[\Morcave, \Mcave, \Cmorcave, and \Cmcave Functions]
  \per
  A mapping $f : \CR^n\to\CR^m$ is called 
  \emph\morcave if and only if it is monotone and order-concave.
  It is called \emph\mcave if and only if it is monotone and concave.
  It is called \emph\cmorcave (resp.\ \emph{\cmcave}\!\!\!\,)
  if and only if it is \morcave (resp.\ \mcave)
  and 
  $f^{(I)}_{i\cdot}$ 
  is upward-chain-continuous on 
  $\{ x \in \CR^n \mid f^{(I)}_{i\cdot}(x) > \neginfty \}$
  for all $I : \{1,\ldots,n\} \to \{\neginfty,\mathsf{id},\infty\}$ and all $i \in \{1,\ldots,n\}$.
  \qed
\end{definition}

\begin{example}
  \per
  Figure \ref{fig:mon_o_conc} shows the graph of a  
  \morcave function $f : \CR^2 \to \CR$.
  \qed
\end{example}

\begin{figure}
  \centering\scalebox{0.75}{\includegraphics{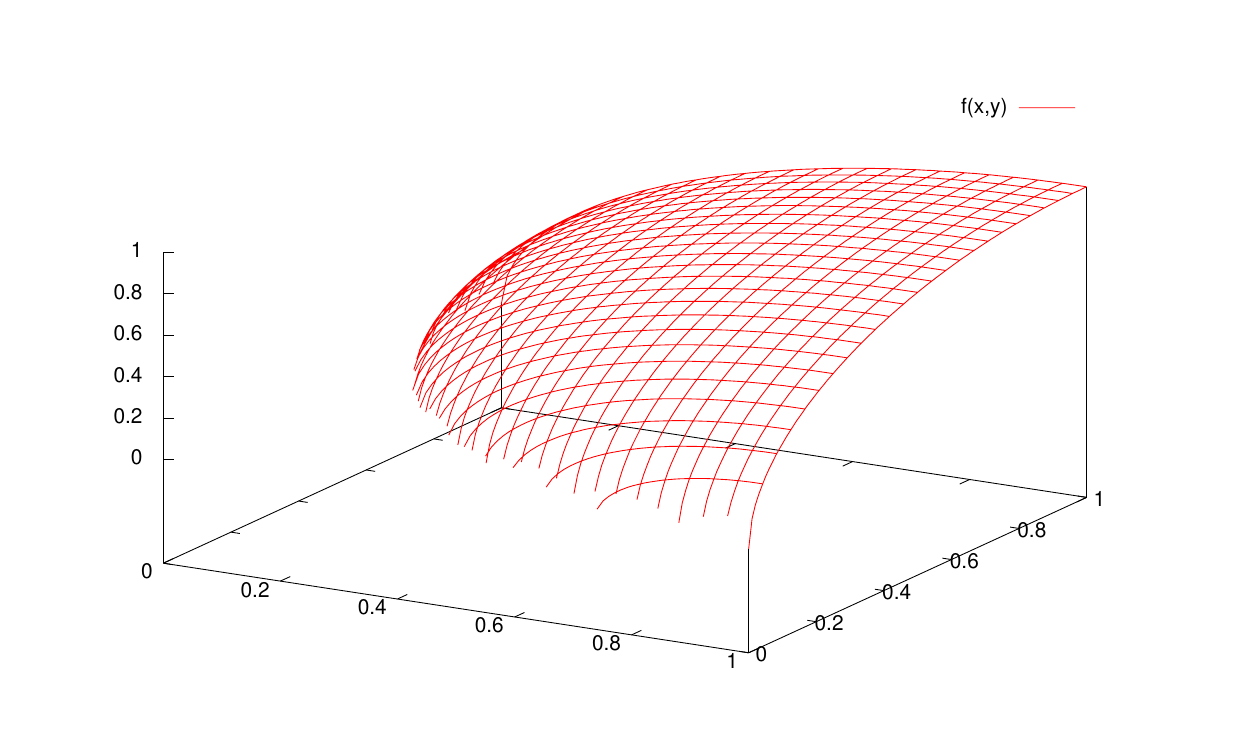}}
%
  \caption{Graph of a \morcave operator $f : \CR^2 \to \CR$. }
  \label{fig:mon_o_conc}
\end{figure}


\per
\noindent
An important \cmcave operator 
for our applications is the operator $\wedge$ on $\CR^n$:

\begin{lemma}
  \label{l:elementare:conv:funcs:1}
  \per
  The operator $\vee$ on $\CR^n$ is monotone and convex, but not order-concave.
  The operator $\wedge$ on $\CR^n$ is \cmcave, but not order-convex.
  \qed
\end{lemma}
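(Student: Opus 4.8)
The plan is to check each of the four claims separately, since each component $f_{i\cdot}$ of $\vee$ and of $\wedge$ equals $\max$ or $\min$ applied to the $n$ coordinates, it suffices to analyse the scalar operators $\max,\min : \CR^n \to \CR$ and invoke the coordinate-wise characterisations already in the excerpt. For monotonicity, both $\max$ and $\min$ are clearly monotone in each argument, hence in the product order. For convexity of $\vee$: I would fix a mapping $I : \{1,\ldots,n\} \to \{\neginfty,\mathsf{id},\infty\}$, observe that $(\max)^{(I)}$ is again a maximum of a subfamily of the coordinates together with possibly a constant $\pm\infty$, and check the three defining conditions; the key point is that on $\fdom$ the restriction is the usual finite maximum of affine (identity) functions plus a constant, which is the pointwise supremum of affine functions and therefore convex, and the finiteness condition (3) holds since a finite maximum of real numbers is real. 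To see $\vee$ is \emph{not} order-concave, I would exhibit a one-dimensional witness: take $n=2$, the comparable points $x=(0,0)$ and $y=(1,1)$, and $\lambda=\tfrac12$; then $\max$ of the midpoint is $\tfrac12$ while the midpoint of the values $\max x=0$ and $\max y=1$ is also $\tfrac12$, so that particular choice is tight — instead I would use the non-comparable-avoiding trick of picking $x=(1,0)$, $y=(0,1)$, which are not comparable, so order-concavity says nothing there; the clean witness is actually $x=(0,1),y=(1,0)$ restricted along... more carefully: for order-concavity I need comparable points, so take $x=(0,0)\le y=(2,0)$ and $\lambda=\tfrac12$: $\max(\tfrac12 x+\tfrac12 y)=\max(1,0)=1$ and $\tfrac12\max x+\tfrac12\max y = \tfrac12\cdot 0+\tfrac12\cdot 2 = 1$, again tight because $\max$ restricted to a line segment between comparable points where the max-achieving coordinate is fixed is affine. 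The genuine failure shows up when the max-achieving coordinate switches: take $x=(0,1)$ and $y=(1,0)$ — not comparable. So I must use $I$: apply the coordinate map $I(2)=\infty$... that forces $\infty$. The honest witness: order-convexity of $\vee$ is what holds; to show $\vee$ is not order-concave it is enough that some $f_{i\cdot}=\max$ fails condition (2), and $\max(x_1,x_2)$ on the order-convex set $\R^2$ is a convex function that is strictly convex in no direction but is not concave — e.g.\ $x=(0,0)$, $y=(1,1)$ gives equality but $x=(-1,1)$, $y=(1,-1)$ are incomparable; the standard fact is simply that $\max$ of two affine functions is convex and not concave unless one dominates, and one checks this with the comparable pair $x=(0,1)\le y=(1,1)$? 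No. I will instead record the simplest valid witness in the final write-up: $x = (0,2)$, $y=(2,0)$ are incomparable, but the points $x'=(0,0)$ and $y'=(2,2)$ with the \emph{function} $\max$ — here I use that $\max(t, 2-t)$ for $t\in[0,2]$ is a convex, non-concave function of the single parameter $t$, and $(t,2-t)$ traces a segment between the \emph{comparable} endpoints $(0,2)$ and $(2,0)$? Those are incomparable. The resolution: order-concavity of $\vee:\CR^2\to\CR$ fails via the mapping $I$ with $I(1)=\mathsf{id}, I(2)=\mathsf{id}$ and the comparable pair is unavailable in two coordinates, so I reduce instead to three coordinates, $n=3$: $x=(0,1,0)\le y=(0,1,2)$... still the max-coordinate can be made to switch only between incomparable points. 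I now believe the cleanest argument is: $\vee$ restricted appropriately is not order-concave because $\max:\R^2\to\R$ is convex and any function that is both convex and concave on $\R^2$ is affine, while $\max$ is not affine; order-concavity of $\vee$ would imply each $\max$-component is order-concave, hence (being also convex, already shown) affine on each segment between comparable points, but $\max(1,0)=1\neq \tfrac12 = \tfrac12\max(0,0)+\tfrac12\max(2,0)$ uses the comparable pair $(0,0)\le(2,0)$ — wait $\max(2,0)=2$, midpoint value $=1$, RHS $=1$: equality again. The subtlety is real, so the main obstacle is precisely pinning down a correct non-order-concavity witness for $\vee$; I expect the intended one uses a component index where along a comparable segment the active coordinate in the $\max$ changes, which does happen for $x=(0,0,\ldots)\le y$ when two different coordinates of $y$ are the joint maximum but of $x$ only one is — concretely $n=2$, $x=(0,0)\le y=(1,1)$, here $\max$ is affine $t\mapsto t$ so no help, hence one truly needs the $(I)$-trick or three dimensions with a non-monotone-in-disguise segment; I will present the $I$-reduction carefully in the full proof.

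For $\wedge$ being \cmcave: again reduce to the scalar operator $\min$. Concavity: fix $I$; then $(\min)^{(I)}$ is a minimum of a subfamily of coordinate-identity functions together with possibly the constants $\pm\infty$, which is the pointwise infimum of affine functions, hence concave on its finite-domain, and $\fdom$ is convex (indeed all of $\R^n$ when no coordinate is sent to $\neginfty$, and empty otherwise); condition (3) on finiteness on $\R^n$ holds since a finite minimum of reals is real and $-\infty$ only enters when some $I(i)=\neginfty$, in which case $\fdom(f^{(I)})=\emptyset$ and (3) is vacuous. Monotonicity of $\min$ is immediate. Upward-chain-continuity of $(\min)^{(I)}_{i\cdot}$ on the set where it exceeds $\neginfty$: on that set the relevant coordinates are never $\neginfty$, and for a nonempty chain $C$ with $\bigvee C$ in the domain, $\min$ commutes with directed suprema of chains because for finitely many coordinates the supremum of the minimum equals the minimum of the suprema when the chain is linearly ordered (one checks that the coordinate achieving the minimum can be chosen cofinally constant along the chain since there are finitely many coordinates); this gives condition for \cmcave. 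Finally, $\wedge$ is not order-convex: this is the dual of "$\vee$ is not order-concave," since $-\wedge$ on $\CR^n$ relates to $\vee$ by sign flip and coordinate negation, so the same (correctly chosen) witness transfers. The main obstacle, as flagged, is the non-order-concavity/non-order-convexity witnesses; everything else is a routine check of the definitions reduced to the scalar $\max$ and $\min$.
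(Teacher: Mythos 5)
The paper offers no proof of this lemma (it is stated as routine), so the only question is whether your argument is correct and complete. Your positive halves are fine in outline: monotonicity is immediate; each component of $\vee$ is a pointwise maximum of coordinate projections, hence convex; each component of $\wedge$ is a pointwise minimum of affine functions, hence concave; substituting $\pm\infty$ into some coordinates (the maps $I$) again yields a max/min of the remaining coordinates or a constant, so the three defining conditions are easy to check; and your argument for upward-chain-continuity of $\wedge$ (finitely many coordinates, so the minimizing coordinate can be handled cofinally along a chain) is sound. One small inaccuracy: $\vee$ and $\wedge$ on $\CR^n$ are the \emph{binary} lattice operations, i.e.\ maps $\CR^n\times\CR^n\to\CR^n$ whose $i$-th component is the max (resp.\ min) of the two $i$-th coordinates, not an $n$-ary max of one argument's coordinates; this does not damage your reduction, since either reading reduces to a max/min of at least two distinct coordinates.

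The genuine gap is exactly the part you flagged and then got wrong: you never exhibit a witness that $\vee$ is not order-concave (equivalently, that $\wedge$ is not order-convex), and you end by asserting that along a segment between \emph{comparable} points the coordinate achieving the maximum cannot switch, so that one would need the $(I)$-construction or higher dimensions. That claim is false, and a two-variable witness is immediate: take the comparable arguments $x=(0,1)\leq y=(2,1)$ in $\R^2$ (for $n=1$ these are the two arguments of the binary $\vee$ on $\CR$; for general $n$ embed them in one component) and $\lambda=\tfrac12$. Then $\tfrac12 x+\tfrac12 y=(1,1)$ and
$\max(1,1)=1<\tfrac32=\tfrac12\max(0,1)+\tfrac12\max(2,1)$,
so order-concavity of $\vee$ fails; dually $\min(1,1)=1>\tfrac12=\tfrac12\min(0,1)+\tfrac12\min(2,1)$, so order-convexity of $\wedge$ fails. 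Your unsuccessful attempts all moved both coordinates together (diagonal) or moved the already-dominating coordinate, in which cases max is affine along the segment; the point is to move one coordinate past the other, held fixed, so that the argmax switches in the interior of the segment between two comparable points. With this witness inserted (and its dual for $\wedge$), the remainder of your outline completes the proof.
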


\per
\noindent
Next, 
we extend the definition of \emph{affine} functions from $\R^n\to\R^m$ 
to a definition of \emph{affine} functions from $\CR^n\to\CR^m$.

\begin{definition}[Affine Functions]
\per
	A function $f : \R^n\to\R^m$ is called \emph{affine}
	if and only if there exist some $A \in \R^{m \times n}$ and some $b \in \R^m$ 
	such that $f(x) = Ax + b$ for all $x \in \R^n$.
	A function $f : \CR^n \to \CR^m$ 
	is called \emph{affine} if and only if there exist some $A \in \R^{m \times n}$ and some $b \in \CR^m$ 
	such that $f(x) = Ax + b$ for all $x \in \CR^n$.
	\qed
\end{definition}

\per
\noindent
In the above definition and throughout this article, 
we use the convention that $\neginfty + \infty = \neginfty$.
Observe that an affine function $f$ with $f(x) = Ax+b$ is 
monotone, whenever
all entries of the matrix $A$ are non-negative.

\begin{lemma}
  \per
  \label{l:elementare:conv:funcs}
  Every affine function $f : \R^n \to \R^m$
  is concave and convex.
  Every monotone and affine function 
  $f : \CR^n \to \CR^m$ 
  is 
  \cmcave.
  \qed
\end{lemma}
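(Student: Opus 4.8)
The plan is to split the statement into its two parts and dispose of the $\R^n$-case first, since it is immediate. If $f(x) = Ax + b$ with $A \in \R^{m\times n}$ and $b \in \R^m$, then for all $x, y \in \R^n$ and all $\lambda \in [0,1]$ one computes $f(\lambda x + (1-\lambda)y) = \lambda (Ax+b) + (1-\lambda)(Ay+b) = \lambda f(x) + (1-\lambda) f(y)$ by distributivity, so the defining inequality of (order\nobreakdash-)concavity \emph{and} of (order\nobreakdash-)convexity holds with equality; together with the convexity of $\dom(f) = \R^n$ this gives that $f$ is both concave and convex. For the second part, let $f(x) = Ax + b$ with $A \in \R^{m\times n}$, $b \in \CR^m$ be monotone. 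I would first note that monotonicity forces $A \geq 0$ (a negative entry $A_{i\cdot j}$ would allow an increase of $x_{j\cdot}$ to decrease $f_{i\cdot}$). Since concavity and the upward-chain-continuity condition of \cmcave are defined componentwise, and monotonicity is too, it suffices to treat a single component $f_{i\cdot}(x) = A_{i\cdot}x + b_{i\cdot}$ together with a single mapping $I : \{1,\ldots,n\} \to \{\neginfty,\mathsf{id},\infty\}$.

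The key structural observation is that, splitting $\{1,\ldots,n\}$ according to the value of $I$ and using the conventions $0\cdot(\pm\infty)=0$ and $\neginfty + \infty = \neginfty$, one has $f_{i\cdot}^{(I)}(x) = \sum_{j\,:\,I(j)=\mathsf{id}} A_{i\cdot j}\, x_{j\cdot} + c$ for all $x \in \CR^n$, where $c := b_{i\cdot} + \sum_{j\,:\,I(j)=\neginfty} A_{i\cdot j}\cdot\neginfty + \sum_{j\,:\,I(j)=\infty} A_{i\cdot j}\cdot\infty \in \CR$ is a constant. Thus $f_{i\cdot}^{(I)}$ is, up to the value of $c$, just a monotone affine real function in some of the coordinates, and the whole argument reduces to a three-way case analysis on $c$.

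If $c = \neginfty$, then $f_{i\cdot}^{(I)}\equiv\neginfty$, so $\fdom(f_{i\cdot}^{(I)})=\emptyset$ and also $\{x \mid f_{i\cdot}^{(I)}(x) > \neginfty\} = \emptyset$, and every condition holds vacuously. If $c = \infty$, then $f_{i\cdot}^{(I)}(x)=\infty$ for every $x\in\R^n$, so $\fdom(f_{i\cdot}^{(I)})=\emptyset$ again, the concavity conditions hold vacuously, and $f_{i\cdot}^{(I)}$ is the constant $\infty$ on $\{x \mid f_{i\cdot}^{(I)}(x) > \neginfty\}$, hence trivially upward-chain-continuous there. If $c \in \R$, then on $\R^n$ the function $f_{i\cdot}^{(I)}$ is a genuine real affine function of the coordinates with $I(j)=\mathsf{id}$, so $\fdom(f_{i\cdot}^{(I)})=\R^n$ is convex, its restriction is concave by the first part, and it is finite on $\R^n$, which verifies the definition of a concave function $\CR^n\to\CR$; for the upward-chain-continuity on $S := \{x \mid f_{i\cdot}^{(I)}(x) > \neginfty\}$ I would get the inequality $f_{i\cdot}^{(I)}(\bigvee C) \geq \bigvee_{x\in C} f_{i\cdot}^{(I)}(x)$ for free from monotonicity, and for the reverse inequality use the fact that a finite chain has a greatest element: for a chain $C\subseteq S$ with $z:=\bigvee C$, pick, for each coordinate $j$ with $I(j)=\mathsf{id}$ and $A_{i\cdot j}>0$, an element of $C$ almost attaining $z_{j\cdot}$ (or, when $z_{j\cdot}=\infty$, arbitrarily large), and take the maximum of these finitely many elements, which lies in $C$ and almost attains $z$ simultaneously in all relevant coordinates. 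Combining monotonicity with the componentwise concavity then yields that $f$ is \cmcave. The main — though still routine — obstacle is exactly this last point: keeping the arithmetic with $\pm\infty$ consistent and exploiting the directedness of the chain to realise all coordinate suprema by a single chain element; everything else is mechanical bookkeeping.
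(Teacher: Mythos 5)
The paper states this lemma without proof (it is marked as immediate), so there is no official argument to compare against; your verification is correct and fills in exactly the routine details one would expect: the componentwise reduction, rewriting $f^{(I)}_{i\cdot}$ as a real affine function in the coordinates with $I(j)=\mathsf{id}$ plus a constant $c\in\CR$ (which requires the conventions $0\cdot(\pm\infty)=0$ and $\neginfty+\infty=\neginfty$ that you rightly make explicit), the three-way case analysis on $c$, and the chain argument that realises the finitely many coordinate suprema by a single chain element using the directedness of the chain and the upward closedness of $\{x\mid f^{(I)}_{i\cdot}(x)>\neginfty\}$. One small inaccuracy: monotonicity of $f$ does not literally force $A\geq 0$ for the given representation. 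If $b_{i\cdot}=\neginfty$, then $f_{i\cdot}\equiv\neginfty$ is monotone no matter what the $i$-th row of $A$ is, so your parenthetical argument only applies to components with $b_{i\cdot}>\neginfty$ (where it is indeed valid, also when $b_{i\cdot}=\infty$, using a point with an infinite coordinate). This slip is harmless: for a component with $b_{i\cdot}=\neginfty$ one has $c=\neginfty$ for every $I$, so it falls into your trivial case anyway; equivalently, one may w.l.o.g.\ replace such rows of $A$ by zero and then your claim $A\geq 0$ holds. With that one-line patch the proof is complete.
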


\per
\noindent
In contrast to the class of monotone and order-concave operators on $\R$,
the class of \morcave operators on $\CR$ is not closed under functional composition,
as the following example shows:

\begin{example}
  \per
  We consider the functions 
  $f : \CR\to\CR$ and $g : \CR\to\CR$ defined by
  \begin{align}
    f(x) 
    &:=
    \begin{cases} 
      0 & \text{if } x = \neginfty \\
      1 & \text{if } x > \neginfty
    \end{cases}
    &
    g(x) 
    &:= 
    \begin{cases}
      \neginfty & \text{if } x < 0 \\
      0             & \text{if } x \geq 0 
    \end{cases}
    && 
    \text{for all } x \in \CR
    .
  \end{align}
  
  \per
  \noindent
  The functions $f$ and $g$ are both \morcave\ --- even \cmcave.
  However, 
  observe that
  \begin{align}
    (f\circ g)(x) = f(g(x))
    &:=
    \begin{cases} 
      0 & \text{if } x < 0 \\
      1 & \text{if } x \geq 0
    \end{cases}
    && 
    \text{for all } x \in \CR
    .
  \end{align}
  
  \per
  \noindent
  Then,
  $f \circ g$ is monotone, but not order-concave.
  \qed
\end{example}

\per
\noindent
As we will see,
the composition $f \circ g$ of two \morcave operators $f$ and $g$
is again \morcave,
if $f$ is additionally strict in the following sense:
a function $f : \CR^n\to\CR$ is called \emph{strict} if and only if
$f(x) = \neginfty$ 
for all $x \in \CR^n$ with $x_{k\cdot} = \neginfty$ for some $k \in \{1,\ldots,n\}$.

\begin{lemma}
  \per
  Let $f : \CR^m \to \CR$ and $g : \CR^n \to \CR^m$ be
  \morcave.
  Assume additionally that $f$ is strict.
  Then $f \circ g$ is \morcave.
\end{lemma}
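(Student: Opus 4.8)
The plan is to verify the two defining properties of a \morcave function directly for $h := f \circ g$: monotonicity, and order-concavity in the sense of the extended definition (which, for a monotone operator, amounts to checking that each $\fdom(h^{(I)})$ is upward closed w.r.t.\ $\R^n$ and that $h^{(I)}$ restricted to that set is order-concave). Monotonicity is immediate: if $x \leq y$ in $\CR^n$ then $g(x) \leq g(y)$ since $g$ is monotone, and hence $f(g(x)) \leq f(g(y))$ since $f$ is monotone. The content is in the order-concavity part, and this is where strictness of $f$ enters.

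First I would reduce order-concavity of $h$ to a statement about $h^{(I)}$ for an arbitrary $I : \{1,\ldots,n\} \to \{\neginfty, \mathsf{id}, \infty\}$. Note that $h^{(I)} = f \circ (g^{(I)})$, where $g^{(I)}$ is the componentwise-substituted version of $g$ (each component $g_{i\cdot}^{(I)}$ is order-concave since $g$ is \morcave). I would then consider a comparable pair $x, y \in \fdom(h^{(I)})$ and $\lambda \in [0,1]$, set $z := \lambda x + (1-\lambda) y$, and compare $h^{(I)}(z)$ with $\lambda h^{(I)}(x) + (1-\lambda) h^{(I)}(y)$. Writing $u := g^{(I)}(x)$, $v := g^{(I)}(y)$, $w := g^{(I)}(z) \in \CR^m$, the key intermediate point is $g^{(I)}(z) \geq \lambda u + (1-\lambda) v$ componentwise (order-concavity of each $g_{j\cdot}^{(I)}$, using that $x,y$ comparable makes $u,v$ comparable), and therefore $h^{(I)}(z) = f(w) \geq f(\lambda u + (1-\lambda)v)$ by monotonicity of $f$ — provided $\lambda u + (1-\lambda)v$ is a legitimate argument and $f$ behaves well there. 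Then I would apply order-concavity of $f$ to $\lambda u + (1-\lambda)v$ to bound this below by $\lambda f(u) + (1-\lambda) f(v) = \lambda h^{(I)}(x) + (1-\lambda) h^{(I)}(y)$.

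The main obstacle — and the reason strictness is assumed — is the bookkeeping around $\pm\infty$. The bound $g^{(I)}(z) \geq \lambda u + (1-\lambda) v$ and the convex combination $\lambda u + (1-\lambda)v$ only make honest sense, and order-concavity of $f$ only applies, when $u, v$ lie in $\fdom$ of the relevant substituted restriction of $f$, i.e.\ are finite in the coordinates that matter. If some component $u_{j\cdot}$ or $v_{j\cdot}$ equals $\neginfty$, strictness of $f$ forces $f(u) = \neginfty$ or $f(v) = \neginfty$, so $x$ or $y$ would not be in $\fdom(h^{(I)})$, contradicting our choice; this is exactly the case the non-strict counterexample exploits. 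The remaining delicate case is $u_{j\cdot} = \infty$ for some $j$: here I would absorb the $+\infty$ coordinates into a new choice of $I'$ (replacing $\mathsf{id}$ by $\infty$ in those coordinates of the outer composition), reducing to a situation where the argument vectors of $f$ are finite in all free coordinates, and then invoke order-concavity of $f^{(I')}$ on its $\fdom$ together with condition 3 of the definition (finiteness of $f^{(I')}$ on $\R^m$ when $\fdom(f^{(I')}) \neq \emptyset$) to keep everything $< \infty$. Finally I would check the structural conditions: $\fdom(h^{(I)})$ is upward closed w.r.t.\ $\R^n$ because $g^{(I)}$ maps $\R^n$ monotonically into $\fdom$-like regions and $f^{(I')}$ has upward-closed $\fdom$, and $h^{(I)}(x) < \infty$ for $x \in \R^n$ whenever $\fdom(h^{(I)}) \neq \emptyset$, again via condition 3 applied to $f$ and $g$. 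Assembling these pieces over all $I$ yields that $h$ is order-concave, hence \morcave.
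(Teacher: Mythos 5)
Your proposal follows essentially the same route as the paper's proof: monotonicity is immediate, and order-concavity of $(f\circ g)^{(I)}$ is established by pushing comparable points through $g^{(I)}$, using strictness of $f$ to exclude $\neginfty$-coordinates, absorbing the $\infty$-coordinates into a modified index map $I'$ for $f$, and then invoking order-concavity of $f^{(I')}$ together with condition 3 (applied to both $f$ and $g$) to control finiteness. The only difference is presentational: the paper fixes $I'$ via the $\infty$-coordinates of $g^{(I)}(z)$ at the midpoint rather than of the endpoints, but by the condition-3 argument these coordinate sets coincide, so your sketch matches the published argument.
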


\begin{proof}
  \per
  Since $f$ and $g$ are monotone,
  $f \circ g$ is also monotone.
  In order to show that $f \circ g$ is order-concave,
  let 
  $I : \{ 1,\ldots,n \} \to \{ \neginfty, \mathsf{id}, \infty \}$ and
  $h := (f\circ g)^{(I)}$.
  
 \begin{enumerate}
 \item
  \per
  The set $\fdom(h)$ is order-convex by 
  Lemma \ref{l:f:mon:dann:dom:order-concave},
  since $h$ is monotone.
  
 \item
  \per
  Let $x, y \in \fdom(h)$ with $x \leq y$, $\lambda \in [0,1]$, and 
  $z := \lambda x + (1-\lambda)y$.
  Moreover, 
  let $x' := g^{(I)}(x)$, $y' := g^{(I)}(y)$, and $z' := g^{(I)}(z)$.
  The strictness of $f$ implies that $z' \gg (\neginfty,\ldots,\neginfty)$.
  Since $g^{(I)}$ is monotone,
  we get $x' \leq y'$.
  We define 
  $I' : \{1,\ldots,m\}\to\{\neginfty,\mathsf{id},\infty\}$ 
  by
  \begin{align*}
    I'(k) 
    &=
    \begin{cases}
      \mathsf{id}      & \text{if } z'_{k\cdot} \in \R \\
      \mathsf{\infty} & \text{if } z'_{k\cdot} = \infty 
    \end{cases}
    && 
    \text{for all } k \in \{1,\ldots,m\}
    .
  \end{align*}
  
  \per
  \noindent
  We get:
  \begin{align*}
    h(z) 
    &=
    f(g^{(I)}(z))
    \\&=
    f^{(I')}(g^{(I)}(z))
    \\&\geq
    f^{(I')}(\lambda g^{(I)}(x) + (1-\lambda) g^{(I)}(y))
    \\&\qquad\qquad\qquad \text{(Monotonicity, Order-Concavity)}
    \\&=
    f^{(I')}(\lambda x' + (1-\lambda) y')
    \\&\geq
    \lambda f^{(I')}( x' ) + (1-\lambda) f^{(I')}(y')
    & \text{(Order-Concavity)}
    \\&=
    \lambda f^{(I')}( g^{(I)}(x) ) + (1-\lambda) f^{(I')}( g^{(I)}(y) )
    \\&\geq
    \lambda f( g^{(I)}(x) ) + (1-\lambda) f( g^{(I)}(y) )
    & \text{($f \leq f^{(I')}$)}
    \\&=
    \lambda h(x) + (1-\lambda) h ( y ) 
  \end{align*}
  
  \per
  \noindent
  Hence, $h|_{\fdom(h)}$ is order-concave.

 \item  
  \per
  Now, 
  assume that $\fdom(h) \neq \emptyset$.
  That is,
  there exists some $y \in \R^n$ with $h(y) = f(g^{(I)}(y)) \in \R$.
  Since $f$ is strict,
  we get $y' := g^{(I)}(y) \gg (\neginfty,\ldots,\neginfty)$.
  Let
  $I' : \{ 1,\ldots,m \} \to \{ \neginfty, \mathsf{id}, \infty \}$ 
  be defined by 
  \begin{align*}
    I'(k) 
    &=
    \begin{cases}
      \mathsf{id}      & \text{if } y'_{k\cdot} \in \R \\
      \mathsf{\infty} & \text{if } y'_{k\cdot} = \infty 
    \end{cases}
    && 
    \text{for all } k \in \{1,\ldots,m\}
    .
  \end{align*}
  
  \per
  \noindent
  Since $g$ is order-concave,
  we get
  $g^{(I)}_{k\cdot}(x) < \infty$ 
  for all 
  $x \in \R^n$
  and all 
  $k \in \{1,\ldots,m\}$ with $y'_{k\cdot} \in \R$.
  Since $f$ is order-concave,
  we get 
  $f^{(I')}(x) < \infty$ for all $x \in \R^n$.
  Thus, 
  by monotonicity, 
  we get 
  $f^{(I')} \circ g^{(I)}(x) = f^{(I')}(g^{(I)}(x)) < \infty$ for all $x \in \R^n$.
  Since we have 
  $h = (f \circ g)^{(I)} \leq f^{(I')} \circ g^{(I)}$ 
  by construction,
  we get 
  $h(x) < \infty$ for all $x \in \R^n$.
  \qed
 \end{enumerate}
\end{proof}

\np\section{Solving Systems of \maxmorcave Equations}
\label{s:strat:imp}


\per
\noindent
In this section,
we present our $\vee$-strategy improvement algorithm 
for computing least solutions of systems of \maxmorcave equations and prove its correctness.

\subsection{Systems of \maxmorcave Equations}

\per
\noindent
Assume that a fixed finite set $\vX$ of variables 
and a complete linearly ordered set $\D$ is given.
Assume that $\D$ is partially ordered by $\leq$.
We consider equations of the form 
$\vx = e$ over $\D$,
where $\vx \in \vX$ is a variable
and $e$ is an expression over $\D$.
A \emph{system} $\E$ of (fixpoint-)equations over $\D$ is a finite
set 
$ 
  \{ \vx_1 = e_1,\ldots,\vx_n = e_n \}
$ 
%
of equations, 
where
$\vx_1,\ldots,\vx_n$ are pairwise distinct variables.
We denote the set $\{\varx_1, \allowbreak \ldots, \allowbreak \varx_n\}$ of variables 
occurring in $\E$ by $\vX_\E$.
We drop the subscript, whenever it is clear from the context.


\per
For a variable assignment $\rho : \vX \to \D$,
an expression $e$ is mapped to a value 
$\sem{e}\rho$ 
by setting 
$	  \sem{\vx}\rho := \rho(\vx)$%
	  , and
$	  \sem{f(e_1,\ldots,e_k)}\rho := f(\sem{e_1}\rho,\ldots,\sem{e_k}\rho)$%
	  ,
where $\vx \in \vX$, $f$ is a $k$-ary operator ($k=0$ is possible; then $f$ is a constant), 
for instance $+$, and
$e_1,\ldots,e_k$ are expressions.
For every system $\E$ of equations,
we define the unary operator 
$\sem{\E}$ 
on 
$\vX \to \D$
by setting
$
  (\sem{\E}\rho)(\vx) 
  := \sem{e}\rho 
$
for all equations 
$\vx = e$ from $\E$
and all 
$\rho : \vX \to \D$.
A \emph{solution} is a fixpoint of $\sem\E$,
i.e.,
it is a variable assignment $\rho$
such that $\rho = \sem{\E}\rho$.
We denote the set of all solutions of $\E$ by $\Sol(\E)$.

\per
The set $\vX \to \D$ of all \emph{variable assignments} 
is a complete lattice.
For $\rho, \rho' : \vX \to \D$,
we write $\rho \ll \rho'$ (resp.\ $\rho \gg \rho'$) 
if and only if 
$\rho(\vx) < \rho'(\vx)$ (resp.\ $\rho(\vx) > \rho'(\vx)$) for all $\vx \in \vX$.
For $d \in \D$,
$\underline d$ denotes the variable assignment 
$\{ \vx \mapsto d \mid \vx \in \vX \}$.
A variable assignment $\rho$ with $\botvar \ll \rho \ll \topvar$
is called \emph{finite}.
A pre-solution (resp.\ post-solution) is a variable assignment $\rho$
such that  
$\rho \leq \sem{\E}\rho$ (resp.\ $\rho \geq \sem{\E}\rho$) holds.
The set of pre-solutions 
(resp.\ the set of post-solutions) 
is denoted 
by $\PreSol(\E)$ (resp.\ $\PostSol(\E)$).
The least solution (resp.\ the greatest solution)
of a system $\E$ of equations is denoted by
$\mu\sem\E$ (resp.\ $\nu\sem\E$),
provided that it exists.
For a pre-solution $\rho$ (resp.\ for a post-solution $\rho$),
$\mu_{\geq \rho}\sem\E$ (resp.\ $\nu_{\leq \rho}\sem\E$)
denotes the least solution that is greater than or equal to $\rho$
(resp.\ the greatest solution that is less than or equal to $\rho$).

\per
An expression $e$ (resp.\ an (fixpoint-)equation $\vx = e$ is called 
\emph{monotone} if and only if $\sem e$ is monotone.
In our setting, 
the fixpoint theorem of Knaster/Tarski can be stated as follows:
%
  every system $\E$ of monotone fixpoint equations over a complete lattice 
  has a least solution $\mu\sem\E$ and a greatest solution $\nu\sem\E$.
  Furthermore,
  we have
  $\mu\sem\E = \bigwedge \PostSol(\E)$
  and
  $\nu\sem\E = \bigvee \PreSol(\E)$.

\begin{definition}[\maxmorcave Equations]
  \per
  An expression $e$ (resp.\ fixpoint equation $\vx = e$) over $\CR$
  is called 
  \emph{\morcave} (resp.\ \cmorcave, resp.\ \mcave, resp.\ \cmcave)
  if and only if
  $\sem{e}$ is \morcave  (resp.\ \cmorcave, resp.\ \mcave, resp.\ \cmcave).
  An expression $e$ (resp.\ fixpoint equation $\vx = e$) over $\CR$
  is called \emph{\maxmorcave} (resp.\ \maxcmorcave, resp.\ \mcave, resp.\ \cmcave)
  if and only if 
  $e = e_1 \vee\cdots\vee e_k$,
  where $e_1,\ldots,e_k$ are
  \morcave (resp.\ \cmorcave, resp.\ \mcave, resp.\ \cmcave).
  \qed
\end{definition}


\per
\begin{example}
  \label{ex:wurzel:2}
  The square root operator $\sqrt\cdot : \CR\to\CR$ 
  (defined by $\sqrt x := \sup \; \{ y \in \R \mid y^2 \leq x \}$ for all $x \in \CR$)
  is \cmcave.
  The least solution of the system
  $\E = \{ \vx = \frac12 \vee \sqrt \vx \}$
  of \maxcmcave equations is
  $\mu\sem\E = 1$.
  \qed
\end{example}

\begin{definition}[$\vee$-strategies]
\per
A \emph{$\vee$-strategy $\sigma$} 
for a system $\E$ of equations
is a function that maps every expression 
$e_1 \vee \cdots \vee e_k$ 
occurring in $\E$
to one of the immediate sub-expressions $e_j$, $j \in \{1,\ldots,k\}$.
We denote the set of all 
$\vee$-strategies 
for $\E$ by 
$\MaxStrat_\E$.
We drop the subscript, whenever it is clear from the context.
The application $\E(\sigma)$ of $\sigma$ to $\E$ is defined by
$\E(\sigma) := \{ \vx = \sigma(e) \mid \vx = e \in \E \}$.
\end{definition}

\begin{example}
  \per
  The two $\vee$-strategies $\sigma_1,\sigma_2$ for the 
  system $\E$ of \maxcmcave equations defined in
  Example \ref{ex:wurzel:2}
  lead to the systems
  $\E(\sigma_1) = \{\vx = \frac12 \}$ and
  $\E(\sigma_2) = \{\vx = \sqrt \vx \}$
  of \cmcave equations.
  \qed
\end{example}

\subsection{The Strategy Improvement Algorithm}

\per
\noindent
We now present the $\vee$-strategy improvement algorithm 
in a general setting.
That is,
we consider arbitrary systems of monotone equations over arbitrary \emph{complete linearly ordered sets $\D$}.
The algorithm iterates over $\vee$-strategies.
It maintains a current $\vee$-strategy $\sigma$
and a current \emph{approximate} $\rho$ to the least solution.
A so-called \emph{$\vee$-strategy improvement operator} 
is used to determine a next, improved $\vee$-strategy $\sigma'$.
Whether or not a $\vee$-strategy $\sigma'$ is an \emph{improvement} 
of the current $\vee$-strategy $\sigma$ may depend on the current approximate $\rho$:

\begin{definition}[Improvements]
\per
	\label{d:alg:verbesserung}
	Let $\E$ be a system of monotone equations
	over a complete linearly ordered set.
	Let $\sigma, \sigma' \in \MaxStrat$ be
	$\vee$-strategies for $\E$ 
	and $\rho$ be a pre-solution of $\E(\sigma)$.
	The $\vee$-strategy $\sigma'$ is called an
	\emph{improvement of $\sigma$ w.r.t.\ $\rho$}
	if and only if 
	the following conditions are fulfilled:
    \begin{enumerate}
      \item
			If $\rho \notin \Sol(\E)$,
			then $\sem{\E(\sigma')}\rho > \rho$.
      \item
			For all expressions 
			$e = e_1 \vee \cdots \vee e_k$ of $\E$
			the following holds:
			If $\sigma'(e) \neq \sigma(e)$,
			then $\sem{\sigma'(e)} \rho > \sem{\sigma(e)} \rho$.
    \end{enumerate}
    
    \noindent
	A function $\Pv$ that assigns 
	an improvement of $\sigma$ w.r.t.\ $\rho$
	to every pair $(\sigma,\rho)$,
	where $\sigma$ is a $\vee$-strategy 
	and $\rho$ is a pre-solution of $\E(\sigma)$,
	is called a
	\emph{$\vee$-strategy improvement operator}.
	If it is impossible to improve $\sigma$ w.r.t.\ $\rho$,
	then we necessarily have $\Pv(\sigma,\rho) = \sigma$.
	\qed
\end{definition}

\begin{example}
  \per
  \label{ex:einfach:reicht:nicht}
  Consider the system
  $ 
    \E = \{ \vx_1 = \vx_2 + 1 \wedge 0, \vx_2 = -1 \vee \sqrt \vx_1 \}
  $ 
  %
  of \maxcmcave equations.
  Let $\sigma_1$ and $\sigma_2$ be the $\vee$-strategies for $\E$
  such that
  \begin{align*}
    \E(\sigma_1) &= \{ \vx_1 = \vx_2 + 1 \wedge 0, \vx_2 = -1 \},
    \text{ and}
    \\
    \E(\sigma_2) &= \{ \vx_1 = \vx_2 + 1 \wedge 0, \vx_2 = \sqrt \vx_1 \}
    .
  \end{align*}
  
  \noindent
  The variable assignment $\rho := \{ \vx_1 \mapsto 0, \vx_2 \mapsto -1 \}$ 
  is a solution and thus also a pre-solution of $\E(\sigma_1)$.
  The $\vee$-strategy $\sigma_2$ is an improvement of the 
  $\vee$-strategy $\sigma_1$
  w.r.t.\ $\rho$.
  \qed
\end{example}

\per
\noindent
We can now formulate the 
$\vee$-strategy improvement algorithm for
computing least solutions of systems of monotone equations
over complete linearly ordered sets. 
This algorithm is parameterized with a 
$\vee$-strategy improvement operator $\Pv$.
The input is a system $\E$ of monotone equations
over a complete linearly ordered set,
a $\vee$-strategy $\sigma_{\mathrm{init}}$ for $\E$, 
and
a pre-solution $\rho_{\mathrm{init}}$ 
of $\E(\sigma_{\mathrm{init}})$.
In order to compute the \emph{least} and not just some solution,
we additionally require that $\rho_{\mathrm{init}} \leq \mu\sem\E$ holds:

\vspace*{-2mm}

\per
\begin{algorithm}[H]
	$
	\\[0mm]
	\begin{array}{@{}l@{\text{:}\,}l@{}}
		\text{Parameter} 
		&
		\text{A $\vee$-strategy improvement operator $\Pv$} 
		   \\[1mm]
		\text{Input} 
		&
		\left\{
		\begin{array}{@{}l@{}}
			\text{\!\!-A system $\E$ of monotone equations
			           over a complete linearly ordered set } \\
			\text{\!\!-A $\vee$-strategy $\sigma_{\mathrm{init}}$ for $\E$} \\
			\text{\!\!-A pre-solution $\rho_{\mathrm{init}}$ 
			           of $\E(\sigma_{\mathrm{init}})$ 
			           with $\rho_{\mathrm{init}} \leq \mu\sem\E$} \\
		\end{array}
		\right.
		\\[5mm]
		\text{Output} 
		&
		\text{The least solution $\mu\sem\E$ of $\E$} 
	\end{array} \\[2mm]
    \sigma \GETS \sigma_{\mathrm{init}} ; \\
    \rho \GETS \rho_{\mathrm{init}} ; \\
    \VS
	\WHILE (\rho \notin \Sol(\E)) \;
	\{ \\
		\hspace*{0.5cm} \sigma \GETS \Pv(\sigma,\rho) ; \\ \hspace*{0.5cm}
		\rho \GETS \mu_{\geq \rho} \sem{\E(\sigma)} ; \\
	\} \\
	\VS
	\RETURN \rho;
	\\[-3mm]
	$
	\caption{The $\vee$-Strategy Improvement Algorithm}
	\label{alg:alg:stratimp}
\end{algorithm}

\vspace*{-2mm}

\begin{example}
  We consider the system 
  \begin{align}
    \textstyle
    \E = 
    \left\{ 
      \vx 
      = 
      \neginfty \vee \frac 1 2 \vee \sqrt \vx \vee \frac 7 8 + \sqrt{\vx - \frac{47}{64}}  
    \right\}
  \end{align}
  
  \noindent
  of \maxcmorcave equations.
  We start with the $\vee$-strategy $\sigma_0$ that leads to the system
  \begin{align}
    \E(\sigma_0) = \{ \vx = \neginfty \}
  \end{align} 
  
  \noindent
  of \cmorcave equations.
  Then 
  $\rho_0 := \neginftyvar$ 
  is a feasible solution of 
  $\E(\sigma_0)$.
  Since $\rho_0 \notin \Sol(\E)$,
  we improve $\sigma_0$ w.r.t.\ $\rho_0$ to the $\vee$-strategy 
  $\sigma_1$
  that gives us
  \begin{align}
    \E(\sigma_1) = \left\{ \vx = \frac 1 2 \right\}
    .
  \end{align}
  
  \noindent
  Then,
  $\rho_1 := \mu_{\geq \rho_0} \sem{\sigma_1} = \{ \vx \mapsto \frac12 \}$.
  Since $\sqrt{\frac 1 2} > \frac 1 2$
  and $\frac78 + \sqrt{\frac 1 2 - \frac{47}{64}} < \frac 1 2$
  hold,
  we improve the strategy $\sigma_1$ w.r.t.\ $\rho_1$
  to the $\vee$-strategy 
  $\sigma_2$ with
  \begin{align*}
    \E(\sigma_2) = \{ \vx = \sqrt \vx \}
    .
  \end{align*}
  
  \noindent
  We get $\rho_2 := \mu_{\geq \rho_1} \sem{ \sigma_2 } = \{ \vx \mapsto 1\}$.
  Since 
  $
  \frac78 + \sqrt{1 - \frac{47}{64}} 
  > 
  \frac78 + \sqrt{1 - \frac{60}{64}} 
  =
  \frac98
  > 1
  $,
  we get 
  $\sigma_3 = \{ \vx = \frac78 + \sqrt{\vx - \frac{47}{64}} \}$.
  Finally we get 
  $
    \rho_3 
    := 
    \mu_{\geq \rho_2} \sem{\sigma_3}
    =
    \{ \vx \mapsto 2 \}
  $.
  The algorithm terminates,
  because $\rho_3$ solves $\E$.
  Therefore, $\rho_3 = \mu\sem\E$.
  \qed
\end{example}

\per
\noindent
In the following lemma, we collect basic properties that can be proven by induction straightforwardly:

\begin{lemma}
	\label{l:alg:sequence}
	\per
	Let $\E$ be a system of monotone equations
	over a complete linearly ordered set.
    For all $i \in \N$, 
    let
    $\rho_i$ be the value of the program variable $\rho$ 
    and
    $\sigma_i$ be the value of the program variable $\sigma$ 
    in the $\vee$-strategy improvement algorithm  
    (Algorithm \ref{alg:alg:stratimp})
    after the $i$-th evaluation of the loop-body.
	The following statements hold for all $i \in \N$:

	\begin{enumerate}
	  \item
			$\rho_i \leq \mu\sem\E$.
      \item
			$\rho_i \in \PreSol(\E(\sigma_{i+1}))$.
      \item
			If $\rho_i < \mu\sem\E$, then $\rho_{i+1} > \rho_i$.
      \item
			If $\rho_i = \mu\sem\E$, then $\rho_{i+1} = \rho_i$.
	\end{enumerate}
	
  \noindent
  If the execution of the $\vee$-strategy improvement algorithm terminates, 
  then the least solution $\mu\sem\E$ of $\E$ is computed.
  \qed
\end{lemma}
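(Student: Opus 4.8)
The plan is to establish all four items by a single induction on $i$, after isolating two elementary facts that hold for every $\vee$-strategy $\sigma$ and every variable assignment $\rho : \vX \to \D$. First, since $\sigma$ replaces each occurring subexpression $e_1 \vee \cdots \vee e_k$ by one of its disjuncts and all right-hand sides are monotone, we have $\sem{\E(\sigma)}\rho \leq \sem{\E}\rho$; hence $\PostSol(\E) \subseteq \PostSol(\E(\sigma))$, and in particular $\mu\sem\E \in \PostSol(\E(\sigma))$. Second, if $\sigma'$ is an improvement of $\sigma$ w.r.t.\ a pre-solution $\rho$ of $\E(\sigma)$, then by condition 2 of Definition \ref{d:alg:verbesserung} we have $\sem{\sigma'(e)}\rho \geq \sem{\sigma(e)}\rho$ for every $\vee$-expression $e$ of $\E$ (the two sides are equal unless the left one is strictly larger), and therefore $\rho \leq \sem{\E(\sigma)}\rho \leq \sem{\E(\sigma')}\rho$, i.e.\ $\rho \in \PreSol(\E(\sigma'))$.

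Item 2 follows directly from the second fact and does not depend on the other items. For $i = 0$, the value $\rho_0 = \rho_{\mathrm{init}}$ is a pre-solution of $\E(\sigma_0) = \E(\sigma_{\mathrm{init}})$ by the input precondition, and $\sigma_1 = \Pv(\sigma_0,\rho_0)$ is an improvement of $\sigma_0$ w.r.t.\ $\rho_0$; for $i \geq 1$, the value $\rho_i = \mu_{\geq \rho_{i-1}} \sem{\E(\sigma_i)}$ is a solution, hence a pre-solution, of $\E(\sigma_i)$, and $\sigma_{i+1} = \Pv(\sigma_i,\rho_i)$ is again an improvement. In both cases the second fact yields $\rho_i \in \PreSol(\E(\sigma_{i+1}))$. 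I would remark here that this is precisely what makes the update of $\rho$ to $\mu_{\geq \rho}\sem{\E(\sigma)}$ in the loop body well-defined: together with monotonicity, a pre-solution $\rho$ of $\E(\sigma)$ makes $(\vX \to \D)_{\geq \rho}$ closed under $\sem{\E(\sigma)}$, so the required least fixpoint exists.

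For item 1 I would induct on $i$: $\rho_0 = \rho_{\mathrm{init}} \leq \mu\sem\E$ is the input precondition; assuming $\rho_i \leq \mu\sem\E$, item 2 lets me write $\rho_{i+1} = \mu_{\geq \rho_i}\sem{\E(\sigma_{i+1})} = \bigwedge \{ \rho \in \PostSol(\E(\sigma_{i+1})) \mid \rho \geq \rho_i \}$, and $\mu\sem\E$ belongs to this set (it is a post-solution of $\E(\sigma_{i+1})$ by the first fact and $\geq \rho_i$ by hypothesis), so $\rho_{i+1} \leq \mu\sem\E$. For item 3, assume $\rho_i < \mu\sem\E$; then $\rho_i \notin \Sol(\E)$, since a solution bounded above by the least solution $\mu\sem\E$ must equal it, contradicting $\rho_i < \mu\sem\E$. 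Condition 1 of Definition \ref{d:alg:verbesserung} then gives $\sem{\E(\sigma_{i+1})}\rho_i > \rho_i$, whence $\rho_{i+1} = \sem{\E(\sigma_{i+1})}\rho_{i+1} \geq \sem{\E(\sigma_{i+1})}\rho_i > \rho_i$ using monotonicity and $\rho_{i+1} \geq \rho_i$. For item 4, if $\rho_i = \mu\sem\E \in \Sol(\E)$, then by the first fact together with item 2 we get $\rho_i \leq \sem{\E(\sigma_{i+1})}\rho_i \leq \sem{\E}\rho_i = \rho_i$, so $\rho_i \in \Sol(\E(\sigma_{i+1}))$ and consequently $\rho_{i+1} = \mu_{\geq \rho_i}\sem{\E(\sigma_{i+1})} = \rho_i$. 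Finally, if the algorithm terminates it leaves the loop with its guard false, i.e.\ with a value $\rho \in \Sol(\E)$ which by item 1 satisfies $\rho \leq \mu\sem\E$; minimality of $\mu\sem\E$ forces $\rho = \mu\sem\E$.

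I do not expect a genuine obstacle here, since the statement is intentionally elementary; the only points needing care are the mild interdependence (the induction step for item 1 invokes item 2, whereas item 2 rests only on the structure of the loop), the appeal to Knaster/Tarski inside the sublattice $(\vX \to \D)_{\geq \rho_i}$, and fixing a convention for $\rho_i$ and $\sigma_i$ at indices past termination. For the last point I would simply stipulate that these values remain constant once the loop exits, so that items 3 and 4 and the concluding statement cover those indices trivially.
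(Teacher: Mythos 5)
Your proof is correct, and it matches the paper's intent: the paper gives no explicit argument for Lemma \ref{l:alg:sequence}, remarking only that the properties ``can be proven by induction straightforwardly,'' and your induction—based on the two observations that $\sem{\E(\sigma)}\rho \leq \sem{\E}\rho$ (so $\mu\sem\E$ is a post-solution of every $\E(\sigma)$) and that an improvement step preserves pre-solutions—is exactly the straightforward argument being alluded to. Your side remarks (well-definedness of $\mu_{\geq\rho}\sem{\E(\sigma)}$ via Knaster/Tarski on the sublattice, and the convention for indices after termination) are sensible and consistent with the paper's preliminaries.
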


\per
\noindent
In the following, we apply our algorithm to solve  
systems of \maxmorcave equations.
In the next subsection,
we show that our algorithm terminates in this case.
More precisely,
it returns the least solution at the latest after considering every $\vee$-strategy at most $\abs \vX$ times.
We additionally provide an important characterization of 
$\mu_{\geq\rho}\sem{\E(\sigma)}$
which allows us to compute it using convex optimization techniques.
Here, $\sigma$ are the $\vee$-strategies and $\rho$ are the pre-solutions $\rho$ of $\E(\sigma)$ 
that can be encountered during the execution of the algorithm.

\subsection{Feasibility}

\per
\noindent
In this subsection,
we extend the notion of feasibility as defined in 
Definition~\ref{def:feas}.
We then show that feasibility is preserved during the execution of the $\vee$-strategy improvement algorithm.
In the next subsection,
we finally make use of the feasibility.

We denote by $\E[x_1/\vX_1 , \ldots, x_n/\vX_n]$
the equation system that is obtained from the equation system $\E$ by simultaneously replacing,
for all $i \in \{1,\ldots,n\}$,
every occurrence of a variable from the set $\vX_i$ in the right-hand sides of $\E$ by the value $x_i$.

\begin{definition}[Feasibility]
\label{d:feasibility}
\per
  Let $\E$ be a system of \morcave equations. 
  A finite solution $\rho$ of $\E$ is called \emph{($\E$-)feasible}
  if and only if
  $\rho$ is a feasible fixpoint of $\sem\E$.
  A pre-solution $\rho$ of $\E$ 
  with $\sem\E\rho \gg \neginftyvar$
  is called \emph{($\E$-)feasible}
  if and only if
  $\rho'|_{\vX'}$ is a feasible finite solution of 
  $\E' := \{ \vx = e \in \E \mid \vx \in \vX' \}[\infty / (\vX\setminus\vX')] $, 
  where 
  $\rho' := \mu_{\geq \rho} \sem\E$ 
  and 
  $\vX' := \{ \vx \in \vX \mid \rho'(\vx) < \infty \}$.
  %
  A pre-solution $\rho$ of $\E$ is called \emph{feasible} 
  if and only if 
  $e = \neginfty$ for all $\vx = e \in \E$ with $\sem e \rho = \neginfty$,
  and 
  $\rho |_{\vX'}$ is a feasible pre-solution of 
  $\E' := \{ \vx = e \in \E \mid \vx \in \vX' \} [\neginfty / (\vX\setminus\vX')]$,
  where $\vX' := \{ \vx \mid \vx = e \in \E, \sem e \rho > \neginfty \}$.
%
  %
  \qed
\end{definition}

\begin{example}
\per
  We consider the system 
  $\E = \{ \vx = \sqrt \vx \}$
  of \mcave equations.
  For all $x \in \CR$,
  let $\underline x := \{ \vx \mapsto x \}$.
  From Example \ref{ex:feas:function:1},
  we know that
  the solution $\underline 0$ is not feasible, 
  whereas the solution
  $\underline 1$ is feasible.
  Thus,
  $\underline{x}$ is a feasible pre-solution for all $x \in (0,1]$.
  Note that $\underline 1$ is the only feasible finite solution of $\E$ and
  thus, 
  by Lemma \ref{l:feasible:is:greatest}, 
  the greatest finite pre-solution of $\E$.
  \qed
\end{example}

\begin{example}
\label{ex:strategie:ok}
\per
  Let us consider the system 
  $
    \E = \{ \vx_1 = \vx_2 + 1 \wedge 0, \vx_2 = \sqrt \vx_1 \} 
  $
  of \mcave equations.
  From Example \ref{ex:two:dim:feasible:function}
  it follows that
  $\rho := \{ \vx_1 \mapsto 0, \vx_2 \mapsto 0 \}$ 
  is a feasible finite fixpoint of $\sem\E$.
  Thus, 
  $\{ \vx_1 \mapsto 0, \vx_2 \mapsto x \}$ 
  is a feasible pre-solution for all $x \in [-1,0]$.
  The solution 
  $\{ \vx_1 \mapsto \neginfty, \vx_2 \mapsto \neginfty \}$
  is not feasible,
  since the right-hand sides evaluate to $\neginfty$,
  although they are not $\neginfty$.
  \qed
\end{example}

\per
\noindent
The following two lemma 
imply that our $\vee$-strategy improvement algorithm
stays in the feasible area, whenever it is started in the feasible area.

\begin{lemma}
  \label{l:gro:erh:zul}
  \per
  Let $\E$ be a system of \morcave equations and
  $\rho$ be a feasible pre-solution of $\E$.
  Every pre-solution $\rho'$ of $\E$ with 
  $\rho \leq \rho' \leq \mu_{\geq \rho}\sem\E$ 
  is feasible. 
\end{lemma}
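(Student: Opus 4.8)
The plan is to unwind the recursive definition of feasibility (Definition~\ref{d:feasibility}) and reduce everything to the finite-solution case, where I can invoke Lemma~\ref{l:feasible:is:greatest}. First I would observe that $\mu_{\geq\rho}\sem\E = \mu_{\geq\rho'}\sem\E$ for every $\rho'$ with $\rho \leq \rho' \leq \mu_{\geq\rho}\sem\E$: indeed $\mu_{\geq\rho}\sem\E$ is a solution above $\rho'$, so $\mu_{\geq\rho'}\sem\E \leq \mu_{\geq\rho}\sem\E$, and conversely any solution above $\rho'$ is a solution above $\rho$, giving the reverse inequality. Hence the variable assignment $\rho^* := \mu_{\geq\rho}\sem\E = \mu_{\geq\rho'}\sem\E$ and the index set $\vX' := \{\vx \mid \rho^*(\vx) < \infty\}$ are the \emph{same} whether computed from $\rho$ or from $\rho'$. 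So it suffices to check the defining conditions of feasibility for $\rho'$ using this common data.

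Next I would split according to which clause of Definition~\ref{d:feasibility} applies. In the case $\sem\E\rho' \gg \neginftyvar$ (which follows from monotonicity once $\sem\E\rho \gg \neginftyvar$, since $\rho \leq \rho'$), the feasibility of $\rho$ says that $\rho^*|_{\vX'}$ is a feasible \emph{finite solution} of the residual system $\E' := \{\vx = e \in \E \mid \vx \in \vX'\}[\infty/(\vX\setminus\vX')]$ — but this is a statement purely about $\rho^*$ and $\E'$, with no reference to $\rho$ or $\rho'$ at all. Since $\rho^*$ is unchanged, $\rho'$ inherits feasibility verbatim. In the general case where some right-hand sides evaluate to $\neginfty$ under $\rho$, I first note that if $\sem e \rho = \neginfty$ then the feasibility of $\rho$ forces $e = \neginfty$, hence $\sem e \rho' = \neginfty$ as well, and conversely $\sem e \rho' \geq \sem e \rho$ by monotonicity shows the set $\vX' := \{\vx \mid \vx = e \in \E,\ \sem e \rho > \neginfty\}$ can only grow when passing to $\rho'$; I would argue that it in fact stays the same, because any variable whose right-hand side equals $\neginfty$ keeps that value. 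Then I would recurse: $\rho'|_{\vX'}$ must be shown to be a feasible pre-solution of the smaller system $\E'[\neginfty/(\vX\setminus\vX')]$, and the hypotheses transfer down to this smaller instance (one checks $\rho|_{\vX'} \leq \rho'|_{\vX'} \leq (\mu_{\geq\rho|_{\vX'}}\sem{\E'})|_{\vX'}$, using that the least solution above $\rho$ restricts correctly once the $\neginfty$-components are pinned). Termination of the recursion is guaranteed since $|\vX'|$ strictly decreases, or we land in the finite case handled above.

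The main obstacle will be the bookkeeping in the $\neginfty$-case: one must verify carefully that the residual systems $\E'$ and the restricted least solutions line up, i.e.\ that $\mu_{\geq\rho}\sem\E$ restricted to $\vX'$ really coincides with $\mu_{\geq\rho|_{\vX'}}\sem{\E'[\neginfty/(\vX\setminus\vX')]}$, so that the inductive hypothesis of the lemma (applied to the smaller system) is genuinely applicable to $\rho|_{\vX'}$ and $\rho'|_{\vX'}$. This is where the assumption that $\dom$ restrictions behave well and that $\sem\E$ is \morcave{} (so the earlier machinery of Section~\ref{s:morcave} applies componentwise) is used. Everything else — the monotonicity arguments and the invariance of $\mu_{\geq\rho}\sem\E$ under moving the lower bound up — is routine.
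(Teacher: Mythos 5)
Your argument is correct and is exactly the expansion of what the paper does: the paper dismisses this lemma as an immediate consequence of Definition~\ref{d:feasibility}, and your write-up just makes the two routine observations explicit, namely that $\mu_{\geq\rho}\sem\E = \mu_{\geq\rho'}\sem\E$ (so the data entering the definition of feasibility is unchanged) and that the $\neginfty$-clause transfers by monotonicity together with the syntactic condition $e = \neginfty$. No gap; same route, just spelled out.
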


\begin{proof}
  The statement is  an immediate consequence of the definition.
  \qed
\end{proof}


\begin{lemma}
	\label{l:rat:stratimperhaeltzulaessigkeit}
    \per
	Let $\E$ be a system of \maxmorcave equations, 
	$\sigma$ be a $\vee$-strategy for $\E$, 
	$\rho$
	be a feasible 
    solution 
	of $\E(\sigma)$,
	and
	$\sigma'$ be an improvement of $\sigma$ w.r.t.\ $\rho$.
	Then $\rho$ is a feasible pre-solution of $\E(\sigma')$.
%
\end{lemma}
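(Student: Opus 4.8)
The goal is to show that $\rho$ remains a feasible pre-solution after switching from $\sigma$ to the improved strategy $\sigma'$. I would first unwind Definition~\ref{d:feasibility}: since $\rho$ is a feasible solution of $\E(\sigma)$, there is a partition of the variables into the block $\vX^- := \{\vx \mid \sem{\sigma(e)}\rho = \neginfty\}$ (for which the corresponding right-hand sides of $\E(\sigma)$ are literally $\neginfty$) and its complement $\vX^+$, and $\rho|_{\vX^+}$ is a feasible finite solution of the reduced system $\E(\sigma)'$ obtained by substituting $\neginfty$ for the $\vX^-$-variables. The plan is to transport each of the three clauses of the feasibility definition for pre-solutions from $\E(\sigma)$ to $\E(\sigma')$.

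First I would check the pre-solution property: $\rho$ is a solution of $\E(\sigma)$, so $\rho = \sem{\E(\sigma)}\rho$; by the definition of an improvement (Definition~\ref{d:alg:verbesserung}, clause~2), for every expression $e$ we have $\sem{\sigma'(e)}\rho \geq \sem{\sigma(e)}\rho$, hence $\sem{\E(\sigma')}\rho \geq \sem{\E(\sigma)}\rho = \rho$, so $\rho \in \PreSol(\E(\sigma'))$. Next I would handle the condition on the $\neginfty$-equations: I claim $\sem{\sigma'(e)}\rho = \neginfty$ forces $\sigma'(e) = \neginfty$ as a syntactic expression. Indeed if $\sem{\sigma'(e)}\rho = \neginfty$ then by the improvement inequality $\sem{\sigma(e)}\rho = \neginfty$ too, so $\vx \in \vX^-$ and the feasibility of $\rho$ for $\E(\sigma)$ tells us $\sigma(e) = \neginfty$; but then clause~2 of the improvement definition cannot have changed the choice (a change would require $\sem{\sigma'(e)}\rho > \sem{\sigma(e)}\rho = \neginfty$), so $\sigma'(e) = \sigma(e) = \neginfty$. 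This simultaneously shows that the "bottom block'' $\{\vx \mid \sem{\sigma'(e)}\rho = \neginfty\}$ of $\E(\sigma')$ is contained in $\vX^-$; combined with the pre-solution inequality $\sem{\E(\sigma')}\rho \geq \rho$ and the fact that $\rho|_{\vX^-} = \neginftyvar|_{\vX^-}$ (finiteness of $\rho|_{\vX^+}$ plus $\rho$ a solution of $\E(\sigma)$), one gets that this block is exactly $\vX^-$.

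It then remains to show that $\rho|_{\vX^+}$ is a feasible pre-solution of the reduced system $\E(\sigma')' := \{\vx = \sigma'(e) \mid \vx \in \vX^+\}[\neginfty/\vX^-]$. On $\vX^+$ the right-hand sides of $\E(\sigma')'$ dominate those of $\E(\sigma)'$ pointwise at $\rho|_{\vX^+}$ (again by the improvement inequality), so $\rho|_{\vX^+}$ is a pre-solution of $\E(\sigma)'$ whose value does not exceed $\mu_{\geq \rho|_{\vX^+}}\sem{\E(\sigma)'} = \rho|_{\vX^+}$, i.e. it sits between the known feasible finite solution $\rho|_{\vX^+}$ and itself; by Lemma~\ref{l:gro:erh:zul} it is a feasible pre-solution of $\E(\sigma)'$. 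The real work is to pass from feasibility with respect to $\E(\sigma)'$ to feasibility with respect to $\E(\sigma')'$: one wants to reuse the dependency partition $\vX_1 \stackrel{\sem{\E(\sigma)'},\rho}\rightarrow \cdots \stackrel{}\rightarrow \vX_k$ and the witnessing strict-below pre-fixpoints $\rho_j$ from the feasibility of $\rho$ for $\E(\sigma)'$. Here I would invoke that $\sem{\sigma'(e)}\rho = \sem{\sigma(e)}\rho$ for every $\vx \in \vX^+$ — this holds because $\rho$ is a solution of $\E(\sigma)$ and a pre-solution of $\E(\sigma')$, so $\rho(\vx) = \sem{\sigma(e)}\rho \leq \sem{\sigma'(e)}\rho$, while the other inequality would contradict clause~1 of the improvement definition applied on the $\vX^+$-block only if $\rho$ were not already a solution there; more carefully, since $\rho \in \Sol(\E(\sigma))$ one has to argue, using that $\rho|_{\vX^+}$ is finite and $\mu_{\geq\rho}\sem{\E(\sigma)} = \rho$, that the improved strategy agrees with $\sigma$ on the finite block. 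Granting this equality of values at $\rho$, the maps $\sem{\E(\sigma)'}$ and $\sem{\E(\sigma')'}$ coincide at $\rho|_{\vX^+}$ and in a neighbourhood below it along the relevant directions, so the dependency relations $\stackrel{\sem{\E(\sigma')'},\rho}\rightarrow$ and the witnessing pre-fixpoints $\rho_j$ carry over verbatim, and Definition~\ref{def:feas} is satisfied for $\E(\sigma')'$.

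**Main obstacle.** The delicate point is the last one: establishing that $\sem{\sigma'(e)}\rho = \sem{\sigma(e)}\rho$ on the finite block $\vX^+$, i.e. that an improvement of $\sigma$ w.r.t.\ the \emph{solution} $\rho$ of $\E(\sigma)$ cannot strictly raise any right-hand side whose current value is already finite and already equal to $\rho(\vx)$. This needs the hypothesis that $\rho$ is a \emph{feasible} solution (so $\rho = \mu_{\geq\rho}\sem{\E(\sigma)}$ on the finite part via Lemma~\ref{l:feasible:is:greatest}), together with the precise wording of Definition~\ref{d:alg:verbesserung}, clause~1, which only forces a strict increase when $\rho \notin \Sol(\E)$ — and even then the increase is of $\sem{\E(\sigma')}\rho$ as a whole, compatible with staying constant on $\vX^+$ and jumping only on variables outside $\vX^-\cup\vX^+$ (there are none) or on the $\neginfty$-block (where it jumps from $\neginfty$). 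Getting this bookkeeping exactly right, and confirming that the improvement can only act by enlarging the finite block $\vX^+$ "from above'' (never perturbing the values already attained), is where I expect to spend the most care; once it is pinned down, the transport of the feasibility witnesses is mechanical.
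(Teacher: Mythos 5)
There is a genuine gap, and it sits exactly at the step you flag as the ``main obstacle'': the claim that $\sem{\sigma'(e)}\rho = \sem{\sigma(e)}\rho$ on the finite block is simply false. An improvement is allowed (and is typically forced, by clause~1 of Definition~\ref{d:alg:verbesserung} when $\rho\notin\Sol(\E)$) to strictly raise right-hand sides at variables whose current value is finite; clause~2 only says that any \emph{changed} choice is strictly larger at $\rho$. The paper's own running example shows this: $\rho=\{\vx_1\mapsto 0,\vx_2\mapsto -1\}$ is a feasible solution of $\E(\sigma_1)=\{\vx_1=\vx_2+1\wedge 0,\ \vx_2=-1\}$, and the improvement $\sigma_2$ replaces the second right-hand side by $\sqrt{\vx_1}$, whose value at $\rho$ is $0>-1$, even though $\vx_2$ lies in your finite block $\vX^+$. (For the same reason your claim that the $\neginfty$-block of $\E(\sigma')$ at $\rho$ equals $\vX^-$ fails --- an improvement may lift a $\neginfty$ right-hand side to a finite one; only the inclusion you prove first is true, and only that inclusion is needed.) Since the equality fails, the transport of the dependency partition and of the witnesses ``verbatim'' from $\E(\sigma)$ to $\E(\sigma')$ does not go through.

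A second, related problem is that you are aiming the feasibility argument at the wrong point. By Definition~\ref{d:feasibility}, feasibility of the \emph{pre-solution} $\rho$ of $\E(\sigma')$ is a statement about $\rho^* := \mu_{\geq\rho}\sem{\E(\sigma')}$: one must exhibit a partition and strictly-below pre-fixpoint witnesses making $\rho^*$ (restricted to its finite part, with $\infty$ substituted elsewhere) a feasible finite solution of the reduced new system. In general $\rho^* \neq \rho$, so establishing properties of $\rho$ with respect to $\E(\sigma')'$ does not discharge the definition. The paper's proof handles precisely the part your argument omits: it takes the old partition $\vX_1\dcup\cdots\dcup\vX_k$ witnessing feasibility of $\rho$ for $\E(\sigma)$, removes from each block the set of ``improved'' variables $\{\vx \mid \rho^*(\vx) > \rho(\vx)\}$, appends that set as a new last block (whose strictly-below witness is $\rho$ itself, since $\rho \ll \rho^*$ there), and then verifies the dependency chain and the $\mu_{\geq}$ conditions at $\rho^*$ for $\sem{\E(\sigma')}$, using Lemma~\ref{l:partition:nach:oben:unabhae} and monotonicity. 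Without some construction of this kind --- in particular without a block collecting the variables whose values actually move --- the proof cannot be completed along the lines you propose.
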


\begin{proof}
  \newcommand{\vXimp}{\vX^{\mathsf{imp}}}
  \newcommand{\Eold}{\E^{\mathsf{old}}}
  \newcommand{\vXold}{\vX^\mathsf{old}}
  \per
  Let $\rho^* := \mu_{\geq \rho}\sem{\E(\sigma')}$.
  We w.l.o.g.\ assume that 
  $\neginftyvar \ll \rho^* \ll \inftyvar$.
  Hence, 
  $\rho \ll \inftyvar$.
  Let 
  \begin{align*}
    \vXold 
      &:= \{ \vx \in \vX \mid \rho(\vx) > \neginfty \}, \text{and } \\
    \Eold 
      &:= \{ \vx = e \in \E(\sigma) \mid \vx \in \vXold \} 
        [\neginfty / (\vX \setminus \vXold)]
     .
  \end{align*}
  
  \per
  \noindent
  Hence, $\rho|_{\vXold}$ is a feasible finite solution of $\Eold$,
  i.e., a feasible finite fixpoint of $\sem\Eold$.
  Therefore,
  there exist
  $\vX_1 \dcup\cdots\dcup \vX_k = \vXold$
  with 
  \begin{align}
    \vX_1 
    \stackrel{\sem\Eold,\rho|_{\vXold}}\rightarrow
    \cdots
    \stackrel{\sem\Eold,\rho|_{\vXold}}\rightarrow 
    \vX_k
  \end{align}
  
  \per
  \noindent
  such that, for each $j \in \{1,\ldots,k\}$,
  there exists some pre-fixpoint $\rho'$ 
  of $\sem{\Eold} \leftarrow \rho|_{\vXold \setminus \vX_j}$
  with $\rho' \ll \rho|_{\vX_j}$
  such that 
  $
    \mu_{\geq\rho'}(\sem{\Eold} \leftarrow \rho|_{\vXold \setminus \vX_j})
    =
    \rho|_{\vX_j}
  $.

  \per
  Let
  $\vXimp := \{ \vx \in \vX \mid \rho^*(\vx) > \rho(\vx) \}$, 
  $\vX_j' := \vX_j \setminus \vXimp$
  for all 
  $j \in \{1,\ldots,k\}$, and
  $\vX_{k+1}' := \vXimp$.
  Obviously,
  we have
  $\vX_1' \dcup \cdots 
  \dcup \vX_{k+1}' = \vX$.
  It remains to show that the following properties are fulfilled:
  \begin{enumerate}
    \item
      \per
	  $
	    \vX_1' 
	    \stackrel{\sem{\E(\sigma')},\rho^*}\rightarrow
	    \cdots
	    \stackrel{\sem{\E(\sigma')},\rho^*}\rightarrow 
	    \vX_{k+1}'
	  $
	\item
      \per
	  For each $j \in \{1,\ldots,k+1\}$,
	  there exists some pre-fixpoint 
	  $\rho'$ with $\rho' \ll \rho^* |_{\vX_j'}$
	  such that 
	  $
	    \mu_{\geq\rho'}(\sem{\E(\sigma')} \leftarrow \rho^*|_{\vX\setminus \vX_j'}) 
	    =
	    \rho^*|_{\vX_j'}
	  $.
  \end{enumerate}

  \per
  \noindent
  In order to prove statement 1,
  let $j \in \{1,\ldots,k\}$.
  We have to show that 
  \begin{align*}
    \vX_1' \dcup\cdots\dcup \vX_j' 
    \stackrel{\sem{\E(\sigma')},\rho^*}\rightarrow
    \vX_{j+1}' \dcup\cdots\dcup \vX_{k+1}' 
    .
  \end{align*}
  
  \per
  \noindent
      Since 
      $
        \vX_1 \dcup\cdots\dcup \vX_j
        \stackrel{\sem{\Eold},\rho|_{\vXold}}\rightarrow
        \vX_{j+1} \dcup\cdots\dcup \vX_{k}
      $,
      there exists some variable assignment 
      $\rho' : \vX_{j+1} \dcup\cdots\dcup \vX_{k} \to \R$
      with
      $\rho' \ll \rho|_{\vX_{j+1} \dcup\cdots\dcup \vX_{k}}$ 
      such that 
      \begin{align}
       \label{eq:l:impr:1}
        (\sem{\Eold} (\rho|_{\vXold} \oplus \rho'))|_{\vX_1 \dcup\cdots\dcup \vX_j}
        =
        (\sem{\Eold} (\rho|_{\vXold}))|_{\vX_1 \dcup\cdots\dcup \vX_j}
        .
      \end{align}
      
      \per
      \noindent
      We define $\rho'' : \vX_{j+1}' \dcup\cdots\dcup \vX_{k+1}' \to \R$ by
      \begin{align*}
        \rho'' (\vx) 
        {=}
        \begin{cases}
          \rho'(\vx) &\text{if } \vx \in \vX_{j+1}' \dcup\cdots\dcup \vX_{k}' \\
          \rho(\vx) &\text{if } \vx \in \vX_{k+1}' \text{ and } \vx \in \vXold \\
          \rho^*(\vx) - 1 &\!\text{if } \vx \in \vX_{k+1}' \text{ and } \vx\notin\vXold \\
        \end{cases}
        && 
        \text{for all } \vx \in \vX_{j+1}' \dcup\cdots\dcup \vX_{k+1}'
        .
      \end{align*}
  
  \ok
  \noindent
  By construction,
  we have
  $
    \rho'' 
    \ll 
    \rho^*|_{\vX_{j+1}' \dcup\cdots\dcup \vX_{k+1}'}
  $.
  Hence,
  we get 
  \begin{align*}
      (\sem{\E(\sigma')} (\rho^*))|_{\vX_1' \dcup\cdots\dcup \vX_j'}
        &\geq
        (\sem{\E(\sigma')} (\rho^* \oplus \rho''))|_{\vX_1' \dcup\cdots\dcup \vX_j'}
        & \text{($\rho^* \geq \rho^* \oplus \rho''$)}
      \\&\geq
        (\sem{\Eold} (\rho|_{\vXold} \oplus \rho'))|_{\vX_1' \dcup\cdots\dcup \vX_j'}
      \\&=
        (\sem{\Eold} (\rho|_{\vXold}))|_{\vX_1' \dcup\cdots\dcup \vX_j'}
        & \text{(because of \eqref{eq:l:impr:1})}
      \\&=
        (\sem{\Eold} (\rho^*|_{\vXold}))|_{\vX_1' \dcup\cdots\dcup \vX_j'}
        & \!\!\!\!\text{(because of Lemma \ref{l:partition:nach:oben:unabhae})}
      \\&=
        (\sem{\E(\sigma')} (\rho^*))|_{\vX_1' \dcup\cdots\dcup \vX_j'}
  .
  \end{align*}
  
  \noindent
  Thus,
  $
     (\sem{\E(\sigma')} (\rho^* \oplus \rho''))|_{\vX_1' \dcup\cdots\dcup \vX_j'}
     =
     (\sem{\E(\sigma')} (\rho^*))|_{\vX_1' \dcup\cdots\dcup \vX_j'}
  $.
  This proves statement 1.
  
  \per
  In order to prove statement 2,
  let $j \in \{1,\ldots,k+1\}$.
  We distinguish 2 cases.
  Firstly, 
  assume that $j \leq k$.
  Since $\rho|_{\vXold}$ is a feasible finite fixpoint of $\sem{\Eold}$,
  there exists some pre-fixpoint
  $\rho'$ with $\rho' \ll \rho|_{\vX_j} = \rho^* |_{\vX_j}$
  such that 
  $
    \mu_{\geq\rho'} (\sem{\Eold} \leftarrow \rho|_{\vXold\setminus \vX_j})
    =
    \rho|_{\vX_j}
    = 
    \rho^* |_{\vX_j}
  $.
  Using monotonicity,
  we get 
  $
    \mu_{\geq\rho'} (\sem{\Eold} \leftarrow \rho^*|_{\vXold\setminus \vX_j})
    \allowbreak
    =
    \rho|_{\vX_j}
    = 
    \rho^* |_{\vX_j}
  $.
  Hence,
  $\rho'|_{\vX_j'} : \vX_j' \to \R$, 
  $\rho'|_{\vX_j'} \ll \rho|_{\vX_j'} = \rho^* |_{\vX_j'}$,
  and  
  $
    \mu_{\geq\rho'|_{\vX_j'}} (\sem{\Eold} \allowbreak \leftarrow \rho^*|_{\vXold\setminus \vX_j'}) 
    =
    \mu_{\geq\rho'|_{\vX_j'}} 
      (\sem{\E(\sigma')} \leftarrow \rho^*|_{\vX\setminus \vX_j'}) 
    =
    \rho^*|_{\vX_j'}
  $.
  This proves statement 2 for $j \leq k$.
  Now, assume that $j = k+1$.
  By definition of $\vX_{k+1}'$,
  $\rho|_{\vX_{k+1}'} \ll \rho^*|_{\vX_{k+1}'}$.
  Moreover, 
  we get immediately
  that 
  $\rho|_{\vX_{k+1}'}$
  is a pre-fixpoint of
  $\sem{\E(\sigma')} \leftarrow \rho^*|_{\vX\setminus \vX_{k+1}'}$ 
  and 
  $
    \mu_{\geq \rho|_{\vX_{k+1}'}}
      (\sem{\E(\sigma')} \leftarrow \rho^*|_{\vX\setminus \vX_{k+1}'}) 
    =
    \rho^*|_{\vX_{k+1}'}
  $.
  This proves statement 2.
  \qed
\end{proof}

\begin{example}
  \per
  We continue Example \ref{ex:einfach:reicht:nicht}.
  Obviously,
  $\rho = \{ \vx_1 \mapsto 0, \vx_2 \mapsto -1 \}$ is a feasible solution of 
  $\E(\sigma_1) =  \{ \vx_1 = \vx_2 + 1 \wedge 0, \vx_2 = -1 \}$.
  The $\vee$-strategy $\sigma_2$ is 
  an improvement of the $\vee$-strategy $\sigma_1$
  w.r.t.\ $\rho$.
  By lemma \ref{l:rat:stratimperhaeltzulaessigkeit},
  $\rho$ is also a feasible pre-solution of 
  $\E(\sigma_2) = \{ \vx_1 = \vx_2 + 1 \wedge 0, \vx_2 = \sqrt \vx_1 \}$.
  The fact that 
  $\rho$ is a feasible pre-solution of $\E(\sigma_2)$ is also shown in 
  Example \ref{ex:strategie:ok}.
  \qed
\end{example}

\per
\noindent
The above two lemmas ensure that our $\vee$-strategy improvement 
algorithm stays in the feasible area,
whenever it is started in the feasible area.
In order to start in the feasible area,
we in the following simply assume w.l.o.g.\ that
each equation of $\E$ is of the form $\vx = \neginfty \vee e$.
We say that such a system of fixpoint equations is in \emph{standard form}.
Then, we start our $\vee$-strategy improvement algorithm with 
a $\vee$-strategy 
$\sigma_\mathrm{init}$
such that 
$\E(\sigma_\mathrm{init}) = \{ \vx = \neginfty \mid \vx \in \vX \}$.
In consequence, $\neginftyvar$ is a feasible solution of
$\E(\sigma_\mathrm{init})$.
We get:

\begin{lemma}
	\label{l:alg:sequence:feasible}
	\per
	Let $\E$ be a system of \maxmorcave equations.
    For all $i \in \N$, 
    let
    $\rho_i$ be the value of the program variable $\rho$ 
    and
    $\sigma_i$ be the value of the program variable $\sigma$ 
    in the $\vee$-strategy improvement algorithm  
    (Algorithm \ref{alg:alg:stratimp})
    after the $i$-th evaluation of the loop-body.
	Then, $\rho_i$ is a feasible pre-solution of $\E(\sigma_{i+1})$ for all $i \in \N$.
  \qed
\end{lemma}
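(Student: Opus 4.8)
The plan is an induction on $i$, with Lemmas~\ref{l:gro:erh:zul} and~\ref{l:rat:stratimperhaeltzulaessigkeit} doing essentially all the work. First I would collect the relevant facts about a run of Algorithm~\ref{alg:alg:stratimp} on a system $\E$ of $\vee$-morcave equations in standard form: the initial strategy satisfies $\E(\sigma_{\mathrm{init}}) = \{ \vx = \neginfty \mid \vx \in \vX \}$, so $\rho_0 = \rho_{\mathrm{init}} = \neginftyvar$ is a solution of $\E(\sigma_0)$, and it is even a \emph{feasible} solution since every right-hand side of $\E(\sigma_0)$ is literally $\neginfty$; for every $i$ the system $\E(\sigma_i)$ is a system of morcave equations, because a $\vee$-strategy selects one morcave immediate sub-expression per $\vee$, so Lemma~\ref{l:gro:erh:zul} is applicable to it; and after the $i$-th iteration we have $\sigma_{i+1} = \Pv(\sigma_i,\rho_i)$, hence $\sigma_{i+1}$ is by definition an improvement of $\sigma_i$ w.r.t.\ $\rho_i$, and $\rho_i = \mu_{\geq\rho_{i-1}}\sem{\E(\sigma_i)}$, hence $\rho_i$ is in particular a solution of $\E(\sigma_i)$ (this remains true, with stuttering, once the algorithm has terminated).

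For the base case $i = 0$ I would simply observe that $\rho_0$ is a feasible solution of $\E(\sigma_0)$ and that $\sigma_1 = \Pv(\sigma_0,\rho_0)$ is an improvement of $\sigma_0$ w.r.t.\ $\rho_0$, so Lemma~\ref{l:rat:stratimperhaeltzulaessigkeit} applied to the $\vee$-morcave system $\E$ yields that $\rho_0$ is a feasible pre-solution of $\E(\sigma_1)$, which is exactly the claim for $i = 0$.

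For the inductive step I would assume that $\rho_{i-1}$ is a feasible pre-solution of $\E(\sigma_i)$. The first move is to upgrade $\rho_i$ to a feasible solution of $\E(\sigma_i)$: since $\rho_i = \mu_{\geq\rho_{i-1}}\sem{\E(\sigma_i)}$ is a fixpoint of $\sem{\E(\sigma_i)}$, and hence a pre-solution, with $\rho_{i-1} \leq \rho_i \leq \mu_{\geq\rho_{i-1}}\sem{\E(\sigma_i)}$, Lemma~\ref{l:gro:erh:zul} applied to the morcave system $\E(\sigma_i)$, its feasible pre-solution $\rho_{i-1}$, and the pre-solution $\rho_i$ shows that $\rho_i$ is feasible; being also a solution, it is a feasible solution of $\E(\sigma_i)$. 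The second move is to push feasibility across the improvement step: as $\sigma_{i+1} = \Pv(\sigma_i,\rho_i)$ is an improvement of $\sigma_i$ w.r.t.\ $\rho_i$, Lemma~\ref{l:rat:stratimperhaeltzulaessigkeit} yields that $\rho_i$ is a feasible pre-solution of $\E(\sigma_{i+1})$, which closes the induction.

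Thus the argument is almost purely bookkeeping once the two auxiliary lemmas are available; there is no hard combinatorial or analytic step. The only points that need a little care are that $\mu_{\geq\rho_{i-1}}\sem{\E(\sigma_i)}$ genuinely exists (guaranteed by the analysis of morcave equation systems and already used implicitly in Lemma~\ref{l:alg:sequence}), and that the three variants of ``feasible'' in Definition~\ref{d:feasibility} fit together as needed --- in particular that a feasible pre-solution of $\E(\sigma_i)$ which happens to be a solution qualifies as a ``feasible solution'' in the hypothesis of Lemma~\ref{l:rat:stratimperhaeltzulaessigkeit}. I expect that reconciliation of notions to be the only (minor) obstacle.
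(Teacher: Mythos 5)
Your induction is exactly the argument the paper intends: the lemma is stated with no explicit proof precisely because it follows from the standard-form initialization (making $\neginftyvar$ a feasible solution of $\E(\sigma_0)$) together with Lemma~\ref{l:gro:erh:zul} and Lemma~\ref{l:rat:stratimperhaeltzulaessigkeit}, which is what you carry out. Your handling of the side points (existence of $\mu_{\geq\rho_{i-1}}\sem{\E(\sigma_i)}$, and that a solution which is a feasible pre-solution qualifies as a feasible solution in Lemma~\ref{l:rat:stratimperhaeltzulaessigkeit}) is also consistent with Definition~\ref{d:feasibility}, so the proposal is correct and essentially identical to the paper's reasoning.
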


\begin{example}
  \per
  We again consider the system
  $
    \E 
    = 
    \{ 
      \vx_1 = \neginfty \vee \vx_2 + 1 \wedge 0, \;
      \vx_2 = \neginfty \vee -1 \vee \sqrt \vx_1
    \}
  $
  of \maxmorcave equations
  introduced in Example \ref{ex:einfach:reicht:nicht}.
  A run of our $\vee$-strategy improvement algorithm gives us
  \begin{align*}
    \E(\sigma_0) &= \{ \vx_1 = \neginfty ,\; \vx_2 = \neginfty \} &
    \rho_0 &= \{ \vx_1 \mapsto \neginfty ,\; \vx_2 \mapsto \neginfty \} \\ 
    \E(\sigma_1) &= \{ \vx_1 = \neginfty ,\; \vx_2 = -1 \} &
    \rho_1 &= \{ \vx_1 \mapsto \neginfty ,\; \vx_2 \mapsto -1 \} \\ 
    \E(\sigma_2) &= \{ \vx_1 = \vx_2 + 1 \wedge 0 ,\; \vx_2 = -1 \} &
    \rho_2 &= \{ \vx_1 \mapsto 0 ,\; \vx_2 \mapsto -1 \} \\ 
    \E(\sigma_3) &= \{ \vx_1 = \vx_2 + 1 \wedge 0 ,\; \vx_2 = \sqrt {\vx_1} \} &
    \rho_3 &= \{ \vx_1 \mapsto 0 ,\; \vx_2 \mapsto 0 \} 
  \end{align*}
  
  \noindent
  By Lemma \ref{l:alg:sequence:feasible},
  $\rho_i$ is a feasible pre-solution of $\E(\sigma_{i+1})$
  for all $i = \{0,1,2\}$.
  \qed
\end{example}

\subsection{Evaluating $\vee$-Strategies / Solving Systems of \Morcave Equations}


\per
\noindent
It remains to develop a method for computing 
$\mu_{\geq \rho}\sem{\E}$
under the assumption that
$\rho$ is a feasible pre-solution of the 
system $\E$ of \morcave equations.
This is an important step in our $\vee$-strategy improvement algorithm 
(Algorithm \ref{alg:alg:stratimp}).
Before doing this, 
we introduce the following notation for the sake of simplicity:

\begin{definition}
  \label{d:supsol}
  \per
  Let $\E$ be a system of \morcave equations
  and $\rho$ a pre-solution of $\E$.
  Let 
  \begin{align}
    \label{eq:l:eindeutige:loesung:1}
    \vX^\neginfty_\rho &:= \{ \vx \mid \vx = e \in \E ,\; \sem{e} \rho = \neginfty \} \\
    \label{eq:l:eindeutige:loesung:2}
    \vX^\infty_\rho       &:= \{ \vx \mid \vx = e \in \E ,\; \sem{e} \rho = \infty \} \\
    \label{eq:l:eindeutige:loesung:3}
    \vX'_\rho               &:= \vX \setminus (\vX^\neginfty_\rho \cup \vX^\infty_\rho) 
                           = \{ \vx \mid \vx = e \in \E ,\; \sem{e} \rho \in \R \}
    \\
    \label{eq:l:eindeutige:loesung:4}
    \E'_\rho             &= \{ \vx = e \in \E \mid \vx \in \vX'_\rho \} [\neginfty/\vX^\neginfty_\rho, \infty/\vX^\infty_\rho]
  \end{align}
    
  \per
  \noindent
  The pre-solution $\suppresol_\rho\sem\E$ of $\E$ is defined by
  \begin{align}
    \label{eq:l:eindeutige:loesung:5}
    \suppresol_\rho\sem\E(\vx)
    &:=
    \begin{cases} 
      \neginfty & \text{if } \vx \in \vX^\neginfty_\rho \\
      \sup\; \{ \hat\rho(\vx) \mid \hat\rho:\vX'_\rho\to\R ,\; \hat\rho \leq \sem{\E'}\hat\rho\} & \text{if }\vx\in\vX'_\rho \\
      \infty & \text{if }\vx\in\vX^\infty_\rho
    \end{cases}
  \end{align}

  \noindent
  for all $\vx \in \vX$.  
  \qed
\end{definition}

\begin{remark}
  \per
  The variables assignment $\suppresol_\rho\sem\E$ is by construction a pre-solution of $\E$,
  but, as we will see in Example \ref{ex:strict:ineq:rho:star},
  not necessarily a solution of $\E$.
  \qed
\end{remark}

\ok
\noindent
Under some constraints,
we can compute $\suppresol_\rho\sem\E$ by solving 
$\abs\vX$ convex optimisation problems of linear size.
This can be done by general convex optimization methods.
For further information on convex optimization, we refer, for instance,
to \citet{nemirovski05}.

\begin{lemma}
  \label{l:convex:opt}
  Let $\E$ be a system of \mcave equations and $\rho$ a pre-solution of $\E$.
  Then, the pre-solution $\suppresol_\rho\sem\E$ of $\E$ can be computed by solving at most 
  $\abs\vX$ convex optimization problems.
\end{lemma}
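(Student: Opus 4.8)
The plan is to read the required optimization problems directly off Definition~\ref{d:supsol}. For $\vx\in\vX^\neginfty_\rho$ the value of $\suppresol_\rho\sem\E$ is $\neginfty$ and for $\vx\in\vX^\infty_\rho$ it is $\infty$, so neither set contributes any computation. For each $\vx\in\vX'_\rho$ the value is, by \eqref{eq:l:eindeutige:loesung:5}, the optimal value of
\[
  \text{maximize } \hat\rho(\vx)\quad\text{subject to}\quad \hat\rho\in\R^{\vX'_\rho},\ \ \hat\rho\le\sem{\E'_\rho}\hat\rho .
\]
This yields $\abs{\vX'_\rho}\le\abs\vX$ problems (none if $\vX'_\rho=\emptyset$), so it remains to show each of them is a convex optimization problem: the objective $\hat\rho\mapsto\hat\rho(\vx)$ is linear, and I must show the feasible set $F:=\{\hat\rho\in\R^{\vX'_\rho}\mid\hat\rho\le\sem{\E'_\rho}\hat\rho\}$ is convex.

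First I would check that $\E'_\rho$ is again a system of \mcave equations, so that its right-hand sides are \mcave maps $\CR^{\vX'_\rho}\to\CR$. Replacing a variable by the constant $\neginfty$ or $\infty$ in a right-hand side $e$ produces the function of the remaining variables obtained by feeding $\sem e$ that constant in the corresponding coordinate; for any choice $I:\vX'_\rho\to\{\neginfty,\mathsf{id},\infty\}$ the $I$-restriction of this function is exactly an $I'$-restriction of $\sem e$ for the map $I'$ that extends $I$ by $\neginfty$, resp.\ $\infty$, on the substituted coordinates, so all three defining conditions of concavity on $\CR^n$ carry over from $\sem e$; monotonicity is trivially preserved.

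Next I would prove that a single constraint $\hat\rho(\vx)\le g(\hat\rho)$ with $g$ \mcave cuts out a convex subset of $\R^{\vX'_\rho}$; intersecting over the finitely many equations then gives convexity of $F$. Here the infinities need care. On $\R^{\vX'_\rho}$ the function $g$ is finite on the convex set $\fdom(g)$ and concave there; by the clause of the definition of concavity stating that $g(x)<\infty$ for all $x\in\R^{\vX'_\rho}$ once $\fdom(g)\neq\emptyset$, the value $\infty$ does not occur, so off $\fdom(g)$ the value of $g$ is $\neginfty$ and the constraint $\hat\rho(\vx)\le g(\hat\rho)$ is violated there. Hence the constraint set equals $\{\hat\rho\in\fdom(g)\mid\hat\rho(\vx)-g(\hat\rho)\le 0\}$, the intersection of the convex set $\fdom(g)$ with a sublevel set of the convex function $\hat\rho\mapsto\hat\rho(\vx)-g(\hat\rho)$, hence convex. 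That we are in the case $\fdom(g)\neq\emptyset$ (rather than the degenerate case where $g$ is never finite) is what the hypothesis "$\rho$ is a pre-solution of $\E$" buys us: for $\vx\in\vX'_\rho$ we have $\sem{e}\rho\in\R$, and by monotonicity the substitutions only raise the right-hand side, so $g$ takes a finite value, which keeps $F$ a genuine convex region.

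The hard part will be exactly this bookkeeping with $\pm\infty$: making sure the affine-versus-concave inequalities describe convex sets and do not degenerate at the boundary of the finiteness domain, and pinning down the clause of the definition of concavity on $\CR^n$ that excludes the value $\infty$. Everything else — that there are at most $\abs\vX$ problems, that the objectives are linear, and that the optimal value of the $\vx$-th problem is precisely $\suppresol_\rho\sem\E(\vx)$ — is immediate from Definition~\ref{d:supsol}.
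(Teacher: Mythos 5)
Your overall route is the same as the paper's: for each $\vx\in\vX'_\rho$ you read off from Definition~\ref{d:supsol} the problem of maximizing the linear objective $\hat\rho(\vx)$ over $\{\hat\rho:\vX'_\rho\to\R\mid \hat\rho\leq\sem{\E'_\rho}\hat\rho\}$, observe that there are at most $\abs\vX$ such problems, and argue convexity of the feasible set from concavity of the (substituted) right-hand sides. The paper's own proof does exactly this in three lines: since $\mathsf{id}$ is affine and $\sem{\E'_\rho}$ is concave as a map from $\vX'_\rho\to\R$ into $\vX'_\rho\to(\R\cup\{\neginfty\})$, the map $\mathsf{id}-\sem{\E'_\rho}$ is convex and the constraint $(\mathsf{id}-\sem{\E'_\rho})\hat\rho\leq 0$ describes a convex set. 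Your additional observation that substitution of $\neginfty$/$\infty$ preserves \mcave{}ness (via the $I'$-restrictions) is correct and is exactly what the paper leaves implicit.

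The one place where you go beyond the paper is, however, flawed: the claim that the pre-solution hypothesis forces $\fdom(g)\neq\emptyset$ for every substituted right-hand side $g$. Monotonicity only yields $g(\rho|_{\vX'_\rho})\geq\sem e\rho\in\R$, which neither excludes the value $\infty$ nor places the evaluation point in $\R^{\vX'_\rho}$ (the components of $\rho$ on $\vX'_\rho$ may be $\neginfty$). Concretely, for $\E=\{\vx_1=\vx_1+\vx_2,\ \vx_2=\infty\}$ and the pre-solution $\rho=\{\vx_1\mapsto 0,\ \vx_2\mapsto 0\}$ we have $\vx_1\in\vX'_\rho$, $\vx_2\in\vX^\infty_\rho$, and the substituted right-hand side is $g(\hat\rho)=\hat\rho(\vx_1)+\infty=\infty$ for every real argument, so $\fdom(g)=\emptyset$. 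In this situation your description of the constraint region --- ``off $\fdom(g)$ the constraint is violated'', hence the region is $\{\hat\rho\in\fdom(g)\mid\hat\rho(\vx)-g(\hat\rho)\leq 0\}$ --- would make it empty, whereas it is in fact all of $\R^{\vX'_\rho}$, and the correct optimum is $\infty$, matching $\suppresol_\rho\sem\E(\vx_1)=\infty$. The repair is to treat the case $\fdom(g)=\emptyset$ separately: when $g$ is identically $\infty$ on real arguments the constraint is vacuous and harmless (the remaining sub-case, an \mcave $g$ with empty $\fdom(g)$ taking both values $\neginfty$ and $\infty$ on $\R^{\vX'_\rho}$, is not addressed by the paper's argument either, which silently assumes $\infty$ never occurs at real arguments). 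Apart from this mis-stated justification, your argument is sound and is essentially a more detailed version of the paper's.
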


\begin{proof}
  Let $\vX^\neginfty_\rho$, $\vX^\infty_\rho$, $\vX'_\rho$, and $\E'_\rho$ be defined as in Definition \ref{d:supsol}.
  We have to compute 
  $
    \suppresol_\rho\sem\E(\vx)
    =
    \sup\; \{ \hat\rho(\vx) \mid \hat\rho:\vX'_\rho\to\R ,\; \hat\rho \leq \sem{\E'}\hat\rho\} 
    =
    \sup\; \{ \hat\rho(\vx) \mid \hat\rho:\vX'_\rho\to\R ,\; (\mathsf{id} - \sem{\E'}) \hat\rho \leq 0 \} 
  $
  for all 
  $\vx\in\vX'_\rho$.
  Here, $\mathsf{id}$ denotes the identity function.
  Therefore, 
  since $\mathsf{id}$ is affine, 
  $\sem{\E'}$ is concave 
  (considered as a function that maps values from $\vX'_\rho\to\R$ to values from $\vX'_\rho\to(\R \cup \{\neginfty\}$), 
  and thus $- \sem{\E'}\hat\rho$ is convex 
  (considered as a function that maps values from $\vX'_\rho\to\R$ to values from $\vX'_\rho\to(\R \cup \{\infty\}$),
  the mathematical optimization problem
  $
    \sup\; \{ \hat\rho(\vx) \mid \hat\rho:\vX'_\rho\to\R ,\; (\mathsf{id} - \sem{\E'})\hat\rho \leq 0 \} 
  $
  is a convex optimization problem.
  \qed
\end{proof}

\per
\noindent
We will use $\suppresol_\rho\sem\E$ iteratively to compute $\mu_{\geq\rho}\sem\E$ under the
assumption that $\rho$ is a feasible pre-solution of the system $\E$ of \morcave equations.
As a first step in this direction,
we prove the following lemma,
which gives us at least a method for computing $\mu_{\geq\rho}\sem\E$ 
under the assumption that $\E$ is a system of \emph\cmorcave equations.


\begin{lemma}
\label{l:eindeutige:loesung}
  \per
  Let $\E$ be a system of \morcave equations
  and $\rho$ a feasible pre-solution of $\E$.
  Let 
  $\vX^\neginfty_\rho$,
  $\vX^\infty_\rho$, and 
  $\vX'_\rho$
  be defined as in Definition \ref{d:supsol} (\eqref{eq:l:eindeutige:loesung:1} - \eqref{eq:l:eindeutige:loesung:3}).
  Then:
  \begin{align}
    \label{eq:l:eindeutige:loesung:5:strich}
    \mu_{\geq\rho}\sem\E (\vx) 
      &= \suppresol_\rho\sem\E(\vx) = \neginfty 
      && \text{for all } \vx \in \vX^\neginfty_\rho
    \\
    \label{eq:l:eindeutige:loesung:6:strich}
    \mu_{\geq\rho}\sem\E (\vx) 
      &\geq \suppresol_\rho\sem\E(\vx)
      && \text{for all }\vx\in\vX'_\rho
    \\
    \label{eq:l:eindeutige:loesung:7:strich}
    \mu_{\geq\rho}\sem\E (\vx)   
      &= \suppresol_\rho\sem\E(\vx) = \infty 
      && \text{for all }\vx\in\vX^\infty_\rho
  \end{align}
  
  
  \per
  \noindent
  If $\E$ is a system of \cmorcave equations,
  then the inequality in \eqref{eq:l:eindeutige:loesung:6:strich} is in fact an equality,
  i.e., we have
  \begin{align}
    \label{eq:l:eindeutige:loesung:8}
    \mu_{\geq\rho}\sem\E 
      &=
      \suppresol_\rho\sem\E
      .
  \end{align}
\end{lemma}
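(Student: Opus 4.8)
The plan is to split the claim into its three coordinate-block parts, corresponding to $\vX^{\neginfty}_\rho$, $\vX^\infty_\rho$, and $\vX'_\rho$, and then argue separately. For $\vx \in \vX^{\neginfty}_\rho$ one observes that $\rho$ feasible forces $e = \neginfty$ for every equation $\vx = e \in \E$ with $\sem e \rho = \neginfty$ (this is literally part of Definition~\ref{d:feasibility}), so every such variable has right-hand side the constant $\neginfty$; hence $\mu_{\geq\rho}\sem\E(\vx) = \neginfty = \suppresol_\rho\sem\E(\vx)$, giving \eqref{eq:l:eindeutige:loesung:5:strich}. For $\vx \in \vX^\infty_\rho$, since $\sem{\E}\rho(\vx) = \infty$ and $\rho$ is a pre-solution, every variable assignment $\geq \rho$ that is also $\geq \sem\E\rho$ already has value $\infty$ at $\vx$; monotonicity of $\sem\E$ then propagates this through the least fixpoint above $\rho$, so $\mu_{\geq\rho}\sem\E(\vx) = \infty = \suppresol_\rho\sem\E(\vx)$, giving \eqref{eq:l:eindeutige:loesung:7:strich}. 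The inequality \eqref{eq:l:eindeutige:loesung:6:strich} is the easy direction of the interesting case: every finite post-fixpoint-style witness $\hat\rho : \vX'_\rho \to \R$ with $\hat\rho \leq \sem{\E'}\hat\rho$ extends (by plugging in $\neginfty$ on $\vX^{\neginfty}_\rho$ and $\infty$ on $\vX^\infty_\rho$) to a pre-solution of $\E$ dominated by $\mu_{\geq\rho}\sem\E$ by Knaster/Tarski, so its $\vx$-value is $\leq \mu_{\geq\rho}\sem\E(\vx)$; taking the supremum gives the claim. Here one needs the strictness/monotonicity bookkeeping to check that this extension really is a pre-solution of $\E$ and that its value on $\vX'_\rho$ stays finite — that is routine given that $\rho$ itself is a feasible pre-solution.

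The substantial part is the reverse inequality in the \cmorcave case, i.e.\ $\mu_{\geq\rho}\sem\E(\vx) \leq \suppresol_\rho\sem\E(\vx)$ for $\vx \in \vX'_\rho$, which upgrades \eqref{eq:l:eindeutige:loesung:6:strich} to the equality \eqref{eq:l:eindeutige:loesung:8}. First I would reduce to the system $\E'_\rho$ on the variable set $\vX'_\rho$: on $\vX^{\neginfty}_\rho \cup \vX^\infty_\rho$ both sides already agree and are fixpoint values, and by chain-continuity (available because the equations are \cmorcave) the least solution of $\E$ above $\rho$ restricts to the least solution of $\E'_\rho$ above $\rho|_{\vX'_\rho}$. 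So it suffices to show that for the \cmorcave system $\E'_\rho$ over $\R$ with feasible pre-solution $\rho|_{\vX'_\rho}$, the quantity $\mu_{\geq\rho}\sem{\E'_\rho}$ coincides with $\sup\{\hat\rho(\vx)\mid \hat\rho\leq\sem{\E'_\rho}\hat\rho\}$. Now $\mu_{\geq\rho}\sem{\E'_\rho}$ is itself such a $\hat\rho$ (it is a solution, hence in particular a pre-solution), provided it is finite. Finiteness is exactly where feasibility pays off: because $\rho|_{\vX'_\rho}$ is a feasible pre-solution, $\mu_{\geq\rho}\sem{\E'_\rho}$ is a feasible \emph{finite} solution, and by Lemma~\ref{l:feasible:is:greatest} it is the greatest pre-fixpoint of $\sem{\E'_\rho}$ (as a partial operator on $\R^{\vX'_\rho}$ with upward-closed domain). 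Being the greatest pre-fixpoint, it dominates every $\hat\rho$ with $\hat\rho \leq \sem{\E'_\rho}\hat\rho$, so it equals the supremum defining $\suppresol_\rho\sem{\E'_\rho}$ at every coordinate.

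The main obstacle I anticipate is the infinity bookkeeping in the reduction to $\E'_\rho$: one has to check that replacing the $\vX^{\neginfty}_\rho$-variables by $\neginfty$ and the $\vX^\infty_\rho$-variables by $\infty$ in the right-hand sides genuinely does not change the behaviour of the least fixpoint above $\rho$ on the remaining coordinates, i.e.\ that these blocks are "frozen" along the iteration from $\rho$ upward. For the $\vX^{\neginfty}_\rho$ block this uses the feasibility clause forcing those right-hand sides to be the constant $\neginfty$; for the $\vX^\infty_\rho$ block it uses that $\infty$ is a fixpoint value there and monotonicity; but one must also confirm, using strictness of the \cmorcave operators (or the $f^{(I)}$ machinery of Section~\ref{s:morcave}), that none of the $\vX'_\rho$-right-hand sides is dragged to $\pm\infty$ by these substitutions — otherwise $\vX'_\rho$ would not be stable. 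Once that is pinned down, the core argument is a clean appeal to Lemma~\ref{l:feasible:is:greatest} together with chain-continuity, and everything else is Knaster/Tarski and monotonicity.
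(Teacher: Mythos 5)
The step you label the ``easy direction'' is where your proof breaks. Inequality \eqref{eq:l:eindeutige:loesung:6:strich} does not follow from Knaster/Tarski: pre-solutions are dominated by the \emph{greatest} solution, not by $\mu_{\geq\rho}\sem\E$, and a finite pre-solution $\hat\rho$ of $\E'_\rho$, extended by $\neginfty$ on $\vX^\neginfty_\rho$ and $\infty$ on $\vX^\infty_\rho$, need not lie below the least solution above $\rho$. Your argument for this direction never uses feasibility, but without feasibility the inequality is simply false: for $\E=\{\vx = 2\vx\}$ and $\rho=\underline 0$ we have $\mu_{\geq\rho}\sem\E=\underline 0$, while $\sup\{\hat\rho(\vx)\mid \hat\rho:\vX\to\R,\ \hat\rho\leq\sem\E\hat\rho\}=\infty$. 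In the paper, \eqref{eq:l:eindeutige:loesung:6:strich} is precisely the place where feasibility and Lemma~\ref{l:feasible:is:greatest} enter: writing $\rho':=\rho|_{\vX'_\rho}$ and $\rho^*:=\mu_{\geq\rho'}\sem{\E'_\rho}$, one passes to the sub-system $\E''_\rho$ obtained from $\E'_\rho$ by substituting $\infty$ for those variables of $\vX'_\rho$ with $\rho^*(\vx)=\infty$; the restriction of $\rho^*$ to the remaining variables is a feasible \emph{finite} fixpoint, hence by Lemma~\ref{l:feasible:is:greatest} the greatest pre-fixpoint, and therefore dominates every witness $\hat\rho$.

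Your argument for the equality \eqref{eq:l:eindeutige:loesung:8} has the dual gap: the claim that $\mu_{\geq\rho}\sem{\E'_\rho}$ is a feasible \emph{finite} solution because $\rho$ is a feasible pre-solution is unjustified; it conflates $\vX'_\rho$ from Definition~\ref{d:supsol} (right-hand sides real-valued at $\rho$) with the set $\vX'$ in Definition~\ref{d:feasibility} (components where $\mu_{\geq\rho}\sem\E$ is finite). Variables of $\vX'_\rho$ may still be driven to $\infty$ --- e.g.\ $\vx_2$ in Example~\ref{ex:strict:ineq:rho:star}, where $\rho$ is feasible, $\sem{\vx_1+\vx_2}\rho=0\in\R$, yet $\mu_{\geq\rho}\sem\E(\vx_2)=\infty$ --- so $\mu_{\geq\rho}\sem{\E'_\rho}$ need not belong to the set over which $\suppresol_\rho\sem\E$ takes its supremum, and your greatest-pre-fixpoint argument cannot be applied to it on all of $\vX'_\rho$. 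Moreover, your core argument for the equality uses upward-chain-continuity nowhere, yet the equality genuinely needs it: Example~\ref{ex:strict:ineq:rho:star} shows that \eqref{eq:l:eindeutige:loesung:6:strich} can be strict for \morcave but not \cmorcave right-hand sides, so any proof of \eqref{eq:l:eindeutige:loesung:8} that does not invoke continuity must be flawed. The paper obtains the missing direction by Kleene iteration: by upward-chain-continuity, $\rho^*=\bigvee_{k\in\N}\sem{\E'_\rho}^k\rho'$, and each iterate is a finite pre-solution of $\E'_\rho$, hence bounded by the supremum. Your treatment of $\vX^\neginfty_\rho$ and $\vX^\infty_\rho$ agrees with the paper and is fine, but both halves of the $\vX'_\rho$ case need repair.
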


\begin{proof}
  \per
  Let $\E'_\rho$ be defined as in Definition \ref{d:supsol} \eqref{eq:l:eindeutige:loesung:4}.
  We first prove \eqref{eq:l:eindeutige:loesung:5:strich} - \eqref{eq:l:eindeutige:loesung:7:strich}.
  Let $x \in \vX$.
  If $\vx \in \vX^\neginfty_\rho \cup \vX^\infty_\rho$, 
  then the statement is obviously fulfilled,
  because $\rho$ is feasible and thus
  $e=\neginfty$ for all equations $\vx = e$ from $\E$ with $\sem e \rho = \neginfty$.
  This gives us \eqref{eq:l:eindeutige:loesung:5:strich} and \eqref{eq:l:eindeutige:loesung:7:strich}.
  Assume now that $\vx \in \vX'_\rho$.
  Let 
  $\rho' := \rho|_{\vX'_\rho}$ 
  and 
  $\rho^* := \mu_{\geq\rho'}\sem{\E'_\rho}$.
  We have to show that
  \begin{align}
    \rho^*(\vx)
    \geq
    \sup\;\{ \hat\rho(\vx) \mid \hat\rho:\vX'_\rho \to\R ,\; \hat\rho \leq \sem{\E'_\rho}\hat\rho \}
    . 
  \end{align}

     \per
     \noindent
     If $\rho^*(\vx) = \infty$, 
     there is nothing to prove.
      Therefore,
      assume that
      $\rho^*(\vx) < \infty$.
      Then $\rho^*(\vx) \in \R$.
      Let $\vX''_\rho := \{ \vx'' \in \vX'_\rho \mid \rho^*(\vx'') < \infty \}$.
      Then, $\vX''_\rho = \{ \vx'' \in \vX'_\rho \mid \rho^*(\vx'') \in \R \}$.
      Let $\E''_\rho := \{ \vx'' = e \in \E'_\rho \mid \vx'' \in \vX''_\rho \}[\infty/(\vX'_\rho\setminus\vX''_\rho)]$,
      and $\rho'' := \rho|_{\vX''_\rho}$.
      The pre-solution $\rho''$ of $\E''_\rho$ is feasible. 
      Hence, $\rho^*|_{\vX''_\rho}$ is a feasible finite pre-solution of $\E''_\rho$,
      i.e., a feasible finite fixpoint of $\sem{\E''_\rho}$.
      Therefore,
      we finally get \eqref{eq:l:eindeutige:loesung:6:strich}
      using Lemma \ref{l:feasible:is:greatest}.

      \per
      Before we actually prove \eqref{eq:l:eindeutige:loesung:8},
      we start with an easy observation.
      The sequence $(\sem{\E'_\rho}^k \rho')_{k\in\N}$ is increasing, 
      because $\rho'$ is a pre-solution of $\E'_\rho$.
      Further 
      $\sem{\E'_\rho}^k \rho' : \vX'_\rho \to \R$
      and 
      $\sem{\E'_\rho}^k \rho' \leq \sem{\E'_\rho} ( \sem{\E'_\rho}^k \rho' )$
      for all $k \in \N$.
      Hence, we get
      \begin{align}
        \label{eq:besser:als:kleene}
        \sup\;\{ 
          \hat\rho(\vx) 
          \mid 
          \hat\rho:\vX'_\rho\to\R ,\; \hat\rho \leq \sem{\E'_\rho}\hat\rho
        \} 
        \geq
        \sup\;\{ 
          (\sem{\E'_\rho}^k \rho')(\vx) \mid k \in \N
        \}
      \end{align}
      
  \per
  \noindent
  Now, assume that $\E$ is a system of \cmorcave equations.
  In order to prove \eqref{eq:l:eindeutige:loesung:8},
  it remains to show that 
  $ 
    \rho^*(\vx)
    \leq
    \sup\;\{ \hat\rho(\vx) \mid \hat\rho:\vX'_\rho\to\R ,\; \hat\rho \leq \sem{\E'_\rho}\hat\rho \}
  $. 
      Since $\sem{\E'_\rho}$
      is monotone and upward-chain-continuous on 
      $\{ \hat\rho : \vX'_\rho\to\CR \mid \hat\rho \geq \rho' \}$, 
      we have
      $
        \rho^*
        =
        \bigvee \{ \sem{\E'_\rho}^k \rho' \mid k \in \N \}
      $.
      Using \eqref{eq:besser:als:kleene}, this gives us
      $ 
        \rho^*(\vx)
        \leq
        \sup\;\{ 
          \hat\rho(\vx) 
          \mid 
          \hat\rho:\vX'_\rho \to\R ,\; \hat\rho \leq \sem{\E'_\rho}\hat\rho
        \} 
      $, 
      as desired.
%
%
%
%
  \qed
\end{proof}

\per
\noindent
If the equations are \morcave but \emph{not} \cmorcave, 
then the inequality in
\eqref{eq:l:eindeutige:loesung:6:strich}
can indeed be strict as
the following example shows.

\begin{example}
  \label{ex:strict:ineq:rho:star}
  \per
  Let us consider the following system $\E$ of \morcave equations:
  \begin{align}
    \label{eq:ex:strict:ineq:rho:star:1}
    \vx_1 &= 1 &
    \vx_2 &= \vx_1 + \vx_2 &
    \vx_3 &= 
      \begin{cases}
        0 &\text{if } \vx_2 < \infty \\
        1 &\text{if } \vx_2 = \infty
      \end{cases}
  \end{align}
  
  \per
  \noindent 
  Observe that the third equation is \emph{not} \cmorcave,
  since,
  for the ascending chain 
  $C = \{ \{ \vx_2 \mapsto k \} \mid k \in \N \}$,
  we have
  $\bigvee \{ \sem{e} \rho \mid \rho \in C \} = 0 < 1 = \sem{e} (\bigvee C)$,
  where $e$ denotes the right-hand side of the third equation.
  The variable assignment 
  \begin{align}
    \label{eq:ex:strict:ineq:rho:star:2}
    \rho := \{ \vx_1 \mapsto 0 ,\; \vx_2 \mapsto 0 ,\; \vx_3 \mapsto 0 \}
  \end{align}
  
  \noindent
  is a feasible pre-solution,
  since 
  \begin{align}
    \label{eq:ex:strict:ineq:rho:star:3}
    \rho^* 
    := 
    \mu_{\geq\rho} \sem{\E} 
    = 
    \{ \vx_1 \mapsto 1 ,\; \vx_2 \mapsto \infty ,\; \vx_3 \mapsto 1 \}
  \end{align}
  
  \per
  \noindent
  is a feasible solution of $\E$.
  Now, let the variable assignment $\rho_1$ be defined by
  \begin{align}
    \label{eq:ex:strict:ineq:rho:star:4}
    \rho_1 &:= \suppresol_\rho\sem\E
    .
  \end{align}
  
  \per
  \noindent
  Lemma \ref{l:eindeutige:loesung}
  gives us $\rho_1 \leq \rho^*$,
  but not $\rho_1 = \rho^*$.
  Indeed,
  we have
  \begin{align}
    \label{eq:ex:strict:ineq:rho:star:5}
    \rho_1 
    = 
    \{ \vx_1 \mapsto 1 ,\; \vx_2 \mapsto \infty ,\; \vx_3 \mapsto 0 \}
    <
    \rho^*
    .
  \end{align}
  \per
  \noindent
  We emphasize that $\rho_1(\vx_3) = 0$,
  because $\sem{e}\hat\rho = 0$ for all $\hat\rho : \vX \to \R$,
  where $e$ denotes the right-hand side of the third equation of \ref{eq:ex:strict:ineq:rho:star:1}.

  \per
  How we can actually compute $\rho^*$,
  remains an open question. 
  The discontinuity at $\vx_2 = \infty$
  is the reason for the strict inequality in 
  \eqref{eq:ex:strict:ineq:rho:star:5}.
  However, since upward discontinuities can only be present at $\infty$,
  there are at most $n$ upward discontinuities,
  where $n$ is the number of variables of the equation system.
  Hence, we could think of using \eqref{eq:ex:strict:ineq:rho:star:4}
  to get over at least one discontinuity. 
  \per
  Let us perform a second iteration for the example.
  We know that $\rho_1 \leq \rho^*$.
  Moreover, by definition, $\rho_1$ is also a feasible pre-solution of $\E$.
  For the variable assignment $\rho_2$ that is defined by
  $
      \rho_2 := \suppresol_{\rho_1}\sem\E
  $
  we obviously have $\rho^* = \rho_2$.
  We will see that this method can always be applied.
  More precisely, we can always compute $\rho^*$ after performing at most $n$ such iterations.
  \qed
\end{example}

\per
\noindent
In order to deal not only with systems of \cmorcave equations,
but also with systems of \morcave equations,
we use Lemma \ref{l:eindeutige:loesung} iteratively until we reach a solution.
That is, we generalize the statement of Lemma \ref{l:eindeutige:loesung} as follows:

\begin{lemma}
  \label{l:eindeutige:loesung:iter}
  \per
  Let $\E$ be a system of \morcave equations 
  and $\rho$ a feasible pre-solution of $\E$.
  For all $i \in \N$,
  let $\suppresol^i_\rho\sem\E$ be defined by
  \begin{align}
    \suppresol^0_\rho\sem\E &:= \rho \\
    \suppresol^{i+1}_\rho\sem\E &:= \suppresol_{\suppresol^i_\rho\sem\E}\sem\E &
    & \text{for all } i \in \N
    .
  \end{align}
  
  \noindent
  Then, the following statements hold:
  \begin{enumerate}
    \item
      $(\suppresol^i_\rho\sem\E)_{i\in\N}$ is an increasing sequence of feasible pre-solutions of $\E$.
    \item
      $\suppresol^i_\rho\sem\E \leq \mu_{\geq\rho}\sem\E$ for all $i \in \N$.
    \item
      $\suppresol^{\abs X}_\rho\sem\E = \mu_{\geq\rho}\sem\E$.
    \item
      $\suppresol_\rho\sem\E = \mu_{\geq\rho}\sem\E$, whenever $\E$ is a system
      of $\cmorcave$ equations.
  \end{enumerate}
\end{lemma}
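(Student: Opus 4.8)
The plan is to establish the four statements in the order listed: statement~4 is an immediate restatement of \eqref{eq:l:eindeutige:loesung:8} of Lemma~\ref{l:eindeutige:loesung} (since $\rho$ is a feasible pre-solution, applying that lemma to a \cmorcave system $\E$ gives $\suppresol_\rho\sem\E=\mu_{\geq\rho}\sem\E$); statements~1 and~2 I would prove together by induction on $i$; and statement~3, the heart of the matter, by a potential argument counting the components that have not yet reached $\infty$.

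For statements~1 and~2, I would prove by induction on $i$ the stronger claim that $\suppresol^i_\rho\sem\E$ is a feasible pre-solution of $\E$ with $\suppresol^i_\rho\sem\E\leq\suppresol^{i+1}_\rho\sem\E\leq\mu_{\geq\rho}\sem\E$ and $\mu_{\geq\,\suppresol^i_\rho\sem\E}\sem\E=\mu_{\geq\rho}\sem\E$. The base case $i=0$ is the hypothesis on $\rho$. For the step, write $\rho_i:=\suppresol^i_\rho\sem\E$ and $\rho_{i+1}:=\suppresol_{\rho_i}\sem\E$. By the remark following Definition~\ref{d:supsol}, $\rho_{i+1}$ is a pre-solution of $\E$, and Lemma~\ref{l:eindeutige:loesung} applied to the feasible pre-solution $\rho_i$ gives $\rho_{i+1}\leq\mu_{\geq\rho_i}\sem\E=\mu_{\geq\rho}\sem\E$. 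The inequality $\rho_i\leq\rho_{i+1}$ is read off from the definition of $\suppresol_{\rho_i}\sem\E$: on $\vX^\neginfty_{\rho_i}$ both sides equal $\neginfty$ (here the feasibility of $\rho_i$ is used, so that the right-hand side is literally $\neginfty$), on $\vX^\infty_{\rho_i}$ the right-hand side is $\infty$, and on $\vX'_{\rho_i}$ one uses that the Kleene iterates $\sem{\E'_{\rho_i}}^k(\rho_i|_{\vX'_{\rho_i}})$ increase and are dominated by $\suppresol_{\rho_i}\sem\E$, exactly as in the proof of Lemma~\ref{l:eindeutige:loesung}. Since $\rho_i\leq\rho_{i+1}\leq\mu_{\geq\rho_i}\sem\E$ and $\rho_{i+1}$ is a pre-solution, $\mu_{\geq\rho_{i+1}}\sem\E=\mu_{\geq\rho_i}\sem\E=\mu_{\geq\rho}\sem\E$; and $\rho_{i+1}$ is feasible by Lemma~\ref{l:gro:erh:zul}. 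This closes the induction.

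For statement~3 set $\rho^*:=\mu_{\geq\rho}\sem\E$ and, for each $i$, let $m_i:=\abs{\{\vx\in\vX\mid(\suppresol^i_\rho\sem\E)(\vx)<\infty\}}$. By statement~1 the sequence $(\suppresol^i_\rho\sem\E)_i$ is increasing and bounded by $\rho^*$, so a component that equals $\infty$ stays $\infty$; hence $(m_i)_i$ is non-increasing with $0\leq m_i\leq\abs\vX$. The key claim is: \emph{if $m_{i+1}=m_i$ then $\suppresol^{i+1}_\rho\sem\E=\rho^*$.} Granting this, $(m_i)$ strictly decreases until it first repeats, say $m_0>\cdots>m_{j-1}=m_j$, so $\suppresol^j_\rho\sem\E=\rho^*$ and thus $\suppresol^k_\rho\sem\E=\rho^*$ for all $k\geq j$ (a solution is a fixpoint of $\suppresol$). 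If $m_{j-1}\geq 1$ the values $m_0>\cdots>m_{j-1}$ are $j$ distinct elements of $\{1,\dots,\abs\vX\}$, so $j\leq\abs\vX$; if $m_{j-1}=0$, then already $\suppresol^{j-1}_\rho\sem\E=\inftyvar=\rho^*$ and the values $m_0>\cdots>m_{j-2}$ lie in $\{1,\dots,\abs\vX\}$, so $j-1\leq\abs\vX$. Either way $\suppresol^{\abs\vX}_\rho\sem\E=\rho^*$.

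It remains to prove the key claim, and this is where the real work lies. Writing $\rho_i:=\suppresol^i_\rho\sem\E$, examine the one-step computation of $\rho_{i+1}=\suppresol_{\rho_i}\sem\E$ through the sets $\vX^\neginfty_{\rho_i},\vX^\infty_{\rho_i},\vX'_{\rho_i}$ and the system $\E'_{\rho_i}$ of Definition~\ref{d:supsol}. On $\vX^\neginfty_{\rho_i}\cup\vX^\infty_{\rho_i}$ we have $\rho_{i+1}=\rho^*$ already by \eqref{eq:l:eindeutige:loesung:5:strich} and \eqref{eq:l:eindeutige:loesung:7:strich}. The assumption $m_{i+1}=m_i$ says that no finite component of $\rho_{i+1}$ became $\infty$, so $\rho_{i+1}$ is finite throughout $\vX'_{\rho_i}$; consequently the increasing Kleene chain $\sem{\E'_{\rho_i}}^k(\rho_i|_{\vX'_{\rho_i}})$ never leaves the finite reals. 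On the finite reals $\sem{\E'_{\rho_i}}$ is order-concave on an order-convex domain, hence upward-chain-continuous, so its least fixpoint above $\rho_i|_{\vX'_{\rho_i}}$ is the supremum of that chain, which equals $\suppresol_{\rho_i}\sem\E|_{\vX'_{\rho_i}}$; and by \eqref{eq:l:eindeutige:loesung:6:strich} this supremum equals $\mu_{\geq\rho_i}\sem\E|_{\vX'_{\rho_i}}=\rho^*|_{\vX'_{\rho_i}}$ (this is precisely the argument in the proof of Lemma~\ref{l:eindeutige:loesung} that makes \eqref{eq:l:eindeutige:loesung:6:strich} an equality under continuity). Hence $\rho_{i+1}=\rho^*$. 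The main obstacle is exactly this claim: one has to argue that the \emph{only} way $\suppresol_{\rho_i}\sem\E$ can fall short of $\mu_{\geq\rho_i}\sem\E$ is an upward discontinuity encountered at $\infty$ — which forces a new component to become $\infty$ — and the bookkeeping with the $\neginfty$- and $\infty$-components when passing from $\E$ to $\E'_{\rho_i}$ and identifying $\mu_{\geq\rho_i}\sem\E|_{\vX'_{\rho_i}}$ with $\mu_{\geq\rho_i|_{\vX'_{\rho_i}}}\sem{\E'_{\rho_i}}$ is the delicate, if routine, part; the counting for statement~3 additionally relies on the easy-to-overlook borderline case $m_{j-1}=0$.
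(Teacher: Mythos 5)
Your proposal is correct and follows essentially the same route as the paper: statements~1 and~2 by induction on $i$ via Lemma~\ref{l:eindeutige:loesung} (with Lemma~\ref{l:gro:erh:zul} supplying feasibility), statement~4 as a direct restatement of \eqref{eq:l:eindeutige:loesung:8}, and statement~3 from the observation that a shortfall of one $\suppresol$-step below $\mu_{\geq\rho}\sem\E$ forces some component to jump from a value $<\infty$ to $\infty$ --- your ``key claim'' is exactly the contrapositive of the fact the paper's proof asserts for statement~3. You in fact provide more detail than the paper's three-sentence sketch, in particular the chain-continuity argument on the finite part justifying that asserted fact and the careful counting including the $m_{j-1}=0$ borderline case.
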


\begin{proof}
  \per
  The first two statements can be proven by induction on $i$ using 
  Lemma \ref{l:eindeutige:loesung}.
  The third statement follows from the fact that,
  for any feasible pre-solution $\rho$ of a system $\E$ of \morcave equations,
  $\suppresol_\rho\sem\E < \mu_{\geq\rho}\sem\E$
  implies that there exists some variable $\vx \in \vX$
  such that $\rho(\vx) < \infty$ and $\suppresol_\rho\sem\E(\vx) = \infty$.
  The fourth statement is the second statement of Lemma \ref{l:eindeutige:loesung}.
  \qed
\end{proof}

\begin{example}
  \per
  For the situation in Example \ref{ex:strict:ineq:rho:star},
  we have 
  $ 
    \mu_{\geq\rho} \sem{\E} = \suppresol^3_\rho\sem\E = \suppresol^2_\rho\sem\E > \suppresol^1_\rho\sem\E > \rho
    .
  $ 
  \qed
\end{example}

\per
\noindent
Because of the definition of $\suppresol_\rho$ (see Definition \ref{d:supsol}),
Lemma \ref{l:eindeutige:loesung:iter} implies the following corollary:

\begin{corollary}
  \label{c:depends:on:e:and:x:only}
  \per
  Let $\E$ be a system of \morcave equations and $\rho$ a feasible pre-solution of $\E$.
  Then,
  the value $\mu_{\geq\rho}\sem\E$ only depends on $\E$ and 
  $\vX^\infty_\rho := \{ \vx \mid \vx = e \in \E ,\; \sem{e} \rho = \infty \}$.
  \qed
\end{corollary}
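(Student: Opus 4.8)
The plan is to unwind Definition~\ref{d:supsol}, observe that every ingredient of $\suppresol_\rho\sem\E$ is determined by $\E$ and $\vX^\infty_\rho$ alone, and then lift this to $\mu_{\geq\rho}\sem\E$ via the iteration of Lemma~\ref{l:eindeutige:loesung:iter}. Concretely, I would fix the system $\E$ of \morcave equations together with two feasible pre-solutions $\rho_1,\rho_2$ of $\E$ satisfying $\vX^\infty_{\rho_1}=\vX^\infty_{\rho_2}$, and aim at $\mu_{\geq\rho_1}\sem\E=\mu_{\geq\rho_2}\sem\E$.

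The first step is to check that for \emph{every} feasible pre-solution $\rho$ of $\E$ one has $\vX^\neginfty_\rho=\{\vx\mid(\vx=\neginfty)\in\E\}$, so that $\vX^\neginfty_\rho$ depends on $\E$ only: the inclusion ``$\supseteq$'' is clear since $\neginfty$ is a constant expression, and ``$\subseteq$'' is exactly the clause of Definition~\ref{d:feasibility} requiring $e=\neginfty$ whenever $\sem{e}\rho=\neginfty$. Given $\vX^\neginfty_\rho$ and $\vX^\infty_\rho$, the set $\vX'_\rho=\vX\setminus(\vX^\neginfty_\rho\cup\vX^\infty_\rho)$ and the system $\E'_\rho=\{\vx=e\in\E\mid\vx\in\vX'_\rho\}[\neginfty/\vX^\neginfty_\rho,\infty/\vX^\infty_\rho]$ depend only on $\E$ and $\vX^\infty_\rho$, and hence so does $\suppresol_\rho\sem\E$, since its three branches in \eqref{eq:l:eindeutige:loesung:5} are $\neginfty$ on $\vX^\neginfty_\rho$, the supremum $\sup\{\hat\rho(\vx)\mid\hat\rho:\vX'_\rho\to\R,\ \hat\rho\leq\sem{\E'_\rho}\hat\rho\}$ on $\vX'_\rho$, and $\infty$ on $\vX^\infty_\rho$. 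Applying this to $\rho_1$ and $\rho_2$ yields $\suppresol_{\rho_1}\sem\E=\suppresol_{\rho_2}\sem\E$.

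Then I would iterate. By Lemma~\ref{l:eindeutige:loesung:iter}(1) each $\suppresol^i_{\rho_j}\sem\E$ ($j\in\{1,2\}$) is again a feasible pre-solution of $\E$, and $\suppresol^{i+1}_{\rho_j}\sem\E=\suppresol_{\suppresol^i_{\rho_j}\sem\E}\sem\E$ by definition; so a trivial induction on $i$, with base case $i=1$ being the equality just obtained, shows $\suppresol^i_{\rho_1}\sem\E=\suppresol^i_{\rho_2}\sem\E$ for all $i\geq1$. Choosing $i=\abs\vX$ and using Lemma~\ref{l:eindeutige:loesung:iter}(3) gives $\mu_{\geq\rho_1}\sem\E=\suppresol^{\abs\vX}_{\rho_1}\sem\E=\suppresol^{\abs\vX}_{\rho_2}\sem\E=\mu_{\geq\rho_2}\sem\E$ (the case $\vX=\emptyset$ being vacuous), which is the claim.

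The argument is pure bookkeeping, so I do not expect a real obstacle; the only delicate point is that feasibility must genuinely be used in the base case. Without it, $\vX^\neginfty_{\rho_1}$ and $\vX^\neginfty_{\rho_2}$ could differ (e.g.\ a right-hand side $\vx_1\wedge\vx_2$ can evaluate to $\neginfty$ under one pre-solution but stay finite under another), and then $\E'_{\rho_1}\neq\E'_{\rho_2}$ and the conclusion would fail.
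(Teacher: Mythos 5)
Your proposal is correct and follows essentially the route the paper intends: the corollary is presented as an immediate consequence of Definition~\ref{d:supsol} and Lemma~\ref{l:eindeutige:loesung:iter}, i.e., $\suppresol_\rho\sem\E$ is determined by $\E$ and $\vX^\infty_\rho$, and $\mu_{\geq\rho}\sem\E$ is obtained after $\abs\vX$ iterations of $\suppresol$. Your extra observation that feasibility forces $\vX^\neginfty_\rho=\{\vx\mid(\vx=\neginfty)\in\E\}$, so that this set too is fixed by $\E$ alone, merely spells out a detail the paper leaves implicit.
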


\subsection{Termination}

\per
\noindent
It remains to show that our 
$\vee$-strategy improvement algorithm (Algorithm \ref{alg:alg:stratimp}) terminates.
That is,
we have to come up with an upper bound on the number of iterations of the loop.
In each iteration, 
we have to compute $\mu_{\geq\rho}\sem{\E(\sigma)}$,
where $\rho$ is a feasible pre-solution of $\E(\sigma)$.
This has to be done until we have found a solution.
By Corollary \ref{c:depends:on:e:and:x:only}, 
$\mu_{\geq\rho}\sem{\E(\sigma)}$
only depends on the $\vee$-strategy $\sigma$ 
and the set $\vX^\infty_\rho := \{ \vx \mid \vx = e \in \E(\sigma) ,\; \sem{e} \rho = \infty \}$.
During the run of our $\vee$-strategy improvement algorithm,
the set $\vX^\infty_\rho$ 
monotonically increases.
This implies that 
we have to consider each $\vee$-strategy $\sigma$ at most $\abs \vX$ times.
That is, 
the number of iterations of the loop is bounded from above by $\abs\vX \cdot \abs\MaxStrat$.
Summarizing, we have shown our main theorem:

\begin{theorem}
  \label{t:main}
  \per
  Let $\E$ be a system of \maxmorcave equations in standard form.
  Our $\vee$-strategy improvement algorithm computes $\mu\sem\E$ and
  performs at most $|\vX| \cdot |\MaxStrat|$ $\vee$-strategy improvement steps.
  \qed
\end{theorem}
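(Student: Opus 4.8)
The plan is to bound the number of executions of the loop body of Algorithm~\ref{alg:alg:stratimp} by $|\vX|\cdot|\MaxStrat|$; correctness of the returned value then comes for free, since by the closing statement of Lemma~\ref{l:alg:sequence} any terminating run returns $\mu\sem\E$. So the whole proof reduces to a termination count.

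First I would fix notation for a run, writing $\sigma_i,\rho_i$ for the values of the program variables after the $i$-th evaluation of the loop body, with $\sigma_0=\sigma_{\mathrm{init}}$ and $\rho_0=\neginftyvar$ (which is a feasible solution of $\E(\sigma_0)$ because $\E$ is in standard form). I then collect the invariants already available: by Lemma~\ref{l:alg:sequence} the sequence $(\rho_i)_i$ is non-decreasing, bounded above by $\mu\sem\E$, and satisfies $\rho_i<\rho_{i+1}$ whenever $\rho_i\neq\mu\sem\E$; by Lemma~\ref{l:alg:sequence:feasible}, $\rho_{i-1}$ is a feasible pre-solution of $\E(\sigma_i)$, so the assignment $\rho_i=\mu_{\geq\rho_{i-1}}\sem{\E(\sigma_i)}$ performed in iteration $i$ is well-defined; and, crucially, by Corollary~\ref{c:depends:on:e:and:x:only} this value depends only on the system $\E(\sigma_i)$ together with the finite datum $\vX^\infty_{\rho_{i-1}}:=\{\vx\mid \vx=e\in\E(\sigma_i),\ \sem{e}\rho_{i-1}=\infty\}$.

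The heart of the argument is then a counting lemma: each $\vee$-strategy is the current strategy in at most $|\vX|$ iterations. Fix $\sigma$ and let $i_1<\cdots<i_m$ be exactly the iterations with $\sigma_{i_\ell}=\sigma$, so that $\E(\sigma_{i_\ell})=\E(\sigma)$ for all $\ell$. All of these are non-terminal iterations except possibly the last of the whole run, so the $\rho$-sequence increases strictly between consecutive $i_\ell$'s, giving $\rho_{i_1}<\cdots<\rho_{i_m}$. Monotonicity of the right-hand sides of $\E(\sigma)$ yields the inclusion chain $\vX^\infty_{\rho_{i_1-1}}\subseteq\cdots\subseteq\vX^\infty_{\rho_{i_m-1}}$, and this chain is \emph{strict}: if $\vX^\infty_{\rho_{i_\ell-1}}=\vX^\infty_{\rho_{i_{\ell+1}-1}}$, then Corollary~\ref{c:depends:on:e:and:x:only} forces $\rho_{i_\ell}=\mu_{\geq\rho_{i_\ell-1}}\sem{\E(\sigma)}=\mu_{\geq\rho_{i_{\ell+1}-1}}\sem{\E(\sigma)}=\rho_{i_{\ell+1}}$, contradicting the strict increase. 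Finally I would rule out the top level: if $\vX^\infty_{\rho_{i_m-1}}=\vX$, then $\sem{\E(\sigma)}\rho_{i_m-1}=\inftyvar$, hence $\rho_{i_m}=\inftyvar$, which by Lemma~\ref{l:alg:sequence} equals $\mu\sem\E$ and solves $\E$, so $i_m$ is the last iteration overall; in every remaining case the strict chain consists of proper subsets of $\vX$, hence has at most $|\vX|$ members. Either way $m\le|\vX|$.

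Summing over all $\vee$-strategies, the loop body runs at most $|\vX|\cdot|\MaxStrat|$ times, so the algorithm terminates, and by Lemma~\ref{l:alg:sequence} it outputs $\mu\sem\E$. The step I expect to be genuinely load-bearing is the strictness of the $\vX^\infty$-chain: it is exactly here that one needs Corollary~\ref{c:depends:on:e:and:x:only}, i.e. the reduction of the a priori infinitary object $\mu_{\geq\rho}\sem{\E(\sigma)}$ to a dependence on $\E(\sigma)$ and a subset of $\vX$ only --- which itself rests on the feasibility machinery (Lemmas~\ref{l:feasible:is:greatest}, \ref{l:eindeutige:loesung}, and \ref{l:eindeutige:loesung:iter}) established earlier; everything else is bookkeeping layered on top of the monotonicity and feasibility invariants.
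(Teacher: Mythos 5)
Your proof takes essentially the same route as the paper's own (one-paragraph) argument: correctness upon termination via Lemma~\ref{l:alg:sequence}, feasibility along the run via Lemma~\ref{l:alg:sequence:feasible}, and the key reduction of Corollary~\ref{c:depends:on:e:and:x:only} that $\mu_{\geq\rho}\sem{\E(\sigma)}$ depends only on $\sigma$ and $\vX^\infty_\rho$, whose growth for a fixed strategy bounds the number of its visits --- indeed you supply more detail (the strictness of the $\vX^\infty$-chain and the $\vX^\infty_\rho=\vX$ case) than the paper does. The only looseness, shared with the paper's own sketch, is the final count: a strictly increasing chain ending in $\vX$ could a priori have $|\vX|+1$ members, so ``either way $m\le|\vX|$'' deserves one extra remark (e.g.\ that $\sigma_{\mathrm{init}}$ is never re-selected, or that only the strategy of the final iteration can reach $\vX^\infty_\rho=\vX$), after which the stated bound $|\vX|\cdot|\MaxStrat|$ follows.
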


\per
\noindent
In our experiments, 
we did not observe the exponential worst-case behavior.
All examples we know of require linearly many $\vee$-strategy improvement steps.
We are also not aware of a class of examples,
where we would be able to observe the exponential worst-case behavior.
Therefore, our conjecture is that for practical examples our algorithm terminates after linearly many iterations.
\np
\section{Parametrized Optimization Problems as Right-hand sides}
\label{s:param:opt:rhs}

\per
\noindent
In the static program analysis application that we discuss in Section \ref{s:relaxed:sem}, 
the right-hand sides of the fixpoint equation systems we have to solve
are maxima of finitely many \emph{parametrized optimization problems}.
In this special situation, 
we can evaluate $\vee$-strategies more efficiently 
than by solving general convex optimization problems as described in Section \ref{s:strat:imp}
(see Lemma \ref{l:convex:opt}, \ref{l:eindeutige:loesung}, and \ref{l:eindeutige:loesung:iter}).
We provide a in-depth study of this special situation in this section.

\subsection{
  Parametrized Optimization Problems}

\per
\noindent
We now consider the case that
a system $\E$ of fixpoint equations is given,
where the right-hand sides are \emph{parametrized optimization problems}.
In this article,
we call an operator $g : \CR^n \to \CR$ a \emph{parametrized optimization problem}
if and only if
\begin{align}
  \label{param:right:hand:side}
  g(x) &= \sup\,\{ f(y) \mid y \in Y(\vx_1,\ldots,\vx_n) \} && \text{for all }  x \in \CR^n
  ,
\end{align}

\per
\noindent
where
    $f : \R^{k}\to\R$ is an objective function, and
    $Y : \CR^n \to 2^{\R^{k}}$ 
    is a mapping that assigns a set $Y(x) \subseteq \R^k$ of states to any vector of bounds $x \in \CR^n$.
    The parametrized optimization problem $g$ is monotone on $\CR^n$,
    whenever $Y$ is monotone on $\CR^n$.
    It is monotone on $\CR^n$ 
    and upward chain continuous on $g^{-1}(\CR\setminus\{\neginfty\})$%
    \footnote{A monotone function $g :  \CR^n\to\CR$ is 
    upward chain continuous on an upward closed set $X \subseteq \CR^n$ if and only if $g(\bigvee C) = \bigvee g(C)$ 
    for all non-empty chains $C \subseteq X$.}
    whenever $f$ is continuous on $\R^k$ and 
    $Y$ is monotone  on $\CR^n$ and upward chain continuous on $Y^{-1}(2^{\R^k} \setminus \{ \emptyset \} )$%
    \footnote{A monotone function $Y : \CR^n \to 2^{\R^k}$ is upward chain continuous on an upward closed set $X \subseteq \CR^n$ 
    if and only if 
    $Y(\bigvee C) = \bigcup Y(C)$ for all chains $C \subseteq X$.}%
    .
    In the following,
    we are concerned with the latter situation.
    A parametrized optimization problem $g$ that is monotone on $\CR^n$ 
    and upward chain continuous on $g^{-1}(\CR\setminus\{\neginfty\})$
    is called \emph{upward chain continuous parametrized optimization problem}.


\begin{example}
\per%
	Assume that $Y$ and $f$ are given by
	\begin{align}
	  Y(x) &:= \{ y \in \R^k \mid Ay \leq x \} && \text{for all } x \in \CR^n, \text{ and} \\
	  f(y) &:= b + c^\top y && \text{for all } y \in \R^k
	  ,
	\end{align}
	
     \per
	\noindent
	where $A \in \R^{n\times k}$, $b \in \R$, and $c \in \R^k$.
	Then, $g$ is defined through Equation \eqref{param:right:hand:side} 
	is an upward chain continuous parametrized optimization problems.
	To be more precise,
	it is a
	\emph{parametrized linear programming problem} (to be defined).
	Although this is also an interesting case (cf.\ \citet{DBLP:journals/toplas/GawlitzaS11}),
	in the following, we mainly focus on the more general case where the right-hand sides 
	are \emph{parametrized semi-definite programming problems} (to be defined).
	In this example,
	the right-hand side is not only upward chain continuous,
	it is even \cmcave.
	To be more precise,
	on the set of points where it returns a value greater than $\neginfty$
	it is a point-wise minimum of finitely many monotone and affine operators.
  \qed
\end{example}

\subsection{Fixpoint Equations with Parametrized Optimization Problems}
\label{ss:deal:with:param:opt}

Assume now that we have a system of fixpoint equations,
where the right-hand sides are point-wise maxima of finitely many upward chain continuous parametrized optimization problems.
If we use our $\vee$-strategy improvement algorithm to compute the least solution,
then, 
for each $\vee$-strategy improvement step,
we have to compute $\mu_{\geq\rho_0} \sem{\E}$
for a system $\E$ of fixpoint equations whose right-hand sides are 
upward chain continuous parametrized optimization problems,
and $\rho_0$ is a pre-solution of $\E$.
We study this case in the following:

\per
Assume that $\E$ is a system of fixpoint equations,
where the right-hand sides are \emph{upward chain continuous parametrized optimization problems}.
For simplicity and without loss of generality,
we additionally assume that a variable assignment $\rho_0 : \vX\to\R$ is given such that 
\begin{align}
  \neginftyvar \ll \rho_0 \leq \sem\E\rho_0 \ll \inftyvar
  .
\end{align}

\per
\noindent
We are interested in computing the pre-solution $\suppresol_{\rho_0}\sem\E$ of $\E$.
In the case at hand,
this means that we need to compute $\rho^* : \vX\to\CR$ that is defined by
\begin{align}
  \rho^*(\vx) &:= \sup \, \{ \rho(\vx) \mid \rho : \vX\to\R \text{ and } \rho \leq \sem\E\rho \}
  && \text{ for all } \vx \in \vX
  .
\end{align}



\per
\subsubsection{Algorithm \EvalForMaxAtt}

\noindent
As a start, 
we firstly consider the case where all right-hand sides are 
upward chain continuous parametrized optimisation problems
of the form
$\sup\,\{ f(y) \mid y \in Y(\vx_1,\ldots,\vx_n) \}$,
where
\begin{align} \sup\,\{ f(y) \mid y \in Y(x_1,\ldots,x_n) \} = \max\,\{ f(y) \mid y \in Y(x_1,\ldots,x_n) \} \end{align}

\per
\noindent
for all $x_1,\ldots,x_n \in \CR$ with $\neginfty < \sup\,\{ f(y) \mid y \in Y(x_1,\ldots,x_n) \} < \infty$.
We say that such a parametrized optimization problem \emph{attains its optimal value for all parameter values}.
Parametrized linear programming problems,
for instance,
are parametrized optimization problems
that attain their optimal values for all parameter values.
In the case at hand,
the variable assignment $\rho^*$ can be characterized as follows:
\begin{align}
  \label{eq:chara:system:ineq}
  \rho^*(\vx) &:= \sup\,\{ \rho(\vx) \mid \rho : \vX_{\C(\E)} \to\R \text{ and } \rho \leq \sem{\C(\E)}\rho \}
  &&
  \text{for all } \vx \in \vX
  ,
\end{align}

\per
\noindent
where the constraint system $\C(\E)$ is obtained from $\E$ by replacing every equation 
\begin{align}
  \vx = \sup\;\{ f(y) \mid y \in Y(\vx_1,\ldots,\vx_n) \}
\end{align}

\per\noindent
with the constraints
\begin{align}
  \vx &\leq f(\vy_1,\ldots,\vy_{k}) & 
  (\vy_1,\ldots,\vy_{k}) &\in Y(\vx_1,\ldots,\vx_n)
  ,
\end{align}

\per\noindent
where $\vy_1,\ldots,\vy_{k}$ are fresh variables.

\per
As we will see in the remainder of this section,
the above characterization enable us to compute $\rho^*$ using specialized convex optimization techniques.
If, for instance, the right-hand sides are \emph{parametrized linear programming problems} (to be defined),
then we can compute $\rho^*$ through linear programming.
Likewise,
if the right-hand sides are \emph{parametrized semi-definite programming problems} (to be defined),
then we can compute $\rho^*$ through semi-definite programming.

\begin{example}
  \label{ex:param:easy:case}
  \per
  Let us consider the system $\E$ of equations that consist of the following equations:
  \begin{align}
    \label{eq:to:be:replaced}
    \vx_1 &= \sup \; \{ x_1' \in \R \mid x_1' \in \R ,\; x_1' \leq 0 \} \\
    \vx_2 &=
      \sup \; 
      \{ x_2'' \in \R \mid x_2',x_2'' \in \R ,\; 0 \leq x_2' \leq \vx_1 ,\; x_2'' \leq 1 \} &
  \end{align}
  
  \per
  \noindent
  We aim at computing the variable assignment $\rho^* : \vX\to\CR$ defined by 
  \begin{align}
    \label{rho:star:das:sich:nicht:aendert}
    \rho^*(\vx) &:= 
      \sup \, \{ \rho(\vx) \mid \rho : \vX\to\R \text{ and } \rho \leq \sem\E\rho \}
    && \text{ for all } \vx \in \vX
    .
  \end{align}
  
  \per
  \noindent  
  All right-hand sides of the equations are 
  upward continuous parametrized optimization problems 
  that attain their optimal value for all parameter values.
  Hence, we can apply the above described method to compute $\rho^*$.
  If we do so, the system $\C(\E)$ of inequalities consist of the following inequalities:
  \begin{align}
    \vx_1 &\leq \vx_1' & 
    \vx_1' &\leq 0 & 
    \vx_2 &\leq \vx_2'' & 
    0 &\leq \vx_2' \leq \vx_1 &
    \vx_2'' \leq 1
    .
  \end{align}
  
  \ok
  \noindent
  According to Equation \eqref{eq:chara:system:ineq},
  for all $i \in \{1,2\}$,
  we thus have 
  \begin{align}
    \rho^*(\vx_i) &= \sup \; \{ 
      \vx_i 
      \mid 
      \vx_1, \vx_1',\vx_2,\vx_2',\vx_2'' \in \R, 
      \\&\qquad
      \vx_1 \leq \vx_1' ,
      \vx_1' \leq 0 , 
      \vx_2 \leq \vx_2'' ,
      0 \leq \vx_2' \leq \vx_1 ,
      \vx_2'' \leq 1
    \}
  \end{align}

  \per
  \noindent
  Observe that these optimization problems are actually linear programming problems.
  Solving these linear programming problems gives us, as desired, 
  $\rho^* = \{ \vx_1\mapsto 0 ,\, \vx_2\mapsto 1 \}$.
  \qed
\end{example}

\per
\subsubsection{Algorithm \EvalForGen}

\noindent
If we are not in the nice situation that all 
parametrized optimization problems attain their optimal values for all parameter values,
then we have to apply a more sophisticated method to compute $\rho^*$.
The following example, that is obtained from Example \ref{ex:param:easy:case},
illustrates the need for more sophisticated methods.

\begin{example}
  \label{ex:param:easy:case:not:general}
  \per
  We now slightly modify the fixpoint equation system $\E$ from Example \ref{ex:param:easy:case}
  by replacing Equation \eqref{eq:to:be:replaced} by the equation
  $
    \vx_1 = \sup \; \{ x_1' \in \R \mid x_1' \in \R ,\; x_1' < 0 \}     
  $.
  That is, 
  we are now concerned with strict inequality instead of non-strict inequality.
  In consequence, 
  the parametrized optimization problem does not attain its optimal value for any parameter value.
  The fixpoint equation system $\E$ now consists of the following equations:
  \begin{align}
    \vx_1 &= \sup \; \{ x_1' \in \R \mid x_1' \in \R ,\; x_1' < 0 \} \\
    \vx_2 &=
      \sup \; 
      \{ x_2'' \in \R \mid x_2',x_2'' \in \R ,\; 0 \leq x_2' \leq \vx_1 ,\; x_2'' \leq 1 \} &
  \end{align}
  
  \noindent
  This modification does not change the value of $\rho^*$ 
  (defined by Equation \eqref{rho:star:das:sich:nicht:aendert}),
  since the right-hand side of the first equation still evaluates to $0$.
  However, the system $\C(\E)$ of inequalities is now given by
  \begin{align}
    \vx_1 &< \vx_1' & 
    \vx_1' &\leq 0 & 
    \vx_2 &\leq \vx_2'' & 
    0 &\leq \vx_2' \leq \vx_1 &
    \vx_2'' \leq 1
    .  
  \end{align}
  
  \per
  \noindent
  Since the above inequalities imply $0 \leq \vx_2' \leq \vx_1 < \vx_1' \leq 0$ and thus $0 < 0$,
  there is no solution to the above inequalities.
  Therefore, 
  we cannot apply the methods we applied in Example \ref{ex:param:easy:case}
  to compute $\rho^*$.
  \qed
\end{example}

\per\noindent
We now describe a more sophisticated method to compute $\rho^*$.
For all variable assignments $\rho_0$ and $\rho$,
we define the system $\E_{\rho_0,\rho}$ of equations as follows:
\begin{align}
  \nonumber
  \E_{\rho_0,\rho} 
  &:= 
  \left\{ \vx = \rho_0(\vx) \mid \vx = e \in \E \text{ and } \rho_0(\vx) \geq \sem e \rho \right\}
  \\
  &\qquad\quad \cup \left\{ \vx = e \mid \vx = e \in \E \text{ and } \rho_0(\vx) < \sem e \rho \right\}
\end{align}

\ok\noindent
That is,
$\E_{\rho_0,\rho}$
contains all equations $\vx=e$ of $\E$ whose right-hand sides $e$ 
evaluate under $\rho$ to a value greater than $\rho_0(\vx)$.
The other equations of $\E$ are replaced by $\vx=\rho_0(\vx)$.
%
We again assume that $\rho_0$ is a variable assignment with 
$\neginftyvar \ll \rho_0 \leq \sem{\E}\rho_0 \ll \inftyvar$.
For all $k \in \N_{>0}$, we then define the variable assignment $\rho_k$ inductively by
\begin{align}
  \rho_k(\vx) &:= \sup\,\{ \rho(\vx) \mid \rho : \vX \to \R \text{ and } \rho \leq \sem{\E_{\rho_0,\rho_{k-1}}}\rho \} && \text{ for all } \vx\in\vX
  .
\end{align}

\noindent
Now, $\rho^*$ is the limit of the sequence $(\rho_k)_{k\in\N}$ and the sequence 
reaches its limits after at most $\abs\vX$ steps:

\begin{lemma}
  \label{l:seq:appro:param:opt}
  \per
  The sequence $(\rho_{k})_{k\in\N}$ of variables assignments is increasing,
  $\rho_k \leq \rho^*$ for all $k \in \N$, 
  $\rho_{k+1} > \rho_k$ if $\rho_k < \rho^*$,
  and
  $\rho_{\abs \vX} = \rho^*$.
  Moreover, 
  $\rho_k(\vx) = \sup\,\{ \rho(\vx) \mid \rho : \vX_{\C(\E_{\rho_0,\rho_{k-1}})} 
    \to\R \text{ and } \rho \leq \sem{\C(\E_{\rho_0,\rho_{k-1}})}\rho \}$
  for all $k$ and all $\vx \in \vX$.  
  \qed
\end{lemma}

%
%

\begin{example}
  \label{ex:param:general:case}
  \per
  Let us again consider the fixpoint equation system $\E$ from Example \ref{ex:param:easy:case:not:general}.
  We again aim at computing the variable assignment $\rho^* : \vX\to\CR$ that is defined by 
  $
    \rho^*(\vx) := 
      \sup \, \{ \rho(\vx) \mid \rho : \vX\to\R \text{ and } \rho \leq \sem\E\rho \}
   $
   for all $\vx \in \vX$.
  %
%
  Since 
  $\neginftyvar \ll \rho_0 \leq \sem\E\rho_0 \ll \inftyvar$
  for 
  $\rho_0 := \{ \vx_1\mapsto 0 ,\; \vx_2 \mapsto 0 \}$,
  we can apply the method we just developed.
  The system $\E_{\rho_0,\rho_0}$ is given by
  \begin{align}
    \vx_1 &= 0 & 
    \vx_2 &= 
      \sup \; 
      \{ x_2'' \in \R \mid x_2',x_2'' \in \R ,\; 0 \leq x_2' \leq \vx_1 ,\; x_2'' \leq 1 \} 
  \end{align}
  
  \per
  \noindent
  Therefore, the constraint system $\C(\E_{\rho_0,\rho_0})$ is given by
  \begin{align}
    \vx_1 &\leq 0 & 
    \vx_2 &\leq \vx_2'' & 
    0 &\leq \vx_2' \leq \vx_1 &
    \vx_2'' \leq 1
  \end{align}
  
  \per
  \noindent
  Solving the optimization problems that aims at maximizing $\vx_1$ and $\vx_2$, respectively,
  we get
  $\rho_1 = \{ \vx_1\mapsto 0 ,\; \vx_2\mapsto 1 \}$.
  We then construct the fixpoint equation system $\E_{\rho_0,\rho_1}$.
  The system $\E_{\rho_0,\rho_1}$ is equal to the system $\E_{\rho_0,\rho_0}$, 
  and thus $\C(\E_{\rho_0,\rho_1})$ is equal to the system $\C(\E_{\rho_0,\rho_0})$.
  Therefore,
  we get $\rho^* = \rho_1$ by Lemma \ref{l:seq:appro:param:opt}.
  \qed
\end{example}

%

\per
\subsubsection{Algoritm \EvalForCmorcave}

\noindent
In our static program analysis application we discuss in the next section,
we have the comfortable situation that our right-hand sides are not 
only upward continuous parametrized optimization problems, 
but they are additionally \cmcave.
We can utilize this in order to simplify the above developed procedure \EvalForGen.
The following lemma is the key ingredient for this optimization:

\begin{lemma}
  \label{l:identify:increasing}
  \per
  Let $\rho$ be a feasible pre-solution of a system $\E$ of \cmorcave equations.
  For all $\vx \in \vX$,
  we have
  $\mu_{\geq\rho}\sem\E(\vx) > \rho(\vx)$
  if and only if
  $({\sem\E}^{\abs{\vX}} \rho) (\vx) > \rho(\vx)$.
\end{lemma}

\begin{proof}[Sketch]
  \per
  Since $\rho$ is a feasible pre-solution of $\E$,
  we can w.l.o.g.\ assume that $\sem{e}\rho > \neginfty$ for all equations $\vx = e$ of $\E$.
  Therefore, 
  $\sem\E$ is upward chain continuous on $(\vX\to\CR)_{\geq \rho}$.
  The statement finally follows from the fact that $\sem\E$ is additionally monotone and order-concave.
  \qed
\end{proof}

\noindent
Assume now that we want to use our $\vee$-strategy improvement algorithm to compute 
the least solution of a system of \maxcmorcave equations.
In each $\vee$-strategy improvement step, 
we are then in the situation that we have to compute $\rho^* := \mu_{\geq\rho}\sem\E$,
where $\rho$ is a feasible pre-solution of a system $\E$ of \morcave equations
(cf.\ Lemma \ref{l:alg:sequence:feasible}).
By Lemma \ref{l:identify:increasing},
we can compute the set
\begin{align}
  \label{eq:Xstrich:morcave}
  \vX' := \{ \vx \in \vX \mid \rho^*(\vx) > \rho(\vx)  \}
\end{align}

\noindent
by performing $\abs\vX$ Kleene iteration steps.
We then construct the equation system 
\begin{align}
  \E' := \{ \vx = e \in \E \mid \vx \in \vX' \} \cup \{ \vx = \rho(\vx) \mid \vx \in \vX\setminus\vX'\}
\end{align}

\noindent
By construction,
we get:
\begin{lemma}
  \label{l:eval:for:cmorcave}
  $\rho^*(\vx) = \sup\,\{ \rho(\vx) \mid \rho : \vX_{\C(\E')} \to \R \text{ and } \rho \leq \sem{\C(\E')}\rho \}$ for all $\vx\in\vX$.
  \qed
\end{lemma}

\per\noindent
In consequence,
we can compute $\rho^*$ by performing $\abs\vX$ Kleene iteration steps followed by 
solving $\abs\vX$ optimization problems.

\begin{example}
  \ok
  Let us again consider the fixpoint equation system $\E$ from 
  Example~\ref{ex:param:easy:case:not:general} 
  and \ref{ex:param:general:case}.
  That is,
  $\E$ consists of the following equations:
  \begin{align}
    \vx_1 &= \sup \; \{ x_1' \in \R \mid x_1' \in \R ,\; x_1' < 0 \} \\
    \vx_2 &=
      \sup \; 
      \{ x_2'' \in \R \mid x_2',x_2'' \in \R ,\; 0 \leq x_2' \leq \vx_1 ,\; x_2'' \leq 1 \} &
  \end{align}

  \noindent
  The fixpoint equation system $\E$ is a system of \cmorcave equations.
  The pre-solution $\rho := \{ \vx_1\mapsto 0, \vx_2 \mapsto 0\}$ of $\E$ is feasible.
  Moreover, we have $\neginftyvar \ll \rho \leq \sem\E\rho \ll \inftyvar$.
  We aim at computing $\rho^* := \mu_{\geq \rho} \sem\E$.
  
  We have $\sem{\E}^{\abs\vX} \rho = \sem{\E}^{2} \rho = \{ \vx_1\mapsto 0, \vx_2 \mapsto 1\}$.
  By Lemma \ref{l:identify:increasing},
  we thus get 
  $\vX' := \{ \vx \in \vX \mid \rho^*(\vx) > \rho(\vx)  \} = \{ \vx_2 \}$
  (cf.\ \eqref{eq:Xstrich:morcave}).
  Lemma \ref{l:eval:for:cmorcave} finally gives us 
  \begin{align}
    \rho^*(\vx_i) 
    =
    \sup\;\{
      \vx_i \mid
      \vx_1,\vx_1',\vx_2,\vx_2',\vx_2'' \in \R,
      \vx_1 \leq 0 ,
      \vx_2 \leq \vx_2'' ,
      0 \leq \vx_2' \leq \vx_1 ,
      \vx_2'' \leq 1
    \}
    .
  \end{align}
  
  \noindent
  for all $i \in \{1,2\}$ (cf.\ Example \ref{ex:param:general:case}).
  This is the desired result.
  We performed two Kleene iteration steps and 
  solved two mathematical optimization problems.
  \qed
\end{example}

\subsection{Parameterized Linear Programming Problems}

\per
\noindent
We now introduce \emph{parameterized linear programming problems}.
We do this as follows.
  For all
  $A \in \R^{k\times m}$ and 
  all $c \in \R^m$,
  we define the operator
  $\LPAc : \CR^k \to \CR$ 
  which solves a \emph{parametrized} linear programming problem
  by
  \begin{align}
    \LPAc(b)
    &:=
  \sup \,
  \{ 
  c^\top x 
  \mid
  x \in \R^m
  \text{ and }
  A x \leq b
  \}
  && \text{for all }
  b \in \CR^k
  .
\end{align}

\noindent
We use the LP-operators in the right-hand sides of fixpoint equation systems:

\begin{definition}(LP-equations, $\vee$-LP-equations)
  A fixpoint equation $\vx=e$ is called \emph{LP-equation}
  if and only if
  $e$ is a parametrized linear programming problem.
  It is called \emph{$\vee$-LP-equation}
  if and only if 
  $e$ is a point-wise maximum of finitely many semi-definite programming problems.
  \qed
\end{definition}

\per
\noindent
LP-operators have the following important properties:

\begin{lemma}
  \label{l:lp:order:conc}
  \per
  The following statements hold for all $A \in \R^{k\times m}$ and all $c \in \R^m$:
    \begin{enumerate}
      \item
        The operator $\LPAc$ is \cmcave.
      \item
        $\LPAc(b) = \max \, \{   c^\top x \mid x \in \R^m \text{ and } A x \leq b\}$
        for all
        $b \in \CR^k$
        with
        $\neginfty < \LPAc(b) < \infty$.
        That is,
        the parametrized optimization problem $\LPAc$ attains its optimal value for all parameter values.
    \end{enumerate}
\end{lemma}

\begin{proof}
   \per
      We do not prove the first statement,
      since, as we will see, it is just a special case of Lemma \ref{l:sdp:order:conc} (see below).
      This second statement is a direct consequence of the fact that the optimal value of a feasible and 
      bounded linear programming problem 
      is attained at the edges of the feasible space.
  \qed
\end{proof}

\ok
\noindent
If we apply our $\vee$-strategy improvement algorithm for solving a system of $\vee$-LP-equations,
then, because of Lemma \ref{l:lp:order:conc},
we have the convenient situation that we can apply Algorithm \EvalForMaxAtt 
instead of its more general variant \EvalForGen
for evaluating a single $\vee$-strategy that is encountered during the $\vee$-strategy iteration
(see Section \ref{ss:deal:with:param:opt}).
We thus obtain the following result:

\begin{theorem}
  \per
%
   If $\E$ is a system of $\vee$-LP-equations,
   then the evaluation of a $\vee$-strategy that is encountered during the $\vee$-strategy iteration can be performed 
   by solving $\abs{\vX}$ linear programming problems,
   each of which can be constructed in polynomial time.   
   In consequence,
   a $\vee$-strategy improvement step can be performed in polynomial time.
   \qed
\end{theorem}

\noindent
Theorem \ref{t:main} 
implies that our $\vee$-strategy improvement algorithm terminates after at most 
$\abs\vX \cdot \abs\Sigma$
$\vee$-strategy improvement steps,
whenever it runs on a system $\E$ of $\vee$-LP-equations.

A consequence of the fact that we can evaluate $\vee$-strategies in polynomial time 
is the following decision problem is in $\mathsf{NP}$:
Decide whether or not,
for a given system $\E$ of $\vee$-LP-equations, 
a given variable $\vx \in \vX$, and
a given value $b \in \CR$,
the statement $\mu\sem\E(\vx) \leq b$ holds.
This decision problem is at least as hard as the problem of computing the winning regions in mean payoff games.
However, whether or not it is $\mathsf{NP}$-hard is an open question.

\subsection{Parameterized Semi-Definite Programming Problems}

\per
\noindent
As a strict generalization of \emph{parameterized linear programming problems},
we now introduce \emph{parameterized semi-definite programming problems}.
Before we can do so, 
we have to briefly introduce semi-definite programming.

\per
\subsubsection{Semi-definite Programming}

$\SR{n}$ 
(resp.\  $\SRp{n}$) 
denotes the set of symmetric matrices
(resp.\ the set of positive semidefinite matrices).
$\preceq$ denotes the Löwner ordering of symmetric matrices,
i.e., $A \preceq B$ if and only if $B - A \in \SRp{n}$.
        $\Tr(A)$ denotes the trace of a square matrix $A \in \R^{n \times n}$, 
        i.e., $\Tr(A) = \sum_{i = 1}^n A_{i \cdot i}$.
        The inner product of two matrices $A$ and $B$ is denoted by 
        $A \bullet B$,
        i.e., $A \bullet B = \Tr (A^\top B)$.
        For $\A = (A_1,\ldots,A_m)$
        with $A_i \in \R^{n \times n}$ for all $i =1,\ldots,m$,
        we denote the vector 
        $(A_1 \bullet X , \ldots , A_m \bullet X)^\top$
        by
        $\A(X)$.
        For all $x \in \R^n$,
        the dyadic matrix $X(x)$
        is defined by
       \begin{align}
          X(x) 
          := 
          \begin{pmatrix}1 \\ x \end{pmatrix} (1, x^\top)
          .
       \end{align}
       

\ok
\noindent
We consider semidefinite programming problems (SDP problems for short)
of the form 
\begin{align}
  z^*
  =
  \sup 
  \;
  \{ 
  C \bullet  X \mid 
  X \in \SRp n
  ,
  \A(X) = a
  ,
  \B(X) \leq b
  \}
  ,
\end{align}

\noindent
  where 
  $\A = (A_1,\ldots,A_m)$,
  $a \in \R^m$,
  $A_1,\ldots,A_m \in \SR n$,
  $\B = (B_1,\ldots,B_k)$,
  $B_1,\ldots,B_k \in \SR n$,
  $b \in \R^k$, and
  $C \in \SR n$.
  The set 
 $
  \{ 
  X \in \SRp n
  \mid
  \A(X) = a
  ,
  \B(X) \leq b
  \}
  $
  is called the \emph{feasible space}.
  The problem is called \emph{feasible} if and only if the feasible space is non-empty.
  It is called \emph{infeasible} otherwise.
  An element of the feasible space
  is called \emph{feasible solution}.
  The value $z^*$ is called \emph{optimal value}.
  The problem is called \emph{bounded} iff $z^* < \infty$. 
  It is called \emph{unbounded}, otherwise.
  A feasible solution $X^*$ is called an \emph{optimal solution} if and only if $z^* = C \bullet X^*$.
  In contrast to the situation for linear programming,
  there exist feasible and bounded semi-definite programming problem 
  that have no optimal solution.
  
  For semi-definite programming problems,
  fast algorithms exist. 
  Semi-definite programming is polynomial time solvable 
  if an a priori bound on the size of the solutions is known and provided as an input.
  
  For more detailed information on semi-definite programming, or, more generally, on convex optimization,
  we refer, for instance, to
  \citet{todd01,nemirovski05}.

\per
\subsubsection{Parametrized SDP Problems}

\noindent
  For 
  $\A = (A_1,\ldots,A_m)$,
  $A_1,\ldots,A_m \in \SR n$,
  $a \in \R^m$, 
  $\B = (B_1,\ldots,B_k)$,
  $B_1,\ldots,\allowbreak B_k \in \SR n$, and
  $C \in \SR n$,
  we define the operator
  $\SDPAaBC : \CR^k \to \CR$ 
  which solves a \emph{parametrized} SDP problem
  by
  \begin{align*}
    \SDPAaBC(b)
    &:=
  \sup \,
  \{ 
  C {\bullet}  X \mid 
  X {\in} \SRp n
  ,
  \A(X) = a
  ,
  \B(X) \leq b
  \}
  && 
  \!\!\!\!
  \text{for all }
  b \in \CR^k
  .
  \end{align*}
  
\per
\noindent
The SDP-operators generalizes the LP-operators
in the same way as semi-definite programming generalizes linear programming.
That is,
for every LP-operator we can construct an equivalent SDP-operator.

\begin{definition}(SDP-equations, $\vee$-SDP-equations)
  A fixpoint equation $\vx=e$ is called \emph{SDP-equation}
  if and only if
  $e$ is a parametrized semi-definite programming problem.
  It is called \emph{$\vee$-SDP-equation}
  if and only if 
  $e$ is a point-wise maximum of finitely many semi-definite programming problems.
  \qed
\end{definition}

\noindent
For this article,
the following properties of SDP-operators are important:

\begin{lemma}
  \per
  \label{l:sdp:order:conc}
  The operator 
  $\SDPAaBC$ 
  is \cmcave.
\end{lemma}

\begin{proof} 
  \ok
  Let $f := \SDPAaBC$.
  For all $b \in \R^k$,
  let $M(b) := \{ X \in \SRp n \mid \A(X) = a, \B(X) \leq b \}$.
  Therefore,
  $f(b) = \sup\;\{ C \bullet X \mid X \in M(b) \}$ for all $b \in \R^k$.
  We do not need to consider all $I : \{1,\ldots,k\} \to \{\neginfty,\mathsf{id},\infty\}$,
  because,
  for all $I : \{1,\ldots,k\} \to \{\neginfty,\mathsf{id},\infty\}$,
  $f^{(I)}$ can be obtained by choosing appropriate $\mathcal A, a, \mathcal B, C$.  
  The fact that $f$ is monotone is obvious.
  Firstly, we show that $f(b) < \infty$ holds for all $b \in \R^k$,
  whenever $\fdom(f) \neq \emptyset$.
  For the sake of contradiction assume 
  that there exist $b_1,b_2 \in \R^k$
  such that
  $f(b_1) \in \R$ and $f(b_2) = \infty$ hold.
  Note that 
  $M(b_i)$
  are convex sets for all $i \in \{1,2\}$.
  Thus, there exists some $D \in \SRp n$ such that
  $C \bullet D > 0$ and 
  $M(b_2) + \{ \lambda D \mid \lambda \in \Rp \} \subseteq M(b_2)$
  hold.
  Therefore, $\A(D) = 0$ and $\B(D) \leq 0$.
  Let $X_1 \in \SRp n$ with $\A(X_1) = a$ and $\B(X_1) \leq b_1$.
  Then 
  $
    \A(X_1 + \lambda D) 
    =
    \A(X_1) + \lambda \A(D)
    =
    a
  $
  and
  $
    \B(X_1 + \lambda D)
    =
    \B(X_1) + \lambda \B(D)
    \leq b_1
  $
  hold for all $\lambda > 0$.
  Thus, $f(b_1) = \infty$ 
  --- contradiction.
  Thus, $f(b) < \infty$ holds for all $b \in \R^k$,
  whenever $\fdom(f) \neq \emptyset$.
  
  \ok
  Next,
  we show that $\fdom(f)$ is convex and $f|_{\fdom(f)}$ is concave.
  Assume that $\fdom(f) \neq \emptyset$.
  Thus, $f(b) < \infty$ for all $b \in \R^k$.
  Let 
  $b_1, b_2 \in \fdom(f)$,
  $\lambda \in [0,1]$, and $b := \lambda b_1 + (1-\lambda) b_2$.
  In order to show that
  \begin{align}
  \label{eq:set:inc}
    \lambda M(b_1) + (1 - \lambda) M(b_2)
    \subseteq
    M(b)
  \end{align}
  holds,
  let 
  $X_i \in M(b_i)$, $i =1,2$, and
  $X = \lambda X_1 + (1 - \lambda) X_2$.
  Since
  $X_i \in \SRp n$,
  $\A(X_i) = a$, and 
  $\B(X_i) \leq b_i$
  for all $i=1,2$,
  we have 
  $X \in \SRp n$,
  $\A(X) = \lambda\A(X_1) + (1-\lambda)\A(X_2)= a$,
  $
  \B(X) 
  = 
  \lambda \B(X_1) + (1 - \lambda) \B(X_1)
  \leq
  \lambda b_1 + (1-\lambda) b_2
  =
  b
  $.
  Therefore, $X \in M(b)$.
  Using \eqref{eq:set:inc},
  we finally get:
  \begin{align}
    f( b )
    &=
    \sup \{ 
      C \bullet X 
      \mid 
      X \in M(b) 
    \}
    \\&\geq
    \lambda 
    \sup \{ C \bullet X_1 \mid X_1 \in M(b_1) \}
    +
    (1-\lambda) 
    \sup \{ C \bullet X_2 \mid X_2 \in M(b_2) \}
    \\&=
    \lambda f(b_1)
    +
    (1-\lambda) f(b_2)
    > 
    \neginfty
  \end{align}

  \ok
  \noindent
  Therefore, $\fdom(f)$ is convex and $f|_{\fdom(f)}$ is concave.

  \ok
  It remains to show that $f$ is upward chain continuous on $f^{-1}(\CR\setminus\{\neginfty\})$.
  For that, 
  let $B \subseteq f^{-1}(\CR\setminus\{\neginfty\})$ be a chain.
  We have
  \begin{align}
    \textstyle f(\bigvee B) 
    &=
    \textstyle \sup \; \{ C \bullet X \mid X \in M(\bigvee B) \}
    \\&=
    \textstyle \sup \; \{ C \bullet X \mid X \in \bigcup \{ M(b) \mid b \in B \} \} 
    && \text{($M$ is continuous)}
    \\&=
    \textstyle \sup \; \{ \sup \; \{ C \bullet X \mid X \in M(b) \} \mid b \in B \} 
    \\&=
    \textstyle \sup \; \{ f(b) \mid b \in B \} 
  \end{align}
  
  \ok
  \noindent
  This proves that $f$ is upward chain continuous on $f^{-1}(\CR\setminus\{\neginfty\})$.
    \qed
\end{proof}

\per
\noindent
The next example shows that the square root operator
can be expressed through a SDP-operator:

\begin{example}
  \per
  \label{ex:wurzel}
  The square root operator $\sqrt\cdot : \CR\to\CR$ is defined by
  $\sqrt b := \sup \{ x \in \R \mid x^2 \leq b \}$
  for all $b \in \CR$. 
  Note that $\sqrt b = \neginfty$ for all $b < 0$, and
  $\sqrt \infty = \infty$.
  Let
  \begin{align}
    \textstyle
    \A := \begin{pmatrix}\begin{pmatrix} 1 & 0 \\ 0 & 0 \end{pmatrix}\end{pmatrix},
    \quad
    a := 1,
    \quad
    \B := \begin{pmatrix}\begin{pmatrix} 0 & 0 \\ 0 & 1 \end{pmatrix}\end{pmatrix},
    \quad
    C := \begin{pmatrix} 0 & \frac12 \\ \frac12 & 0 \end{pmatrix}. 
  \end{align}
  
  \per
  \noindent
  For $x,b \in \Rp$,
  the statement
  $x^2 \leq b$ is equivalent to the statement $\exists b' . x^2 \leq b' \leq b$.
  By the Schur complement theorem 
  (c.f.\ Section 3, Example 5 of \citet{todd01}, for instance),  
  this is equivalent to 
  \begin{align}
    \exists b' .
    \begin{pmatrix}
      1 & x \\
      x & b'
    \end{pmatrix}
    \succeq
    0
    \wedge
    b' \leq b
    .
  \end{align}
  
  \per
  \noindent
  This is equivalent to 
  $ 
    \textstyle
    \exists X \in \SRp 2 .
    x = X_{1 \cdot 2} = X_{2 \cdot 1}
    \wedge
    \A(X) = a
    \wedge
    \B(X) \leq b
  $. 
  Thus,
  $ 
   \sqrt b = 
    \SDPAaBC(b)
  $ 
  for all $b \in \CR$.
  \qed
\end{example}

\ok
\noindent
If $\E$ is a system of $\vee$-SDP-equations,
then, because of Lemma \ref{l:sdp:order:conc},
we have the convenient situation that we can apply Algorithm \EvalForCmorcave
instead of its more general variant \EvalForGen
(see Section \ref{ss:deal:with:param:opt})
to evaluate the $\vee$-strategies that are encountered during the $\vee$-strategy iteration.
This case is in particular interesting for the static program analysis application we will describe in Section \ref{s:relaxed:sem}.

\begin{theorem}
\label{t:sdp:eqs:strat:imp}
\ok
   If $\E$ is a system of $\vee$-SDP-equations,
   then the evaluation of a $\vee$-strategy that is encountered during the $\vee$-strategy iteration can be performed 
   by performing $\abs\vX$ Kleene iteration steps and subsequently
   solving $\abs{\vX}$ semi-definite programming problems,
   each of which can be constructed in polynomial time.
   \qed
\end{theorem}

\noindent
Theorem \ref{t:main} 
implies that our $\vee$-strategy improvement algorithm terminates after at most 
$\abs\vX \cdot \abs\Sigma$
$\vee$-strategy improvement steps,
whenever it runs on a system $\E$ of $\vee$-LP-equations.

\np\section{Quadratic Zones and Relaxed Abstract Semantics}
\label{s:relaxed:sem}

\ok
\noindent
In this section,
we apply our $\vee$-strategy improvement algorithm to a static program analysis problem.
For that, we first introduce our programming model
as well as its collecting and its abstract semantics.
We then relax the abstract semantics along the same lines as 
\citets{DBLP:conf/esop/AdjeGG10} using Shor's semidefinite relaxation schema.
Finally, we show how we can use our finding to compute the relaxation of the abstract semantics.

\subsection{Collecting Semantics}

In our programming model,
we consider statements of the following two forms:
\begin{enumerate}
  \item
    $x := Ax + b$, 
    where $A \in \R^{n \times n}$, and $b \in \R^n$ \emph{(affine assignments)}
  \item
    $x^\top Ax + 2 b^\top x \leq c$, 
    where $A \in \SR n$, $b \in \R^n$, and $c \in \R$ \emph{(quadratic guards)}
\end{enumerate}

\noindent
Here, $x \in \R^n$ denotes the vector of program variables.
We denote the set of statements by $\Stmt$.
The \emph{collecting semantics} $\sem{s} : 2^{\R^n} \to 2^{\R^n}$ of 
a statement $s \in \Stmt$ is defined by:
\begin{align}
  \sem{x := Ax + b}X 
    &:= \{ Ax + b \mid x \in X \}
    && \text{for all } X \subseteq \R^n
  \\
  \sem{x^\top A x + 2 b^\top x \leq c}X 
    &:= \{ x \in X \mid x^\top A x + 2 b^\top x \leq c \}
    && \text{for all } X \subseteq \R^n
\end{align}

\noindent
\ok
A program $G$ is a triple $(N,E,\start, I)$,
where $N$ is a finite set of \emph{control-points},
$E \subseteq N \times \Stmt \times N$ is a finite set of control-flow edges,
$\start \in N$ is the start control-point, 
and $I \subseteq \R^n$ is a set of initial values.
The \emph{collecting semantics} $\Values$ of a program $G = (N,E,\start, I)$ is then the least solution of 
the following constraint system:
\begin{align}
  \VALUES[\start]
    &\supseteq I
  &
  \VALUES[v] 
    &\supseteq \sem{s} (\VALUES[u])
  && \text{for all } (u,s,v) \in E
\end{align}

\noindent
Here, the variables $\VALUES[v]$, $v \in N$ 
take values in $2^{\R^n}$.
The components of 
the collecting semantics $\Values$ 
are denoted by $\Values[v]$
for all $v \in N$.

\subsection{Quadratic Zones and Abstract Semantics}

Along the lines of 
\citets{DBLP:conf/esop/AdjeGG10},
we define \emph{quadratic zones}
as follows:
%
  A set $P$ of templates $p : \R^n \to \R$ is a 
  \emph{quadratic zone}
  if and only if
  every template $p \in P$ can be written as
  \begin{align}
    p(x) &= x^\top A_p x + 2 b_p^\top x
    && \text{for all } x \in \R^n
    ,
  \end{align}

  \noindent
  where
  $A_p \in \SR n$ and
  $b_p \in \R^n$
  for all $p \in P$.
%
In the remainder of this article, 
we assume that 
$P = \{ p_1,\ldots,p_m \}$ 
is a finite quadratic zone.
Moreover, we assume w.l.o.g.\ that $p_i \neq 0$ for all $i = 1,\ldots,m$.
The \emph{abstraction} $\alpha : 2^{\R^n} \to P \to \CR$ 
and the \emph{concretization} $\gamma : (P \to \CR) \to 2^{\R^n}$ are 
defined as follows:
\begin{align}
  \gamma(v) 
    &\textstyle:= 
    \{ x \in \R^n \mid \forall p \in P . p(x) \leq v(p) \} 
    && \text{for all } v : P \to \CR \\
  \alpha(X)
    &\textstyle:= 
    \bigwedge \{ v : P \to \CR \mid \gamma(v) \supseteq X \}
     && \text{for all } X \subseteq \R^n
\end{align}

\noindent
As shown by \citets{DBLP:conf/esop/AdjeGG10},
$\alpha$ and $\gamma$ form a Galois-connection.
The elements from $\gamma(P \to \CR)$ 
and the elements from $\alpha(2^{\R^n})$
are called \emph{closed}.
$\alpha(\gamma(v))$ is called the \emph{closure of $v : P \to \R$}.
Accordingly, $\gamma(\alpha(X))$ is called the 
\emph{closure of $X \subseteq \R^n$}.


As usual,
the \emph{abstract semantics} $\sem{s}^\sharp : (P \to \CR) \to P \to \CR$ 
of a statement $s$
is defined by $\sem{s}^\sharp := \alpha \circ \sem{s} \circ \gamma$.
The \emph{abstract semantics} $\Values^\sharp$ of a program
$G = (N,E,\start,I)$ is then the least solution of the following 
constraint system:
\begin{align}
  \VALUES^\sharp[\start]
    &\geq \alpha(I)
  &
  \VALUES^\sharp[v] 
    &\geq \sem{s}^\sharp (\VALUES^\sharp[u])
  && \text{for all } (u,s,v) \in E
\end{align}

\noindent
Here, the variables $\VALUES^\sharp[v]$, $v \in N$ 
take values in $P \to \CR$.
The components of 
the abstract semantics $\Values^\sharp$ 
are denoted by $\Values^\sharp[v]$
for all $v \in N$.

\subsection{Relaxed Abstract Semantics}

  \ok
  \noindent
  The problem of deciding,
  whether or not, 
  for a given quadratic zone $P$,
  a given $v : P \to \CQ$,
  a given $p \in P$,
  and a given $q \in \CQ$,
  $\alpha(\gamma(v))(p) \leq q$
  holds, 
  is NP-hard (cf.\ \citet{DBLP:conf/esop/AdjeGG10}) and thus intractable.
  Therefore,
  we use the  
  \emph{relaxed abstract semantics} $\Values^\Relaxed$
  introduced by \citets{DBLP:conf/esop/AdjeGG10}.
  It is based on Shor's semidefinite relaxation schema.
  In order to fit it into our framework,
  we have to switch to the semi-definite dual.
  This is not a disadvantage.
  It is actually an advantage,
  since we gain additional precision through this step.


\begin{definition}[$\sem{x := Ax + b}^\Relaxed$]
  \label{d:relaxation:assignment}
  We define the \emph{relaxed abstract semantics}
  $\sem{x := Ax + b}^\Relaxed : (P \to \CR) \to P \to \CR$
  of an affine assignment $x := Ax + b$
  by
  \begin{align}
    &\sem{x := Ax + b}^\Relaxed v \, (p)
    \\&\qquad:=\textstyle
    \sup \{ \overline A (p) {\bullet} X 
      \mid \forall p' \in P . \overline A_{p'} {\bullet} X \leq v(p'), 
      X \succeq 0, X_{1\cdot 1} = 1  \}
  \end{align}
  
  \noindent
  for all $v : P \to \R$ and all $p \in P$,
  where, for all $p' \in P$, 
  \begin{align}
    A(p) := A^\top A_p A ,\quad
    b(p) := A^\top A_p b + A^\top b_p,\quad
    c(p) := b^\top A_p b + 2 b_p^\top b
 \\
    \overline A (p) 
    :=
    \begin{pmatrix}
      c(p) & b^\top (p) \\
      b(p) & A(p)
    \end{pmatrix}
    ,\quad
    \overline A_{p'}
    := 
    \begin{pmatrix}
      0 & b_{p'}^\top \\
      b_{p'} & A_{p'}
    \end{pmatrix}
    .
    \text\qed
  \end{align}
\end{definition}

\begin{definition}
  [$\sem{x^\top A x + 2 b^\top x \leq c}^\Relaxed$]
  \label{d:relaxation:guards}
  We define the \emph{relaxed abstract semantics}
  $\sem{x^\top A x + 2 b^\top x \leq c}^\Relaxed : (P \to \CR) \to P \to \CR$
  of a quadratic guard $x^\top A x + 2 b^\top x \leq c$ by
  \begin{align}
    &\;\sem{x^\top A x + 2 b^\top x \leq c}^\Relaxed v \, (p)
    \\&\qquad :=\;\textstyle
    \sup \{ \overline A_p {\bullet} X 
      \mid \forall p' \in P . \overline A_{p'} {\bullet} X \leq v(p'), 
      \widetilde A {\bullet} X \leq 0,
      X \succeq 0, 
      X_{1\cdot 1} = 1  \}
  \end{align}
  
  \noindent
  for all $v : P \to \R$ and all $p \in P$,
  where, for all $p' \in P$, 
  \begin{align}
    \widetilde A
    :=
    \begin{pmatrix}
      -c      & b^\top \\
      b & A
    \end{pmatrix}
    ,\quad
    \overline A_{p'} 
    := 
    \begin{pmatrix}
      0 & b_{p'}^\top \\
      b_{p'} & A_{p'}
    \end{pmatrix}
    .
    \text\qed
  \end{align}
\end{definition}

\noindent
The relaxed abstract semantics $\sem{\cdot}^\Relaxed$ is the semidefinite dual of the 
one used by \citets{DBLP:conf/esop/AdjeGG10}.
By weak-duality,
it is at least as precise as the one used by \citets{DBLP:conf/esop/AdjeGG10}.

Next,
we show that 
the relaxed abstract semantics is indeed a relaxation of 
the abstract semantics,
and that the relaxed abstract semantics of a statement is expressible through a SDP-operator.

\begin{lemma}
\per
\label{l:properties:relaxed}
  The following statements hold for every statement $s \in \Stmt$:
  \begin{enumerate}
    \item
      $\sem{s}^\sharp \leq \sem{s}^\Relaxed$
    \item
      For every $i \in \{1,\ldots,m\}$,
      there exist $\A,a,\B,C$ such that
      \begin{align}
        \sem{s}^\Relaxed v \, (p_i) = \SDPAaBC(v(p_1),\ldots,v(p_m))
      \end{align}
      
      \noindent
      for all $v : P \to \CR$.
      From $s$, 
      the values $\A$, $a$, $\B$, and $\C$ can be computed in polynomial time.
      \qed
  \end{enumerate}
\end{lemma}

\begin{proof}
  Since the second statement is obvious, 
  we only prove the first one.
  We only consider the case that 
  $s$ is an affine assignment  $x := Ax + b$.
  The case that $s$ is a quadratic guard 
  can be treated along the same lines.
%
  Let $v : P \to \R$, $p \in P$, and $v' := \sem{x := Ax + b}^\sharp v$.
  Then,
  \begin{align}
    v'(p) 
    &= 
    \sup 
    \{ 
      p(Ax + b) \mid x \in \R^n , \forall p' \in P . p'(x) \leq v(p') 
    \}
    \\&= 
    \sup 
    \{ 
    x^\top A(p) x + 2 b^\top(p) x + c(p)
    \mid
    \\&\qquad\qquad\qquad\qquad
    x \in \R^n
    ,
    \forall p' \in P . x^\top A_{p'} x + 2 b_{p'}^\top x \leq v(p')
    \}
    \\&=\textstyle
    \sup
    \left\{
      (1,x^\top)
      \overline A(p)
      (1,x^\top)^\top
      \mid
      \forall p' \in P
      .
      (1,x^\top)
      \overline A_{p'}
      (1,x^\top)^\top
      \leq 
      v(p')
    \right\}
    \\&=\textstyle
    \sup
    \left\{
      \overline A(p) {\bullet} X(x)
      \mid
      \forall p' \in P
      .
      \overline A_{p'}
      {\bullet}
      X(x)
      \leq 
      v(p')
    \right\}
    \\&\leq\textstyle
    \sup
    \left\{
      \overline A(p) {\bullet} X
      \mid
      \forall p' \in P
      .
      \overline A_{p'}
      {\bullet}
      X
      \leq 
      v(p')
      ,
      X \succeq 0
      ,
      X_{1 \cdot 1} = 1
    \right\}
    .
  \end{align}
  
  \noindent
  The last inequality holds,
  because $X(x) \succeq 0$ 
  and $X(x)_{1 \cdot 1} = 1$
  for all $x \in \R^n$.
  This completes the proof of statement 1.
 \qed
\end{proof}

\noindent
A relaxation of the closure operator 
$\alpha \circ \gamma$
is given by $\sem{x := x}^\Relaxed$.
That is, 
$\alpha \circ \gamma \leq \sem{x := x}^\Relaxed$.


The \emph{relaxed abstract semantics} $\Values^\Relaxed$
of a program $G = (N,E,\start,I)$
is finally defined as the least solution of the following constraint system:
\begin{align*}
  \VALUES^\Relaxed[\start]
    &\geq \alpha(I)
  &
  \VALUES^\Relaxed[v] 
    &\geq \sem{s}^\Relaxed (\VALUES^\Relaxed[u])
  && \text{for all } (u,s,v) \in E
\end{align*}
Here, the variables $\VALUES^\Relaxed[v]$, $v \in N$ 
take values in $P \to \CR$.
The components of 
the relaxed abstract semantics $\Values^\Relaxed$ 
are denoted by $\Values^\Relaxed[v]$
for all $v \in N$.

\per
Because of Lemma \ref{l:properties:relaxed},
the relaxed abstract semantics of a program is a safe over-approximation 
of its abstract semantics.
If all templates and all guards are linear, 
then the relaxed abstract semantics is precise (cf.\ \citet{DBLP:conf/esop/AdjeGG10}):

\begin{lemma}
  We have $\Values^\sharp \leq \Values^\Relaxed$.
  Moreover,
  if all templates and all guards are linear,
  then $\Values^\sharp = \Values^\Relaxed$.
  \qed
\end{lemma}

\subsection{Computing Relaxed Abstract Semantics}
\label{s:sompute:relaxed:semantics}

We now use our $\vee$-strategy improvement algorithm to compute the relaxed abstract semantics 
$\Values^\Relaxed$
of a program $G = (N,E,\start,I)$ w.r.t.\ a given finite quadratic zone $P = \{ p_1,\ldots,p_m \}$.
For that, we define $\C$ to be the constraint system
\begin{align}
   \vx_{\start,p}
    &\geq
    \alpha(I)(p)
    && \text{for all } p \in P
   \\
    \vx_{v,p} 
    &\geq
    (\sem{s}^\Relaxed (\vx_{u,p_1},\ldots,\vx_{u,p_m})^\top )(p)
    && \text{for all } (u, s, v) \in E\text{, and all } p \in P
\end{align}

\noindent
which uses the variables $\vX = \{ \vx_{v,p} \mid v \in N ,\; p \in P \}$.
The value of the variable $\vx_{v,p}$ is the bound 
on the template $p$ at control-point $v$.

Because of Lemma \ref{l:properties:relaxed},
from $\C$ we can construct a system $\E$ of SDP-equations
with $\mu\sem\E = \mu\sem\C$ in polynomial time.
Finally, 
we have:

\begin{lemma}
  $\Values^\Relaxed[v](p) = \mu\sem{\E}(\vx_{v,p})$
  for all $v \in N$ and all $p \in P$.
  \qed
\end{lemma}

\noindent
Since $\E$ is a system of $\vee$-SDP-equations,
by Theorem \ref{t:main} and Theorem \ref{t:sdp:eqs:strat:imp},
we can compute the least solution $\mu\sem\E$ of $\E$
using our $\vee$-strategy improvement algorithm.
Thus, we have finally shown the following main result for the static program analysis application:

\begin{theorem}
  We can compute the relaxed abstract semantics 
   $\Values^\Relaxed$
  of a program $G = (N,E, \allowbreak \start,I)$
  using our $\vee$-strategy improvement algorithm.
  Each $\vee$-strategy improvement step can 
  by performed by 
  performing $\abs N \cdot \abs P$ Kleene iteration steps and 
  solving 
  $\abs N \cdot \abs P$ SDP problems,
  each of which can be constructed in polynomial time.
  The number of strategy improvement steps is exponentially 
  bounded by the product of the number of merge points in the program and the number of program variables.
  \qed
\end{theorem}

\begin{example}
  \ok
  In order to give a complete picture of our method,
  we now discuss the harmonic oscillator example of 
  \citet{DBLP:conf/esop/AdjeGG10} in detail.
  The program consists only of the simple loop
  \begin{align}
    \WHILE (\TRUE)
    \;
    x := Ax
  ,
  \end{align}
  
  \noindent
  where $x = (x_1,x_2)^\top \in \R^2$ 
  is the vector of program variables 
  and
  \begin{align}
    A
    =
    \begin{pmatrix}
      1 & 0.01 \\
      -0.01 & 0.99
    \end{pmatrix}
    .
  \end{align}

  \noindent
  We assume that the two-dimensional interval $I = [0,1] \times [0,1]$ is the set of initial states.
  The set of control-points just consists of $\start$, i.e.\ $N = \{ \start \}$.
  The set $P = \{ p_1,\ldots,p_5 \}$ of templates is given by
  \begin{align}
    p_1(x_1,x_2) &= -x_1 &
    p_2(x_1,x_2) &=  x_1 &
    p_3(x_1,x_2) &= -x_2 \\
    p_4(x_1,x_2) &= x_2 &
    p_5(x_1,x_2) &= 2 x_1^2 + 3 x_2^2 + 2 x_1 x_2
  \end{align}

  \noindent
  The abstract semantics is thus given by the 
  least solution 
  of the following system of $\vee$-SDP-equations:
  \begin{align}
    \vx_{\start,p_1} 
    &= 
    \neginfty \vee 0 \vee \SDP_{\A,a,\B,C_1} (\vx_{\start,p_1},\vx_{\start,p_2},\vx_{\start,p_3},\vx_{\start,p_4},\vx_{\start,p_5})
    \\
    \vx_{\start,p_2} 
    &= 
    \neginfty \vee 1 \vee \SDP_{\A,a,\B,C_2} (\vx_{\start,p_1},\vx_{\start,p_2},\vx_{\start,p_3},\vx_{\start,p_4},\vx_{\start,p_5})
    \\
    \vx_{\start,p_3} 
    &= 
    \neginfty \vee 0 \vee \SDP_{\A,a,\B,C_3} (\vx_{\start,p_1},\vx_{\start,p_2},\vx_{\start,p_3},\vx_{\start,p_4},\vx_{\start,p_5})
    \\
    \vx_{\start,p_4} 
    &= 
    \neginfty \vee 1 \vee \SDP_{\A,a,\B,C_4} (\vx_{\start,p_1},\vx_{\start,p_2},\vx_{\start,p_3},\vx_{\start,p_4},\vx_{\start,p_5})
    \\
    \vx_{\start,p_5} 
    &= 
    \neginfty \vee 7 \vee \SDP_{\A,a,\B,C_5} (\vx_{\start,p_1},\vx_{\start,p_2},\vx_{\start,p_3},\vx_{\start,p_4},\vx_{\start,p_5})
  \end{align}
  
  \newcommand{\eins}{\begin{pmatrix}1 & 0 & 0 \\ 0 & 0 & 0 \\ 0& 0 & 0 \end{pmatrix}}

  \noindent
  Here
  \begin{center}\scalebox{1}{$
      \A = \left( \eins \right) 
      \qquad
       a = (1) 
  $}\end{center}
\vspace*{-3mm}
  \begin{center}$
      \B =
        \left( 
          \begin{pmatrix} 0 & -0.5 & 0 \\ -0.5 & 0 & 0 \\ 0 & 0 & 0 \end{pmatrix}
          ,
          \begin{pmatrix} 0 &  0.5 & 0 \\  0.5 & 0 & 0 \\ 0 & 0 & 0 \end{pmatrix}
          ,
          \begin{pmatrix} 0 & 0 & -0.5 \\ 0 & 0 & 0 \\ -0.5 & 0 & 0 \end{pmatrix}
          ,
          \begin{pmatrix} 0 &  0 & 0.5 \\  0 & 0 & 0 \\ 0.5 & 0 & 0 \end{pmatrix}
          ,
          \begin{pmatrix} 0 &  0 & 0 \\  0 & 2 & 1 \\ 0 & 1 & 3 \end{pmatrix}
        \right)
  $\end{center}
\vspace*{-3mm}
  \begin{align*}
       C_1 &= \begin{pmatrix} 0 & -0.5 & -0.005 \\ -0.5 & 0 & 0 \\ -0.005 & 0 & 0 \end{pmatrix} 
       &
       C_2 &= \begin{pmatrix} 0 &  0.5 &  0.005 \\  0.5 & 0 & 0 \\  0.005 & 0 & 0 \end{pmatrix} 
       \\
       C_3 &= \begin{pmatrix} 0 & 0.005 & -0.495 \\ 0.005 & 0 & 0 \\ -0.495 & 0 & 0 \end{pmatrix} 
       &
       C_4 &= \begin{pmatrix} 0 & -0.005 & 0.495 \\ -0.005 & 0 & 0 \\ 0.495 & 0 & 0 \end{pmatrix}
       \\
       C_5 &= \begin{pmatrix} 0 & 0 & 0 \\ 0 & 1.9803 & 0.9802 \\ 0 & 0.9802 & 2.9603 \end{pmatrix} 
  \end{align*}
  %
  In this example we have $3^5 = 243$ different $\vee$-strategies.
  Assuming that the algorithm always chooses the best local improvement,
  in the first step it switch to the $\vee$-strategy that
  is given by the finite constants.
  At each equation, 
  it then can switch to the $\SDP$-expression, but 
  then, because it constructs a strictly increasing sequence,
  it can never return to the constant.
  Summarizing,
  because of the simple structure,
  it is clear that our $\vee$-strategy improvement algorithm will
  perform at most $6$ $\vee$-strategy improvement steps.
  In fact our prototypical implementation performs 
  $4$ $\vee$-strategy improvement steps on this example.
  \qed
\end{example}

\np
\section{Conclusion}
\label{s:conc}

\ok
We introduced and studied systems of \maxmorcave equations ---
a natural and strict generalization of systems of rational equations 
that were previously studied by \citet{DBLP:conf/csl/GawlitzaS07,DBLP:journals/toplas/GawlitzaS11}.
We showed how the $\vee$-strategy improvement approach
from \citet{DBLP:conf/esop/GawlitzaS07,DBLP:conf/csl/GawlitzaS07} 
can be generalized to solve these fixpoint equation systems.
We provided full proves and a in-depth discussion on the different cases.

On the practical side,
we showed that our algorithm 
can be applied to perform static program analysis 
w.r.t.\ quadratic templates 
using the relaxed abstract semantics of \citet{DBLP:conf/esop/AdjeGG10}
(based on Shor's semi-definite relaxation schema).
This analysis can, for instance, be used
to verify linear recursive filters and numerical integration schemes.
In the conference article that appears in the proceedings of 
the Seventeenth International Static Analysis Symposium (SAS 2010)
we report on experimental results that were obtained through our proof-of-concept implementation
\cite{DBLP:conf/sas/GawlitzaS10}.

For future work, we are interested in 
studying the use of other convex relaxation schemes 
to deal with more sophisticated cases, 
a problem already posed by \citet{DBLP:conf/esop/AdjeGG10}.
This would partially abolish the restriction to affine assignments and quadratic guards. 
Currently,
we apply our $\vee$-strategy improvement algorithm only to numerical static analysis of programs.
It remains to investigate 
in how far the $\vee$-strategy improvement algorithm we developed can be applied to other applications
--- maybe in other fields of computer science. 
Since our methods are solving quite general fixpoint problems,
we have some hope that this is the case.
Natural candidates could perhaps be found in the context of two-players zero-sum games.


\bibliographystyle{abbrvnat}
\bibliography{bib}  

\begin{thebibliography}{18}
\providecommand{\natexlab}[1]{#1}
\providecommand{\url}[1]{\texttt{#1}}
\expandafter\ifx\csname urlstyle\endcsname\relax
  \providecommand{\doi}[1]{doi: #1}\else
  \providecommand{\doi}{doi: \begingroup \urlstyle{rm}\Url}\fi

\bibitem[Adj{\'e} et~al.(2010)Adj{\'e}, Gaubert, and
  Goubault]{DBLP:conf/esop/AdjeGG10}
A.~Adj{\'e}, S.~Gaubert, and E.~Goubault.
\newblock Coupling policy iteration with semi-definite relaxation to compute
  accurate numerical invariants in static analysis.
\newblock In A.~D. Gordon, editor, \emph{ESOP}, volume 6012 of \emph{LNCS},
  pages 23--42. Springer, 2010.
\newblock ISBN 978-3-642-11956-9.

\bibitem[Costan et~al.(2005)Costan, Gaubert, Goubault, Martel, and
  Putot]{Costan05}
A.~Costan, S.~Gaubert, E.~Goubault, M.~Martel, and S.~Putot.
\newblock {A Policy Iteration Algorithm for Computing Fixed Points in Static
  Analysis of Programs.}
\newblock In \emph{Computer Aided Verification, 17th Int. Conf. (CAV)}, pages
  462--475. LNCS 3576, Springer Verlag, 2005.

\bibitem[Cousot and Cousot(1976)]{CouCou76}
P.~Cousot and R.~Cousot.
\newblock {Static Determination of Dynamic Properties of Programs}.
\newblock In \emph{Second Int. Symp. on Programming}, pages 106--130. Dunod,
  Paris, France, 1976.

\bibitem[Cousot and Cousot(1977)]{DBLP:conf/popl/CousotC77}
P.~Cousot and R.~Cousot.
\newblock Abstract interpretation: A unified lattice model for static analysis
  of programs by construction or approximation of fixpoints.
\newblock In \emph{POPL}, pages 238--252, 1977.

\bibitem[Gawlitza and Seidl(2007{\natexlab{a}})]{DBLP:conf/csl/GawlitzaS07}
T.~Gawlitza and H.~Seidl.
\newblock Precise relational invariants through strategy iteration.
\newblock In J.~Duparc and T.~A. Henzinger, editors, \emph{CSL}, volume 4646 of
  \emph{LNCS}, pages 23--40. Springer, 2007{\natexlab{a}}.
\newblock ISBN 978-3-540-74914-1.

\bibitem[Gawlitza and Seidl(2007{\natexlab{b}})]{DBLP:conf/esop/GawlitzaS07}
T.~Gawlitza and H.~Seidl.
\newblock Precise fixpoint computation through strategy iteration.
\newblock In R.~D. Nicola, editor, \emph{ESOP}, volume 4421 of \emph{LNCS},
  pages 300--315. Springer, 2007{\natexlab{b}}.
\newblock ISBN 978-3-540-71314-2.

\bibitem[Gawlitza and Seidl(2010)]{DBLP:conf/sas/GawlitzaS10}
T.~M. Gawlitza and H.~Seidl.
\newblock Computing relaxed abstract semantics w.r.t. quadratic zones
  precisely.
\newblock In R.~Cousot and M.~Martel, editors, \emph{SAS}, volume 6337 of
  \emph{Lecture Notes in Computer Science}, pages 271--286. Springer, 2010.
\newblock ISBN 978-3-642-15768-4.

\bibitem[Gawlitza and Seidl(2011)]{DBLP:journals/toplas/GawlitzaS11}
T.~M. Gawlitza and H.~Seidl.
\newblock Solving systems of rational equations through strategy iteration.
\newblock \emph{ACM Trans. Program. Lang. Syst.}, 33\penalty0 (3):\penalty0 11,
  2011.

\bibitem[Gawlitza et~al.(2011)Gawlitza, Seidl, Adj{\'e}, Gaubert, and
  Goubault]{wingjsc2010}
T.~M. Gawlitza, H.~Seidl, A.~Adj{\'e}, S.~Gaubert, and {\'E}.~Goubault.
\newblock Abstract interpretation meets convex optimization.
\newblock \emph{WING-JSC}, 2011.

\bibitem[Larsen et~al.(1997)Larsen, Larsson, Pettersson, and
  Yi]{DBLP:conf/rtss/LarsenLPY97}
K.~G. Larsen, F.~Larsson, P.~Pettersson, and W.~Yi.
\newblock Efficient verification of real-time systems: compact data structure
  and state-space reduction.
\newblock In \emph{IEEE Real-Time Systems Symposium}, pages 14--24. IEEE
  Computer Society, 1997.

\bibitem[Min{\'e}(2001{\natexlab{a}})]{DBLP:conf/pado/Mine01}
A.~Min{\'e}.
\newblock A new numerical abstract domain based on difference-bound matrices.
\newblock In O.~Danvy and A.~Filinski, editors, \emph{PADO}, volume 2053 of
  \emph{LNCS}, pages 155--172. Springer, 2001{\natexlab{a}}.
\newblock ISBN 3-540-42068-1.

\bibitem[Min{\'e}(2001{\natexlab{b}})]{DBLP:conf/wcre/Mine01}
A.~Min{\'e}.
\newblock The octagon abstract domain.
\newblock In \emph{WCRE}, pages 310--, 2001{\natexlab{b}}.

\bibitem[Nemirovski(2005)]{nemirovski05}
A.~Nemirovski.
\newblock \emph{Modern Convex Optimization}.
\newblock Department ISYE, Georgia Institute of Technology, 2005.

\bibitem[Ortega and Rheinboldt(1970)]{OrtegaRheinboldt:book}
J.~Ortega and W.~Rheinboldt.
\newblock \emph{Iterative solution of nonlinear equations in several
  variables}.
\newblock Academic Press, 1970.

\bibitem[Sankaranarayanan et~al.(2005)Sankaranarayanan, Sipma, and
  Manna]{DBLP:conf/vmcai/SankaranarayananSM05}
S.~Sankaranarayanan, H.~B. Sipma, and Z.~Manna.
\newblock Scalable analysis of linear systems using mathematical programming.
\newblock In R.~Cousot, editor, \emph{VMCAI}, volume 3385 of \emph{LNCS}, pages
  25--41. Springer, 2005.
\newblock ISBN 3-540-24297-X.

\bibitem[Tarski(1955)]{Tarski55}
A.~Tarski.
\newblock A lattice-theoretical fixpoint theorem and its appications.
\newblock \emph{Pac. J. Math.}, 5:\penalty0 285--309, 1955.

\bibitem[Todd(2001)]{todd01}
M.~J. Todd.
\newblock Semidefinite optimization.
\newblock \emph{Acta Numerica}, 10:\penalty0 515--560, 2001.

\bibitem[Yovine(1996)]{DBLP:conf/eef/Yovine96}
S.~Yovine.
\newblock Model checking timed automata.
\newblock In G.~Rozenberg and F.~W. Vaandrager, editors, \emph{European
  Educational Forum: School on Embedded Systems}, volume 1494 of \emph{LNCS},
  pages 114--152. Springer, 1996.
\newblock ISBN 3-540-65193-4.

\end{thebibliography}
\end{document}